\documentclass{article}
\usepackage{fullpage}

\usepackage{amsthm}

\newtheorem{theorem}{Theorem}[section]
\newtheorem{lemma}[theorem]{Lemma}
\newtheorem{corollary}[theorem]{Corollary}
\newtheorem{definition}[theorem]{Definition}
\newtheorem{claim}[theorem]{Claim}

\title{Approximating Unrelated Machine Weighted Completion Time Using Iterative Rounding and Computer Assisted Proofs \footnote{The paper will appear in the Proceedings of 2025 Annual ACM-SIAM Symposium on Discrete Algorithms (SODA 2025).}}
\author{Shi Li\thanks{School of Computer Science, Nanjing University, Nanjing, Jiangsu, China. The work was supported by State Key Laboratory for Novel Software Technology, and New Cornerstone Science Laboratory.}}

\date{}

\usepackage[colorlinks = true, citecolor = blue]{hyperref}

\usepackage{amsfonts, amssymb, amsmath}        
\allowdisplaybreaks
\usepackage{graphicx}
\graphicspath{{figs/}}

\usepackage{algorithm}
\usepackage{algpseudocode}
\algtext*{EndIf}
\algtext*{EndFor}
\algtext*{EndWhile}

\usepackage{enumitem}
\setlist{topsep=3pt, itemsep=1pt}

\usepackage{multirow}

\newcommand{\calI}{\mathcal{I}}

\newcommand{\cost}{\mathrm{cost}}
\newcommand{\vol}{\mathrm{vol}}
\newcommand{\old}{{\mathrm{old}}}
\newcommand{\new}{{\mathrm{new}}}
\newcommand{\next}{\mathrm{next}}

\newcommand{\wC}{{\mathrm{wC}}}

\newcommand{\umk}{\mathrm{umk}}
\newcommand{\mk}{\mathrm{mk}}

\newcommand{\R}{{\mathbb{R}}}
\newcommand{\Z}{{\mathbb{Z}}}
\DeclareMathOperator*{\E}{\mathbb{E}}

\begin{document}

\maketitle

	\begin{abstract}  \small\baselineskip=9pt 
		We revisit the unrelated machine scheduling problem with the weighted completion time objective.  It is known that independent rounding achieves a 1.5 approximation for the problem, and many prior algorithms improve upon this ratio by leveraging strong negative correlation schemes. On each machine $i$, these schemes introduce strong negative correlation between events that some pairs of jobs are assigned to $i$,  while maintaining non-positive correlation for all pairs. 
		
		Our algorithm deviates from this methodology by relaxing the pairwise non-positive correlation requirement. On each machine $i$, we identify many groups of jobs. For a job $j$ and a group $B$ not containing $j$, we only enforce non-positive correlation between $j$ and the group as a whole, allowing $j$ to be positively-correlated with individual jobs in $B$. This relaxation suffices to maintain the 1.5-approximation, while enabling us to obtain a much stronger negative correlation within groups using an iterative rounding procedure: at most one job from each group is scheduled on $i$. 
		
		We prove that the algorithm achieves a $(1.36 + \epsilon)$-approximation, improving upon the previous best approximation ratio of $1.4$ due to Harris.  While the improvement may not be substantial, the significance of our contribution lies in the relaxed non-positive correlation condition and the iterative rounding framework.  Due to the simplicity of our algorithm, we are able to derive a closed form for the weighted completion time our algorithm achieves with a clean analysis.  Unfortunately, we could not provide a good analytical analysis for the quantity; instead, we rely on a computer assisted proof. Nevertheless,  the checking algorithm for the analysis is easy to implement, essentially involving evaluation of maximum values of single-variable quadratic functions over given intervals.  Therefore, unlike previous results which use intricate analysis to optimize the final approximation ratio, we delegate this task to computer programs.

		
	\end{abstract}
	
	\section{Introduction}

The unrelated machine weighted completion time problem is a classic problem in scheduling theory. We are given a set $J$ of $n$ jobs and a set $M$ of $m$ machines. Each job $j \in J$ has a weight $w_j \in \R_{>0}$, and there is a $p_{ij} \in \R_{>0} \cup \{\infty\}$ for every $i \in M$ and $j \in J$, indicating the processing time needed to process $j$ on machine $i$; sometimes $p_{ij}$ is also called the size of $j$ on $i$. The goal of the problem is to schedule the $n$ jobs $J$ on the $m$ machines $M$, so as to minimize the total weighted completion time over all jobs. It is well-known that once we have an assignment $\phi : J \to M$ of jobs to machines, it is optimum to schedule the jobs $j \in \phi^{-1}(i)$ on each $i \in M$ in descending order of $\frac{w_j}{p_{ij}}$, i.e., Smith ratios. Therefore, it suffices to use the function $\phi : J \to M$ to denote a solution, and its weighted completion time is
\begin{align*}
	\sum_{i \in M}\Big(\sum_{j \in \phi^{-1}(i)} p_{ij} w_j + \sum_{\{j, j'\}\subseteq \phi^{-1}(i): j \neq j'} \min\{p_{ij} w_{j'}, p_{ij'} w_j\}\Big).
\end{align*}

The problem was proved to be strongly NP-hard and APX-hard \cite{hoogeveen2001non}, and several mathematical programming based algorithms gave a $1.5$-approximation ratio for the problem \cite{schulz2002scheduling, Skutella01, sethuraman1999optimal}. The algorithms use the following independent rounding framework: solve a linear/convex/semi-definite program to obtain a fractional assignment $(x_{ij})_{i\in M, j\in J}$ of jobs to machines, and then randomly and independently assign each job $j$ to a machine $i$, with probabilities $x_{ij}$.  
A simple example shows that independent rounding can only lead to a 1.5-approximation, even if the fractional assignment is in the convex hull of integral assignments.  It was a longstanding open problem to break the approximation factor of 1.5 \cite{chekuri2004approximation,schulz2002scheduling,kumar2008minimum,sviridenko2013approximating,schuurman1999polynomial}. 

In a breakthrough result,  Bansal et al.\ \cite{BansalSS16} solved the open problem by giving a $(1.5 - 10^{-7})$-approximation for the problem using their novel strong negative correlation rounding scheme.  They solve a SDP relaxation for the problem to obtain the fractional assignment $(x_{ij})_{i \in M, j \in J}$. If two jobs $j$ and $j'$ are assigned to the same machine $i$, then one will delay the other; this happens with probability $x_{ij}x_{ij'}$ in the independent rounding algorithm. To improve the approximation ratio of $1.5$, they define some groups of jobs for each machine $i$. For two distinct jobs $j$ and $j'$, they guarantee the probability that $j$ and $j'$ are both assigned to $i$ is at most $x_{ij}x_{ij'}$, and at most $(1 - \xi)x_{ij}x_{ij'}$ for some absolute constant $\xi > 0$, if $j$ and $j'$ are in a same group. In other words, they introduced strong negative correlation between pairs within the same group while maintaining non-positive correlation between all pairs. 

Since the seminal work of Bansal et  al.\ \cite{BansalSS16}, there has been a series of efforts aimed at improving the approximation ratio using strong negative correlation schemes.  Li \cite{Li20} combined the scheme of Bansal et al.\ with a time-indexed LP relaxation, to give an improved the approximation ratio to $1.5 - 1/6000$.  Subsequently, Im and Shadloo \cite{IS20} developed a scheme inspired by the work of Feige and Vondrak \cite{feige2008allocations}, further improving the approximation ratio to $1.488$.  More recently, Im and Li \cite{IL23} introduced a strong negative correlation scheme by borrowing ideas from the online correlated selection technique for online edge-weighted bipartite matching \cite{FHT20}, and further improved the ratio to $1.45$.  Both of the results use the time-indexed LP relaxation.  The current state-of-the-art result is a 1.4-approximation algorithm given by Harris \cite{Har24}, based on the SDP relaxation and a novel dependent rounding scheme developed by the author.

\subsection{Our Results} While prior better-than-1.5 approximation algorithms leveraged various strong negative correlation schemes, they all share the common principle we highlighted: introducing strong negative correlation between intra-group pairs of jobs, while maintaining non-positive correlation between all pairs.   Our algorithm deviates from this approach by introducing a more relaxed version of the non-positive correlation requirement. Specifically, in some cases, we only require non-positive correlation to hold between a job $j$ and a group as a whole entity, instead of individual jobs in the group: for a machine $i$, conditioned on $j$ being assigned to $i$, the expected total size of jobs in the group assigned to $i$ is at most the unconditional counterpart. With this relaxed notion of non-positive correlation, we can design a novel and simple iterative rounding procedure, that ensures at most one job from each group is assigned to $i$. This gives the strongest possible negative correlation within each group, leading to our improved approximation algorithm: 
	\begin{theorem}
		\label{thm:main} There is a polynomial time randomized $(1.36 + \epsilon)$-approximation algorithm for the unrelated machine weighted completion time scheduling problem for any constant $\epsilon > 0$.
	\end{theorem}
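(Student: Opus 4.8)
The plan is to follow the now-standard two-phase template — solve a convex relaxation, then round — but instantiated with the grouping and iterative rounding described in the introduction, and then to reduce the final numerical estimate to a finite family of single-variable quadratic optimizations that a computer verifies.

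\textbf{Phase 1: relaxation and lower bound.} I would first write down a linear programming relaxation (an interval/time-indexed LP, with time discretized into geometrically increasing intervals so that the relaxation has polynomial size at the cost of a $(1+\epsilon)$ factor) whose optimum $L$ satisfies $L \le (1+\epsilon)\,\mathrm{OPT}$. Solving it yields a fractional assignment $x = (x_{ij})$ and, for each machine $i$, a grouping of the jobs fractionally on $i$ into groups $B$; the groups are built from $x$ so that all jobs in one group lie in a narrow Smith-ratio band and the group carries total fractional mass $\sum_{j\in B} x_{ij} \le 1$ to $i$ (again affordable after the discretization). I would record the two pieces of the objective separately: the \emph{base cost} $\sum_{i,j} x_{ij} p_{ij} w_j$, which is reproduced exactly in expectation by any rounding preserving the marginals $x_{ij}$, and the \emph{delay cost} captured by the LP, against which the rounded delay will be compared.

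\textbf{Phase 2: rounding and its guarantees.} Next I would run the iterative rounding: starting from $x$, repeatedly shift mass around cycles and paths in the support so as to zero out a coordinate while (i) keeping every job fully assigned and (ii) preserving, for every machine $i$ and group $B$, the aggregate $\sum_{j\in B} x_{ij}$; once the support is acyclic the residual fractions are rounded directly. The three properties to establish are then: the marginals of the resulting $\phi$ equal $x_{ij}$; at most one job of each group $B$ is placed on $i$ (the group's aggregate mass on $i$ never exceeds one); and the relaxed non-positive correlation, namely that for $j \notin B$, conditioned on $j \to i$ the expected volume $\vol(B\cap\phi^{-1}(i))$ does not exceed its unconditional value $\sum_{j'\in B} x_{ij'} p_{ij'}$. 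Given these, $\E[\cost(\phi)]$ decomposes cleanly: the base cost contributes $\sum_{i,j} x_{ij}p_{ij}w_j \le L$; two jobs in the same group on $i$ contribute delay $0$; and for $j\notin B$ with $B$ above $j$ in Smith ratio, where $\min\{p_{ij}w_{j'}, p_{ij'}w_j\} = p_{ij'}w_j$, the delay inflicted on $j$ by $B$ is $w_j\vol(B\cap\phi^{-1}(i))$, whose expectation over the event $j\to i$ is at most $x_{ij}\,w_j\sum_{j'\in B} x_{ij'}p_{ij'}$ — exactly the independent-rounding value. Summing over machines and bands gives a closed-form upper bound on $\E[\cost(\phi)]$ in terms of $x$ and the grouping, with the savings coming entirely from the vanishing intra-group terms.

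\textbf{Phase 3: from the closed form to $1.36+\epsilon$, and the main obstacle.} Finally I would bound this closed form by $(1.36+\epsilon)L$. The key reduction is locality: the worst ratio is attained by a configuration on a single machine seen from a single "target" job, parametrized by the fractional sizes of the other jobs and their partition into groups; convexity and exchange arguments collapse each group to one effective variable, and the residual maximization over each Smith-ratio band becomes the maximum of an explicit single-variable quadratic over an interval fixed by the band boundaries and the group-mass-at-most-one constraint. Enumerating the finitely many band patterns and evaluating these quadratic maxima is the computer-assisted step, and it is what certifies the constant $1.36$. I expect the crux of the whole argument to be Phase 2 — designing the iterative rounding so that it \emph{simultaneously} preserves job marginals, forces at most one job per group on each machine, and still delivers the group-as-a-whole non-positive correlation against every outside job; reconciling the strong intra-group dependence with the independence-like behavior needed across groups is precisely what prior schemes could not achieve, and pinning down the conditional-expectation bound is the delicate point. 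Phase 3 is conceptually routine once the clean closed form is available, though setting up the reduction so that only single-variable quadratics and a finite enumeration remain still requires care.
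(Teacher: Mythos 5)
Your architecture matches the paper's at a high level (relaxation, iterative rounding preserving group aggregates, computer-checked quadratics), but there are two substantive gaps. First, the relaxation: you propose a time-indexed LP, whereas the paper uses the \emph{configuration} LP. This is not cosmetic. The paper's ratio analysis compares the rounded cost against the rewritten LP cost $\frac12\bigl(\sum_j z_{ij}p_j^2 + \sum_f y_{if}\,p^2(f\cap[j^*])\bigr)$, and the term $\sum_f y_{if}\,p^2(f)$ coming from the configuration variables is exactly what makes the denominator large enough to certify $1.36$; with only marginal information $x_{ij}$ (all a time-indexed LP exposes per machine prefix) you would be comparing against a weaker lower bound, and the paper's mathematical program over ``size-configurations'' could not even be set up. Second, your grouping is by Smith-ratio bands with fractional mass at most one per machine, while the paper groups by \emph{size classes}: it first swaps sizes and weights so that sizes are machine-independent, partitions jobs into classes $J_k$ with $p_j\in[\beta\rho^k,\beta\rho^{k+1})$ using a \emph{random} shift $\beta$, and on each machine marks the top-Smith-ratio edges of class $k$ up to volume $\beta\rho^k$. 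The ``at most one job per group'' guarantee then follows because the preserved quantity is the group \emph{volume} $\sum_e \bar x_e p_e \le \beta\rho^k$ while every class-$k$ job has size $\ge \beta\rho^k$; preserving the group \emph{mass} $\le 1$ until integrality, as you propose, is not something the rounding can deliver, because one cannot simultaneously keep every job's marginal sum equal to one and every group's aggregate fixed all the way to an integral point --- the constraints are too many. The paper resolves exactly this tension by preserving the group volume only \emph{until the group has a single surviving edge}, and the pseudo-marked-path is engineered so that its endpoints are either unmarked edges or the unique surviving marked edge of their machine; this is also where the relaxed (group-as-a-whole) non-positive correlation is proved. You correctly identify this as the crux but do not supply the mechanism.

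Two smaller points. The random shift $\beta$ (with $\ln\beta$ uniform) is not decoration: the final bound is an average of per-scale ratios $\alpha_L$ over $L\in[2,4)$, and without randomizing the class boundaries the worst single scale would give a worse constant. And your Phase~3 claim that ``convexity and exchange arguments collapse each group to one effective variable'' understates the work: the paper must split the non-convex program into three convex sub-programs according to whether each class's marked volume is saturated, exhibit explicit Lagrange multipliers for ten sub-intervals of $L$, and only then does the verification reduce to finitely many single-variable quadratics.
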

	
	While our improvement over the prior best ratio of $1.4$ may not be substantial, we believe the significance of our contribution lies in the relaxed non-positive correlation condition and the iterative rounding framework.  The algorithm performs one of two operations in each iteration, and terminates with an integral solution when no operations can be performed. In contrast,  previous algorithms need to design various strong negative correlation schemes, often involving some complex procedures. In these schemes, the parameter $\xi$ in the probability $(1-\xi) x_{ij} x_{ij'}$ that two jobs in a same group tends to be small, while our procedure guarantees $\xi = 1$. 
	
	Due to the simplicity of our algorithm, we are able to derive a closed form for the weighted completion time the algorithm provides with a clean analysis.  Unfortunately, we could not provide a good analytical analysis for the ratio between the derived quantity and the LP cost; instead, we rely on a computer assisted proof. Nonetheless,  the checking algorithm for the analysis is easy to implement, essentially involving the evaluation of maximum values of single-variable quadratic functions over given intervals.  This leads to another advantage of our result: while previous results use involved analysis to optimize the approximation ratio, we delegate the tasks of optimizing the verifying the ratio to computer programs.
	
	Finally, it is worth noting two features of our algorithm.  Firstly, it is the first algorithm to leverage the condition that the weights $w_j$'s are independent of machines. Prior algorithms do not use this condition and thus they are capable of solving the more general scheduling problem where we also have machine-dependent weights $w_{ij}$.  Secondly, our algorithm is based on the configuration LP relaxation for the problem, marking the first use of this relaxation in designing improved approximation algorithms for the scheduling problem. 
	
	\subsection{Overview of Our Algorithm and Analysis}  For the sake of convenience, we swap job sizes and weights to obtain a scheduling instance with machine-independent job sizes $p_j$, but machine dependent weights $w_{ij}$.  It is a simple observation that the swapping does not change the instance.   We solve the configuration LP to obtain a fractional assignment $(z_{ij})_{i \in M, j \in J}$.
	
	With the relaxed non-positive correlation condition, we design our iterative rounding algorithm as follows.  Let $G = (M, J, E)$ be the support bipartite graph of $z$. The volume of an edge $ij$ is defined as $p_j z_{ij}$, and we use $\vol(E')$ to denote the total volume edges in $E'$ for every $E' \subseteq E$.   We partition jobs into classes based on their sizes geometrically, with a random shift: a job $j$ is in $J_k, k \in \Z$, if we have $p_j \in [\beta \rho ^k, \beta\rho^{k+1})$, for some $\rho > 1$ and a randomly chosen $\beta \in [1, \rho)$. 
	
	Our iterative rounding algorithm handles each job class $J_k$ separately and independently; henceforth we fix $k$.    For every machine $i$, let $\delta^k_i$ be the set of edges between $i$ and $J_k$. We sort the edges $ij \in \delta^k_i$ in descending order of their $\frac{w_{ij}}{p_j}$ values, i.e., Smith ratios.  We \emph{mark} the first $\rho \beta^k$ volume the edges in this order, or all edges in $\delta^k_i$ if $\vol(\delta^k_i) \leq \beta \rho^k$; the other edges in $\delta^k_i$ are \emph{unmarked}.  Let $\delta^{k, \mk}_i$ be the set of marked edges in $\delta^k_i$. In the actual algorithm, we may need to break some edge into two parallel edges, one marked and other unmarked. For simplicity, we assume this does not happen in the overview.
	
	We maintain a $\bar x_{ij}$ value for every $ij \in E$ between $M$ and $J_k$, initially set to $z_{ij}$. Let $\bar G = (M, J_k, \bar E)$ be the support of $\bar x$.  A crucial property we maintain is that $\sum_{e \in \delta^{k, \mk}_i} \bar x_e p_e$ does not change (we define $p_e = p_j$ if $e = ij$), until when $|\delta^{k, \mk}_i \cap \bar E| = 1$ happens.  Two operations respecting the property can be defined: a rotation operation over a cycle of marked edges in $\bar E$, and a shifting operation along a ``pseudo-marked-path'' in $\bar E$.  Both operations are random and respect the marginal probabilities, and the condition that the every job is assigned to an extent of 1. As the former operation is standard, we only describe the latter. A pseudo-marked-path consists of a simple path $(j_1, i_1, j_2, i_2 \cdots, i_{t-1}, j_t)$ of edges in $\bar E$ between two jobs $j_1$ and $j_t$, with one edge $i_0j_1$ at the beginning and one edge $j_ti_t$ in the end.  The edge $i_0j_1$ may be unmarked, in which case we allow $i_0 \in \{i_1, i_2, \cdots, i_{t-1}\}$, or the unique edge in $\delta^{k, \mk}_{i_0} \cap \bar E$. The same requirement  is imposed on $j_ti_t$.  (See Definition~\ref{def:pseudo-marked-parth} for the formal definition.) Naturally, we shift $\bar x$-values on the path in one of the two directions randomly.  Each operation will remove at least 1 edge from $\bar G$, and the procedure terminates when we have no cycles of marked edges or pseudo-marked-paths in $\bar E$.  We prove that when this happens, the $\bar x$-vector is integral, which gives the assignment of $J_k$ to $M$. 
	 
		Now we focus on the correlation between two edges in $\delta^k_i$. Two unmarked edges have a non-positive correlation. Positive correlations may be introduced between an unmarked edge and a marked one by the shifting operation along a pseudo-marked path. It may happen that $i_1j_1$ is unmarked and  $i_1$ is the same as some $i_o$ on the path. Then $i_1j_1$ and $i_oj_o$ will be positively related.  However, the quantity $\sum_{e \in \delta^{k, \mk}_i} \bar x_e p_e$ does not change;  that is, $j$ is non-positively correlated with $\delta^{k, \mk}_i$ as a whole.  This is where the relaxed non-positive correlation condition comes into play.  As all the marked edges precede unmarked ones in the final schedule, this suffices to preserve the $1.5$-approximation ratio.   The improvement over 1.5 arises from the strong negative correlation among marked edges. We maintain $\sum_{e \in \delta^{k, \mk}_i} \bar x_e p_e \leq \beta \rho^k$ until $\delta^{k, \mk}_i \cap \bar E$ becomes a singleton. As every class-$k$ job has size at least $\beta\rho^k$, at most one edge in $\delta^{k, \mk}_i$ will be chosen in our final schedule, resulting in the strongest possible negative correlation within $\delta^{k, \mk}_i$.

	For the analysis, we derive a relatively simple closed form for the expected total weighted completion time.  We can separately focus on each machine $i$, a threshold $\sigma$ on the Smith ratio, and all the jobs $j$ with $\frac{w_{ij}}{p_i} \geq \sigma$, denoting them as $J'$. The term leading to the improvement over $1.5$ is precisely expressed as $\frac12 \sum_{k} \big(\vol(\delta^{k, \mk}_i \cap J')\big)^2 = \frac12 \sum_{k} \big(\min\big\{\vol\big(\delta^k_i \cap J'), \beta \rho^k\big\}\big)^2$. While this term is simple, we still encounter difficulties in providing a good analytical analysis of our ratio. 
	
	Instead, we provide a computer assisted proof with $\rho = 2$. We only focus on three most important job classes $k$, disregarding the savings from others. The ratio is captured by some mathematical program with a maximization objective.  Depending on whether $\vol(\delta^{k, \mk}_i) = \vol(\delta^{k}_i)$ or $\vol(\delta^{k, \mk}_i) = \beta\rho^k_i$ for each of the three classes $k$, we further partition the program into three  sub-programs. By specifying some Lagrangian multipliers for the sub-programs, we establish upper bounds on the sub-programs, and thus the original program. This  results in our $(1.36 + \epsilon)$-approximation ratio. A computer program verifies the validity of all the multipliers and upper bounds. The checker algorithm is easy to implement; essentially, it involves checking hundreds of quadratic function problems, each asking for the maximum value of a single-variable quadratic function over an interval.

	\subsection{Other Related Work} The unrelated machine weighted completion time problem can be solved in polynomial time if jobs have the same weights or if $p_{ij} \in \{1, \infty\}$ for every $ij$ pair \cite{horn1973minimizing,bruno1974scheduling}. When $m = 1$, the Smith rule gives the optimum schedule. Additionally, when $m = O(1)$ there exists a polynomial-time approximation scheme (PTAS)  \cite{lenstra1990approximation}.
	For the problem where weights also depend on the machines and $w_{ij} = p_{ij}$ for every $i \in M, j \in J$,  Kalaitzis et al.\ developed a 1.21-approximation algorithm \cite{Ola2017unrelated}; in this case, all jobs have the same Smith ratios on all machines.  For the cases where the machines are identical, or uniformly related, the problem remains NP-hard \cite{garey2002computers} but admit PTASes  \cite{afrati1999approximation,skutella2000ptas,chekuri2001ptas}.
	\medskip
	
	\noindent{\bf Organization}\ \ The rest of the paper is organized as follows.  In Section~\ref{sec:alg}, we describe our iterative rounding algorithm for the scheduling problem, and show that it terminates with an integral assignment. In Section~\ref{sec:bound-wC}, we establish an upper bound on the weighted completion time given by our algorithm. In Section~\ref{sec:ratio}, we compare the upper bound and the LP cost to obtain our approximation ratio. It suffices to focus on a fixed machine $i$, and the set of jobs whose Smith ratios are above some threshold on $i$.  We conclude with some open problems in Section~\ref{sec:discuss}. 
	
	\section{Iterative Rounding for Unrelated Machine Weighted Completion Time Scheduling} \label{sec:alg}
	
	We describe our iterative rounding algorithm for the unrelated machine weighted time scheduling problem. 
		
	\subsection{Swapping Job Sizes and Weights}
	
		At the very beginning of our algorithm, we swap job sizes and weights so that in the new instance, a job $j$ has a fixed size $p_j$, but machine-dependent weights $w_{ij}$'s.  To achieve this goal, we consider the general problem where both sizes and weights depend on the machines: We are given a weight $w_{ij} > 0$ and a processing time $p_{ij} > 0$ for every $i \in M$ and $j \in C$. We need to find an assignment $\phi: J \to M$ of jobs to machines.  On every machine $i$, we consider the schedule of jobs $\phi^{-1}(i)$ using the Smith rule, with job processing times $(p_{ij})_{j \in \phi^{-1}(i)}$ and weights $(w_{ij})_{j \in \phi^{-1}(i)}$. This gives us a weighted completion time for machine $i$; the objective we try to minimize is the sum of this quantity over all machines $i$. 
		
		A simple but useful observation we use is the following:  
		\begin{lemma}
			Consider two instances $\calI$ and $\calI'$ of the general weighted completion time problem on the same set $M$ of machines and the same set $J$ of jobs.  $\calI$ has processing times $p_{ij}$ and weights $w_{ij}$, and $\calI'$ has processing times $p'_{ij}$ and weights $w'_{ij}$.  For every $i \in M, j \in J$, we have $p_{ij} = w'_{ij}$ and $w_{ij} = p'_{ij}$. Then for every $\phi: J \to M$ , the cost of $\phi$ for $\calI$ is the same as that for $\calI'$.
		\end{lemma}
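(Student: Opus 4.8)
The plan is to reduce the cost of any assignment, in either instance, to a common closed form that is manifestly symmetric under swapping processing times and weights. First I would recall the standard fact that, on a single machine $i$, scheduling a job set $S \subseteq J$ by the Smith rule (nonincreasing order of $w_{ij}/p_{ij}$, ties broken arbitrarily) produces weighted completion time exactly
\[
	f_i(S) \;:=\; \sum_{j \in S} p_{ij} w_{ij} \;+\; \sum_{\{j,j'\} \subseteq S:\, j \neq j'} \min\{p_{ij} w_{ij'},\, p_{ij'} w_{ij}\}.
\]
To verify this, order $S = (j_1, \dots, j_t)$ by nonincreasing Smith ratio; the completion time of $j_\ell$ is $\sum_{\ell' \le \ell} p_{i j_{\ell'}}$, so the total weighted completion time is $\sum_{\ell} w_{i j_\ell}\big(\sum_{\ell' \le \ell} p_{i j_{\ell'}}\big) = \sum_\ell p_{i j_\ell} w_{i j_\ell} + \sum_{\ell' < \ell} p_{i j_{\ell'}} w_{i j_\ell}$. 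For a pair with $\ell' < \ell$ we have $w_{i j_{\ell'}}/p_{i j_{\ell'}} \ge w_{i j_\ell}/p_{i j_\ell}$, i.e.\ $p_{i j_{\ell'}} w_{i j_\ell} \le p_{i j_\ell} w_{i j_{\ell'}}$, so the cross term equals $\min\{p_{i j_{\ell'}} w_{i j_\ell},\, p_{i j_\ell} w_{i j_{\ell'}}\}$, which is exactly the $\{j_{\ell'}, j_\ell\}$ summand of $f_i(S)$. Hence the cost of any $\phi : J \to M$ for the general instance is $\sum_{i \in M} f_i(\phi^{-1}(i))$.

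Next I would observe that $f_i$ is invariant under the swap. Writing $f'_i$ for the analogous quantity formed with $p'_{ij}, w'_{ij}$ in place of $p_{ij}, w_{ij}$, the singleton terms agree since $p_{ij} w_{ij} = w'_{ij} p'_{ij}$; for the pairwise terms, $\min\{p_{ij} w_{ij'},\, p_{ij'} w_{ij}\} = \min\{w'_{ij} p'_{ij'},\, w'_{ij'} p'_{ij}\} = \min\{p'_{ij'} w'_{ij},\, p'_{ij} w'_{ij'}\}$, which is precisely the $\{j, j'\}$ summand of $f'_i(S)$, the two arguments of the $\min$ merely having been interchanged. Therefore $f_i(S) = f'_i(S)$ for every $i \in M$ and every $S \subseteq J$, and summing over $i \in M$ with $S = \phi^{-1}(i)$ shows that the cost of $\phi$ for $\calI$ equals its cost for $\calI'$.

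There is no real obstacle here; the only point requiring a little care is justifying the closed form $f_i$ for a single-machine Smith schedule and checking that the tie-breaking rule is immaterial. The latter follows from the same computation: when $w_{i j_{\ell'}}/p_{i j_{\ell'}} = w_{i j_\ell}/p_{i j_\ell}$, the two arguments of the relevant $\min$ are equal, so reordering tied jobs changes neither the schedule's cost nor any summand of $f_i(S)$. Everything else is a term-by-term identification between the two instances.
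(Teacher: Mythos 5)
Your proof is correct. It takes a mildly different route from the paper's: you first establish the order-free closed form $f_i(S) = \sum_{j} p_{ij}w_{ij} + \sum_{\{j,j'\}} \min\{p_{ij}w_{ij'}, p_{ij'}w_{ij}\}$ for the Smith-rule cost (the formula the paper states in its introduction for machine-independent weights) and then observe that each summand is literally invariant under the swap, since interchanging the two arguments of the $\min$ is exactly the substitution $p \leftrightarrow w$. The paper instead keeps the cost as a sum over \emph{ordered} pairs $(j,j')$ with $j \preceq_i j'$ and argues that the Smith order $\prec'_i$ in $\calI'$ can be chosen as the exact reversal of $\prec_i$, so the double sum reindexes to the cost in $\calI'$. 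The two arguments encode the same symmetry, but yours buys a small robustness advantage: the $\min$ formula is manifestly independent of how ties in the Smith ratio are broken, whereas the paper's reindexing implicitly requires the tie-breaking in $\prec'_i$ to be coordinated with (reversed from) that in $\prec_i$ — a point you correctly flag and dispose of at the end. The cost of your route is the extra half-page verifying the closed form, which the paper treats as known.
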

		\begin{proof}
			For every machine $i$, define $\prec_i$ to be the total order over $J$ w.r.t the Smith ratios on machine $i$ in instance $\calI$. That is, for every $j, j' \in J, j \neq j'$,  $j \prec_i j'$ holds if and only if $j$ should be processed before $j'$ if they are both assigned to $i$ in the instance $\calI$.  This implies $\frac{w_{ij}}{p_{ij}} \geq \frac{w_{ij'}}{p_{ij'}}$; for two jobs with the same Smith ratio, we break the tie arbitrarily.  Define $\prec'_i$ in the same way but for the instance $\calI'$. Then, we can guarantee $j \prec_i j'$ if and only if $j' \prec'_i j$ for every $i \in M, j, j' \in J, j \neq j'$. 
			
			The cost of any $\phi: J \to M$ w.r.t $\calI$ is 
			\begin{align*}
				\sum_{i \in M, j, j' \in \phi^{-1}(i): j = j'\text{ or } j \prec_i j'} p_{ij} w_{ij'} \quad &= \quad \sum_{i \in M, j, j' \in \phi^{-1}(i): j = j'\text{ or } j' \prec'_i j} w'_{ij} p'_{ij'}\\
				&=\sum_{i \in M, j, j' \in \phi^{-1}(i): j = j'\text{ or } j \prec'_i {j'}} p'_{ij} w'_{ij'}\ .
			\end{align*}
			which is precisely the cost of $\phi$ w.r.t $\calI'$. 
		\end{proof}
		
		Therefore, swapping the job sizes and weights does not change the instance.  In the standard unrelated machine weighted completion time problem, the weights are machine-indepdent. By the lemma, it is equivalent to the problem where weights may depend on machines but job sizes are machine-independent.   
		
		From now on, we focus on such an instance.  Every job $j$ has a fixed size $p_j > 0$, but machine-dependent weights $(w_{ij})_{i \in M}$.  We remark that the swapping is only for the sake of convenience, as one could obtain an equivalent algorithm without swapping them. As in \cite{BansalSS16} and \cite{Har24}, we shall partition jobs into classes according to their sizes, and machine-independent sizes will ensure a global partition.  Without the swapping, we have to partition the jobs according to their weights, which is incompatible with prior algorithms.  Also, our algorithm crucially depends on that sizes are machine-independent after swapping. So, unlike prior algorithms, ours does not work for the general scheduling problem. 
		
		We shall use $p(J') := \sum_{j \in J'} p_j$ for every $J' \subseteq J$ to denote the total size of jobs in $J'$. 
		
		
	

	\subsection{Configuration LP}
	We describe the configuration LP for the problem. As usual, a configuration $f$ is a subset of $J$. For every $i \in M, f \subseteq J$, we define $\cost_i(f)$ to be the total weighted completion time of scheduling jobs $f$ on machine $i$ optimally, i.e., using the Smith rule.  Then in the configuration LP, we have a variable $y_{if}$ for every machine $i \in M$ and configuration $f \subseteq J$, indicating if the set of jobs assigned to $i$ is precisely $f$.  The LP is as follows:
	\begin{equation}
		\min \quad \sum_{i \in M} \sum_{f} y_{if} \cdot \cost_i(f) \label{cfLP}
	\end{equation}\vspace*{-20pt}
	
	\noindent\begin{minipage}[t]{0.45\textwidth}
		\begin{align}
			\sum_{f} y_{if} &=1 &\quad &\forall i \in M \label{LPC:one-configuration}\\
			\sum_{i \in M} \sum_{f \ni j} y_{if} &= 1 &\quad &\forall j \in J \label{LPC:job-scheduled}
		\end{align}	
	\end{minipage}\hfill
	\begin{minipage}[t]{0.45\textwidth}
		\begin{align}
			y_{if} &\geq 0 &\quad &\forall i \in M, j \in J \label{LPC:non-negative}
		\end{align}	
	\end{minipage}\medskip
	
	\eqref{LPC:one-configuration} requires that for every machine $i\in M$, we choose exactly one configuration $f$. \eqref{LPC:job-scheduled} requires that every job is covered by exactly one configuration across all machines. \eqref{LPC:non-negative} is the non-negativity constraint. We pay a cost of $\cost_i(f)$ on $i$ if the set of jobs assigned to $i$ is $f$, thus we have the objective \eqref{cfLP}.
	
	Though this LP has exponential number of variables, Sviridenko and Wiese \cite{sviridenko2013approximating} showed that it can be solved approximately within a factor of $1+\epsilon$, for any constant $\epsilon > 0$. So, we assume we are given a $(1+\epsilon)$-approximate solution $y$ to the LP; in particular, the result of \cite{sviridenko2013approximating} suggests that the number of non-zeros in $y$ is small.  
	
	For every $i \in M$ and $j \in J$, we let $z_{ij} := \sum_{f \ni j} y_{if}$ denote the fraction of job $j$ assigned to machine $i$.  We only use $z$ variables in the rounding algorithm, while reserving the $y_{if}$ variables for the analysis. \medskip
	
	\noindent{\bf Remark}\ \ We remark that the configuration LP could be replaced by the time-indexed LP relaxation in \cite{Li20}, where we have variables $x_{ijs}$ indicating if job $j$ is scheduled on machine $i$, with starting time $s$. A solution for the configuration LP can be derived from the time-index LP with one caveat: a configuration now is a \emph{multi-set} of jobs. This does not pose a significant challenge to our algorithm and analysis, although we need to be careful in notations when handling multi-sets.

	\subsection{Partition of Jobs into Classes and Construction of Bipartite Graph $G$ with Marked and Unmarked Edges}
	Let $\rho$ be a constant which will be set to $2$ later; for future reference we present the algorithm and a part of the analysis for a general $\rho > 1$. We randomly choose a $\beta \in [1, \rho)$ such that $\ln \beta$ is uniformly distributed in $[0, \ln \rho)$.  For an integer $k$, we say a job $j$ is in class $k$ if $p_j \in [\beta \rho^k, \beta \rho^{k+1})$.  Let $J_k$ be the set of jobs in class $k$; notice that it depends on the randomly chosen $\beta$. Once $\beta$ is chosen, we have a global partition $(J_k)_k$ of jobs, independent of the machines.
	
	We construct a bipartite multi-graph $G = (M, J, E)$ between $M$ and $J$, along with  a vector $x \in (0, 1]^E$. Each edge in $E$ is either \emph{marked} or \emph{unmarked}.  The procedure is formally described in Algorithm~\ref{alg:construct-G}.  
	
	\begin{algorithm}
		\caption{Construction of $G$, and vector $x \in (0, 1]^E$}
		\label{alg:construct-G}
		\begin{algorithmic}[1]
			\State let $E \gets \emptyset$
			\For{every $k$ with $J_k \neq \emptyset$ and every $i \in M$}
				\State $v \gets 0$
				\State sort jobs in $J_k$ in descending order of the Smith ratios on machine $i$ (i.e., values of $\frac{w_{ij}}{p_j}$)
				\For{every job $j \in J_k$ in the order, \textbf{if} $z_{ij} > 0$ \textbf{then}}
					\If{$v + p_j z_{ij} \leq \beta \rho^k$}
						\State add to $E$  a \emph{marked edge} $e$ between $i$ and $j$ with $x_e = z_{ij}$
					\ElsIf{$v \geq \beta \rho^k$}
						\State add to $E$  an \emph{unmarked edge} $e$ between $i$ and $j$ with $x_e = z_{ij}$
					\Else\Comment{We have $v < \beta \rho^k < v + p_j z_{ij}$}
						\State add to $E$  a \emph{marked edge} $e$ between $i$ and $j$ with $x_e = \frac{\beta \rho^k - v}{p_j}$ \label{step:G-partial-mark}
						\State add to $E$ an \emph{unmarked edge} $e'$ between $i$ and $j$ with $x_{e'} = z_{ij} - x_e = \frac{v + p_j z_{ij} - \beta \rho^k}{p_j}$ \label{step:G-partial-unmark}
					\EndIf
					\State $v \gets v + p_j z_{ij}$
				\EndFor
			\EndFor
			\State \Return $G := (M, J, E)$ and $x \in (0, 1]^E$
		\end{algorithmic}
	\end{algorithm}
	\medskip
	
	\noindent{\bf Notations}\ \ With $G = (M, J, E)$ and $x \in (0, 1]^E$ constructed, we define the following notations.  For every $e = ij \in E$, we define $p_e = p_{j}$ to be the size of its incident job, and $\sigma_e = \frac{w_{ij}}{p_j}$ be the Smith ratio. For every $i \in M$ (resp., $j \in J$), let $\delta_i$ (resp., $\delta_j$) be the set of edges in $E$ incident to $i$ (resp., $j$).  For every job class $k$, let $E_k$ be the set of edges in $E$ between $M$ and $J_k$, $G_k = (M, J_k, E_k)$ and $x^{(k)} \in (0, 1]^{E_k}$ be the $x$ vector restricted to $E_k$.  For every $k$ and $i \in M$, let $\delta^k_i = \delta_i \cap E_k$ be incident edges of $i$ in $E_k$.   We use superscripts ``$\mk$'' and ``$\umk$'' to denote the restrictions to marked and unmarked edges respectively: $E^\mk$ and $E^\umk$ are the sets of marked and unmarked edges in $E$ respectively. $\delta^\mk_i:=E^\mk \cap \delta_i, \delta^{k, \mk}_i:=E^\mk \cap \delta^k_i$ and $\delta^\mk_j = E^\mk \cap \delta_j$; the same rule applies to unmarked edges as well.  
	\begin{definition}
		 For every $e \in E$,  the \emph{volume} of $e$ is defined as $\vol(e) := x_ep_e$. For a subset $E' \subseteq E$ of edges, we define $\vol(E') = \sum_{e \in E'} \vol(e)$.
	\end{definition}
	
	We elaborate on Algorithm~\ref{alg:construct-G} by describing it using an alternative way. For every $i \in M, j \in J$ with $z_{ij} > 0$, we create an edge $ij$ with $x_{ij} = z_{ij}$.  For every job class $k$, and on every machine $i$, we sort the edges $e \in \delta^k_i$ according to $\sigma_e$ in descending order. We find an integer $t$ such that the total volume of the first $t$ edges in this order is exactly $\beta \rho^k$; assume for now $t$ exists. We mark the $t$ edges; the other edges in $\delta^k_i$ are unmarked. If the integer $t$ does not exist, we only mark a portion of some edge $e$ in the order. So, in this case, we break $e$ into two parallel edges,  one marked and the other unmarked, and split the $x$-value (Steps~\ref{step:G-partial-mark} and \ref{step:G-partial-unmark}), so that the volume of marked edges is exactly $\beta \rho^k$. In case $\vol(\delta^k_i) \leq \beta \rho^k$, all edges in $\delta^k_i$ are marked. 	See Figure~\ref{fig:G} for an illustration of the construction of $G$. 
	
	\begin{figure}
		\centering
		\includegraphics[width = 0.5\textwidth]{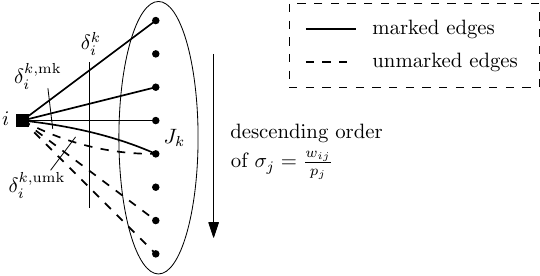}
		\caption{The edges in $G$ between $i$ and $J_k$ for the case $\vol(\delta^k_i) \geq \beta \rho^k$, where we have $\vol(\delta^{k, \mk}_i) = \beta \rho^k$. }
		\label{fig:G}
	\end{figure}
		
	So, the total $x$-value of edges between any $j \in J$ and $i \in M$ is precisely $z_{ij}$.  This implies $\sum_{e \in \delta_j} x_e = 1$ for every $j \in J$. As we are dealing with multi-graphs, when we use $ij$ to denote an edge, we assume we know its identity. The following claim is straightforward:
	\begin{claim}
		\label{claim:vol-marked-k-i}
		For every job class $k$ and machine $i \in M$, we have $\vol(\delta^{k,\mk}_i) = \min\big\{\vol(\delta^k_i), \beta \rho^k \big\}$.
	\end{claim}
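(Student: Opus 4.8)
The plan is to trace through Algorithm~\ref{alg:construct-G} and track how the running counter $v$ evolves as we process the edges $e \in \delta^k_i$ in descending Smith-ratio order, then read off the total volume of the marked edges created. First I would observe that when we reach a job $j$ with $z_{ij}>0$, the current value of $v$ equals the sum of $p_{j'}z_{ij'}$ over all previously processed jobs $j'$ (including those with $z_{ij'}=0$, which contribute nothing), i.e. $v$ equals the prefix sum of the quantities $p_{j'}z_{ij'}$. These quantities are exactly the would-be volumes $\vol(ij') = x_{ij'}p_{j'}$ of the edges, since $x_{ij'}=z_{ij'}$ before any splitting. So after processing the entire list, $v$ has reached $\vol(\delta^k_i)$ (a job may get split into a marked and an unmarked copy, but their $x$-values sum to $z_{ij}$, so this does not affect the final value of $v$).

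Next I would do the case analysis baked into the algorithm. \textbf{Case 1: $\vol(\delta^k_i) \le \beta\rho^k$.} Then for every job $j$ in the list we have $v + p_j z_{ij} \le \vol(\delta^k_i) \le \beta\rho^k$, so the first branch (``$v + p_j z_{ij} \le \beta\rho^k$'') is always taken and every edge is added as a marked edge with $x_e = z_{ij}$. Hence $\vol(\delta^{k,\mk}_i) = \sum_{j} p_j z_{ij} = \vol(\delta^k_i) = \min\{\vol(\delta^k_i), \beta\rho^k\}$, as claimed. \textbf{Case 2: $\vol(\delta^k_i) > \beta\rho^k$.} Then $v$ starts at $0 < \beta\rho^k$ and ends at $\vol(\delta^k_i) > \beta\rho^k$, so there is a unique job $j^\star$ in the list during whose processing $v$ first fails to satisfy $v + p_{j^\star} z_{ij^\star} \le \beta\rho^k$, i.e. the iteration at which $v < \beta\rho^k \le v + p_{j^\star}z_{ij^\star}$ (before $j^\star$, the first branch applies since $v+p_jz_{ij}\le\beta\rho^k$; this is where I should check the loop maintains ``$v\le\beta\rho^k$ after each first-branch iteration''). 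I would split into the subcase $v + p_{j^\star}z_{ij^\star} = \beta\rho^k$ (first branch still taken, $j^\star$ fully marked, and all later jobs fall into the ``$v \ge \beta\rho^k$'' branch hence unmarked) and the subcase $v < \beta\rho^k < v + p_{j^\star}z_{ij^\star}$ (the \texttt{Else} branch of Steps~\ref{step:G-partial-mark}--\ref{step:G-partial-unmark}, where a marked edge of $x$-value $\frac{\beta\rho^k - v}{p_{j^\star}}$ and hence volume $\beta\rho^k - v$ is created, and all subsequent jobs are unmarked because $v$ has now reached $v + p_{j^\star}z_{ij^\star} > \beta\rho^k$).

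In every subcase of Case 2, summing the volumes of the marked edges telescopes: the marked edges created strictly before $j^\star$ contribute their full volumes, which sum to the value of $v$ at the start of $j^\star$'s iteration, call it $v^\star$; and the marked edge created at $j^\star$ contributes either $p_{j^\star}z_{ij^\star}$ (first subcase, giving total $v^\star + p_{j^\star}z_{ij^\star} = \beta\rho^k$) or $\beta\rho^k - v^\star$ (second subcase, giving total $v^\star + (\beta\rho^k - v^\star) = \beta\rho^k$); and no marked edges are created afterwards. Either way $\vol(\delta^{k,\mk}_i) = \beta\rho^k = \min\{\vol(\delta^k_i),\beta\rho^k\}$, completing the claim. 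I do not anticipate a genuine obstacle here — the only mildly delicate point, which I would state as a one-line loop invariant and verify, is that after each iteration that takes the first (marked) branch we still have $v \le \beta\rho^k$, so that the ``$v \ge \beta\rho^k$'' test correctly separates the marked prefix from the unmarked suffix and no job is erroneously marked after the threshold is reached.
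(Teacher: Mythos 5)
Your proof is correct and is exactly the bookkeeping argument the paper has in mind when it calls Claim~\ref{claim:vol-marked-k-i} ``straightforward'' (the paper itself supplies no proof): trace the counter $v$ through Algorithm~\ref{alg:construct-G}, note that the marked edges form a prefix whose volumes telescope to either the full total or exactly $\beta\rho^k$ at the splitting job, and that once $v\geq\beta\rho^k$ every later edge is unmarked. The only cosmetic issue is the slightly inconsistent wording when you first introduce $j^\star$ (``fails to satisfy $\le$'' versus the displayed ``$\le$''), but your subsequent two-subcase analysis is complete and resolves it correctly.
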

	The threshold $\beta \rho^k$ for $\vol(\delta^{k, \mk}_i)$ was carefully chosen as the lower bound for  the size of a class-$k$ job. 
	
	\subsection{Iterative Rounding}
	We describe our randomized iterative rounding procedure to assign jobs to machines. It handles jobs in different classes separately and independently. So, throughout this section, we fix a job class $k$, and show how to assign jobs $J_k$.

	\begin{definition}
		\label{def:pseudo-marked-parth}
		Let $\bar G = (M, J_k, \bar E \subseteq E_k)$ be a spanning subgraph of $G_k$.  A pseudo-marked-path in $\bar G$ is a (not-necessarily-simple) path $(i_0, j_1, i_1, \cdots, j_t, i_t)$ of distinct edges in $\bar G$, where $i_0, i_1, \cdots, i_t \in M$ and $j_1, j_2, \cdots, j_t \in J_k$, satisfying the following properties.  
		\begin{itemize}
			\item The sub-path $(j_1, i_1, …, i_{t-1}, j_t)$ is simple, and all the edges on it are marked. (It is possible that $t = 1$, in which case the sub-path consists of a single job.)
			\item The edge $i_0j_1$ on the path is either unmarked, or the only edge in $\delta^{k, \mk}_{i_0} \cap \bar E$. 
			\item The edge $j_ti_t$ on the path is either unmarked, or the only edge in $\delta^{k, \mk}_{i_t} \cap \bar E$. 
		\end{itemize}
	\end{definition}

	We remark that in the case where $i_0j_1$ is unmarked, it is possible that $i_0 \in \{i_1, i_2, \cdots, i_{t-1}\}$.  So, the path may not be simple.  However, in the other case where $i_0j_1$ is the only edge in $\delta^\mk_{i_0} \cap \bar E$, we have $i_0 \notin \{i_1, i_2, \cdots, i_{t-1}\}$, as every machine in the set has at least 2 incident marked edges in $\bar E$.  The same argument can be made to the last edge $j_ti_t$. See Figure~\ref{fig:pseudo-marked} for an illustration of a pseudo-marked path. 
	\medskip
	
	\begin{figure}
		\centering
		\includegraphics[width=0.8\textwidth]{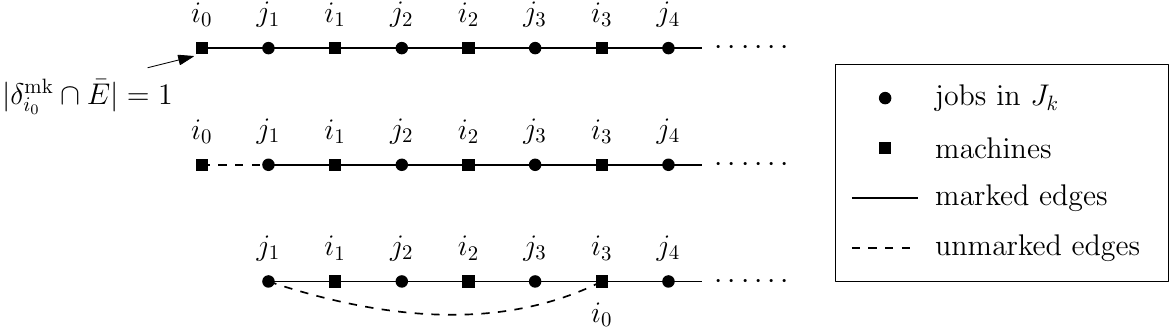}
		\caption{The three cases for the machine $i_0$ on a pseudo-marked path in $\bar G = (M, J^k, \bar E)$. They also apply to the machine $i_t$. }
		\label{fig:pseudo-marked}
	\end{figure}

		
	
	\begin{algorithm}[h]
		\caption{Iterative Rounding Procedure for Job Class $k$}
		\label{alg:iter}
		\begin{algorithmic}[1]
			\State $\bar x \gets x^{(k)}, \bar E \gets E_k, \bar G \gets G_k$ \Comment{We maintain that $\bar G = (M, J_k, \bar E)$ and $\bar E$ is the support of $\bar x \in [0, 1]^{E_k}$}
			\While {\textbf{true}}
				\If{we can find a simple cycle $(i_0, j_1, i_1, j_2, i_2, \cdots, j_t, i_t = i_0)$ of marked edges in $\bar G$, or a semi-marked path $(i_0, j_1, i_1, \cdots, j_t, i_t)$ in $\bar G$}
					\State define a non-zero vector $a \in \R^{E_k}$ as in the text \label{step:iter-define-a}
					\State choose the largest $\theta > 0$ such that $\bar x + \theta a$ is non-negative
					\State choose the largest $\theta' > 0$ such that $\bar x - \theta' a$ is non-negative
					\State with probability $\frac{\theta'}{\theta + \theta'}$, let $\bar x \gets \bar x + \theta a$; with the remaining probability $\frac{\theta}{\theta + \theta'}$, let $\bar x \gets \bar x - \theta' a$
					\For{every $e \in \bar E$ with $\bar x_e = 0$}: remove $e$ from $\bar E$ and $\bar G$ \EndFor
				\Else \ \textbf{break}
				\EndIf
			\EndWhile
		\end{algorithmic}
	\end{algorithm}
		
		With the definition, we can describe the iterative rounding procedure for $J_k$. The pseudo-code is given in Algorithm~\ref{alg:iter}.  We need to describe how to define the non-zero vector $a \in \R^{E_k}$ in Step~\ref{step:iter-define-a}.  Suppose the structure (cycle or pseudo-marked-path) we found in $\bar G$ is $(i_0, j_1, i_1, j_2, i_2, \cdots, j_t, i_t = i_0)$ in $\bar G$. We define a non-zero vector $a \in \R^{E_k}$ so that the following conditions hold. 
	\begin{itemize}
		\item Every edge $e \in \bar E$ not on the structure has $z_e = 0$. 
		\item For every $o \in [t]$, we have $a_{i_{o-1}j_o} + a_{j_oi_o} = 0$.
		\item If the structure is a cycle, then for every $o \in [t]$, we have $a_{j_oi_o} p_{j_o} + a_{i_oj_{o+1}} p_{j_{o+1}} = 0$, where for convenience we assume $j_{t+1} = j_1$. 
		\item If the structure is a pseudo-marked-path, then for every $o \in [t - 1]$, we have $a_{j_oi_o} p_{j_o} + a_{i_oj_{o+1}} p_{j_{o+1}} = 0$. 
	\end{itemize}
	In words, the support of $a$ is structure. For a job $j$ on the structure, the sum of $a$-values of its two incident edges on the cycle is $0$. If the structure is a cycle, then for any machine $i$ on the cycle, the sum of $a_{ij}p_j$ over its two incident edges $ij$ on the cycle is $0$.   If the structure is a pseudo-marked-path, we require the equality to hold for machines in $\{i_1, i_2, \cdots, i_{t-1}\}$ and its two incident marked edges on the path. It is easy to see that such $a$ exists and is unique up to scaling.  
	
	The properties for the $a$ vector constructed in either case are summarized in the following claim. 
	\begin{claim}
		\label{claim:properties-of-a}
		The non-zero $a$-vector constructed in Step~\ref{step:iter-define-a} of some iteration satisfies the following properties:
		\begin{enumerate}[label=(\ref{claim:properties-of-a}\alph*), leftmargin=*]
			\item For every $j \in J_k$, we have $\sum_{e \in \delta_j} a_j = 0$.
			\item For every $i \in M$, if $|\delta^{k, \mk}_i \cap \bar E| \geq 2$ at the beginning of the iteration, then $\sum_{e \in \delta^{k, \mk}_i} a_e p_e = 0$.
		\end{enumerate}
	\end{claim}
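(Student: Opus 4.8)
The plan is to verify the two stated properties directly from the defining linear equations for the vector $a$, treating the cycle case and the pseudo-marked-path case separately where they differ. Both claimed properties are ``local'' in nature, so the key observation is that $a$ is supported entirely on the found structure, and each $j \in J_k$ (resp.\ each relevant $i \in M$) has at most two incident edges on that structure.

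For property (\ref{claim:properties-of-a}a): fix $j \in J_k$. If $j$ is not a vertex of the structure, then every $e \in \delta_j$ has $a_e = 0$ by the first defining condition, so the sum is trivially $0$. If $j$ lies on the structure, then $j = j_o$ for some $o \in [t]$, and its only two incident edges on the structure are $i_{o-1}j_o$ and $j_oi_o$ (in the cycle case with $i_t = i_0$ this is still the correct pair of incident edges at each interior job; in the path case, $j_1$ has incident structure-edges $i_0j_1$ and $j_1i_1$, and $j_t$ has $i_{t-1}j_t$ and $j_ti_t$, which again matches). All other edges incident to $j$ are off the structure and contribute $0$. Hence $\sum_{e \in \delta_j} a_e = a_{i_{o-1}j_o} + a_{j_oi_o} = 0$ by the second defining condition. (I note the claim as typeset writes $\sum_{e \in \delta_j} a_j$, which should read $\sum_{e \in \delta_j} a_e$; I would state it correctly.)

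For property (\ref{claim:properties-of-a}b): fix $i \in M$ with $|\delta^{k,\mk}_i \cap \bar E| \ge 2$ at the start of the iteration. First I would argue that among the marked edges of $\bar E$ incident to $i$, exactly the ones lying on the structure can have nonzero $a$-value, and there are at most two of them. If the structure is a cycle through $i$, say $i = i_o$, its two incident structure-edges are $j_oi_o$ and $i_oj_{o+1}$, both marked, and the third cycle condition gives $a_{j_oi_o}p_{j_o} + a_{i_oj_{o+1}}p_{j_{o+1}} = 0$; all other marked edges at $i$ are off the structure with $a$-value $0$, so $\sum_{e \in \delta^{k,\mk}_i} a_e p_e = 0$. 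If the structure is a pseudo-marked-path, I split into cases according to where $i$ sits. If $i \in \{i_1,\dots,i_{t-1}\}$, then (using that the sub-path is simple, so $i$ occurs there only once, though possibly $i = i_0$ as well) its incident marked structure-edges are exactly $j_oi_o$ and $i_oj_{o+1}$ for the corresponding $o \in [t-1]$, and the fourth defining condition gives $a_{j_oi_o}p_{j_o} + a_{i_oj_{o+1}}p_{j_{o+1}} = 0$; the possible extra incident edge $i_0j_1$ at $i = i_0$ is, by definition of the path, either unmarked (so not counted in $\delta^{k,\mk}_i$) or the unique marked edge of $\delta^{k,\mk}_{i_0}\cap\bar E$ --- but the latter contradicts $|\delta^{k,\mk}_i\cap\bar E|\ge 2$ since $i_0$ would then have that edge plus $j_1i_1$, forcing $i_0 \notin \{i_1,\dots,i_{t-1}\}$; so in this sub-case $i_0j_1$ is unmarked and contributes $0$. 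The remaining possibility is $i \in \{i_0, i_t\} \setminus \{i_1,\dots,i_{t-1}\}$; then by the definition of pseudo-marked-path its unique incident structure-edge ($i_0j_1$ or $j_ti_t$) is either unmarked --- so $i$ has no marked structure-edge and every marked edge in $\delta^{k,\mk}_i\cap\bar E$ is off the structure with $a$-value $0$, giving sum $0$ --- or it is the unique edge of $\delta^{k,\mk}_i\cap\bar E$, which is excluded by the hypothesis $|\delta^{k,\mk}_i\cap\bar E|\ge 2$. In every surviving case the sum is $0$, proving (\ref{claim:properties-of-a}b).

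The main obstacle is purely bookkeeping: making sure that in the pseudo-marked-path case the ``endpoint'' machines $i_0$ and $i_t$ are handled consistently with the possibility that the path is non-simple (an endpoint machine coinciding with an interior machine), and correctly invoking the structural remark --- already established in the text --- that if the incident edge at an endpoint is the \emph{unique} marked edge of that machine in $\bar E$ then that machine cannot be an interior machine of the path. I would present the argument as a short case analysis keyed to the definition of pseudo-marked-path, citing Claim statements and the accompanying figure, with no nontrivial computation beyond the one-line linear combinations above.
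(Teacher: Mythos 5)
Your proposal is correct and follows essentially the same route as the paper: both reduce to a case analysis on where the machine $i$ sits relative to the found structure, with the cycle case and the interior-machine case following directly from the defining linear conditions on $a$, and the endpoint case ruled out or trivialized by the definition of a pseudo-marked-path together with the hypothesis $|\delta^{k,\mk}_i \cap \bar E| \ge 2$. Your treatment of the non-simple sub-case ($i_0$ coinciding with an interior machine, forcing $i_0j_1$ to be unmarked) is slightly more explicit than the paper's, but it is the same argument.
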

	\begin{proof}
		The first statement and the second statement for when the structure is a cycle holds trivially. So, we focus on the second statement when the structure we found is a pseudo-marked-path $(i_0, j_1, i_1, \cdots, j_t, i_t)$. If $i \in \{i_1, i_2, \cdots, i_{t-1}\}$, then the equality holds. So assume $i \notin \{i_1, i_2, \cdots, i_{t-1}\}$.  If $i \notin \{i_0, i_t\}$, then all edges in $\delta^{k, \mk}_i$ has $a$ values being $0$. So assume $i \in \{i_0, i_t\}$. Then it must be the case that the edge(s) incident to $i$ on the pseudo-marked-path are unmarked, since otherwise $|\delta^{k, \mk}_i \cap \bar E| = 1$ by the definition of a pseudo-marked-path.  This contradicts the premise of the statement.  So, in this case all edges in $\delta^{k, \mk}_i$ has $a$ values being $0$.
	\end{proof}
	
	We observe that the algorithm will terminate in polynomial number of iterations, as in every iteration of the while loop, we will remove at least one edge from $\bar E$. 

	\subsection{Helper Lemmas}
	Before we conclude that the final $\bar x$ given by Algorithm~\ref{alg:iter} is integral, we prove some useful lemmas. Some of them will be used in the analysis of the approximation ratio in Section~\ref{sec:bound-wC}. Till the end of this section, we fix the random choice $\beta \in [1, \rho)$, which determines $J_k$'s, $G$ and $x$.  We use $\widehat\E[\cdot]$ to denote $\E[\cdot | \beta]$, the expectation condition on a fixed $\beta$.  Then, we fix a job class $k$, and focus on an iteration of the while loop in Algorithm~\ref{alg:iter} for this $k$. We assume the if condition holds.  Let $\bar x^{\old}$ and $\bar x^\new$  be the $\bar x$ values at the beginning and end the iteration respectively.  We use $\E'[\cdot]$ to denote the expectation over the randomness in this iteration. 
	\begin{claim}
		\label{claim:bar-x-marginal}
		For every $e \in E^k$,  we have $\E'[\bar x^\new_e] = \bar x^{\old}_e$. 
	\end{claim}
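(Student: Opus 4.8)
The plan is to compute $\E'[\bar x^\new_e]$ directly from the update rule. In the iteration, we move from $\bar x^\old$ to $\bar x^\old + \theta a$ with probability $\frac{\theta'}{\theta+\theta'}$, and to $\bar x^\old - \theta' a$ with probability $\frac{\theta}{\theta+\theta'}$, where $\theta,\theta' > 0$ are deterministic once the structure (and hence $a$) is fixed. So for every coordinate $e$,
\[
\E'[\bar x^\new_e] \;=\; \frac{\theta'}{\theta+\theta'}\,(\bar x^\old_e + \theta a_e) \;+\; \frac{\theta}{\theta+\theta'}\,(\bar x^\old_e - \theta' a_e).
\]
First I would observe that the $\bar x^\old_e$ terms combine to give $\bar x^\old_e$ since the two probabilities sum to $1$. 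Then I would check that the remaining terms cancel: the coefficient of $a_e$ is $\frac{\theta'\theta}{\theta+\theta'} - \frac{\theta\theta'}{\theta+\theta'} = 0$. Hence $\E'[\bar x^\new_e] = \bar x^\old_e$, which is exactly the claim.

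The only thing to be careful about is that $\theta$ and $\theta'$ are well-defined and strictly positive, so that the probabilities $\frac{\theta'}{\theta+\theta'}$ and $\frac{\theta}{\theta+\theta'}$ make sense; but this is guaranteed by the algorithm's choice of $\theta,\theta'$ as the largest values keeping $\bar x^\old \pm$ (the respective multiple of $a$) non-negative, together with the fact that $a$ is a non-zero vector supported on the current structure and all coordinates of $\bar x^\old$ on that structure are strictly positive (every edge of $\bar G$ is in the support of $\bar x^\old$). I do not anticipate any real obstacle here — this is a one-line computation of the expectation of a two-point distribution designed precisely to be unbiased, and the claim follows immediately.
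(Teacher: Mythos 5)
Your proof is correct and follows essentially the same one-line computation as the paper: the update is a two-point distribution with $\E'[\bar x^\new_e - \bar x^\old_e] = \frac{\theta'}{\theta+\theta'}\theta a_e - \frac{\theta}{\theta+\theta'}\theta' a_e = 0$. Your extra remark on the positivity of $\theta$ and $\theta'$ is a harmless addition that the paper leaves implicit.
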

	\begin{proof}
		This follows from the way we update $\bar x$ in the iteration:  $\bar x^\new_e - \bar x^\old_e$ is $\theta a_e$ with probability $\frac{\theta'}{\theta + \theta'}$, and $-\theta' a_e$ with probability $\frac{\theta}{\theta + \theta'}$. So $\E'[\bar x^\new_e - \bar x^\old_e] = 0$.
	\end{proof}
	
	\begin{claim}
		\label{claim:bar-x-sum-1}
		For every $j \in J_k$, we always have $\sum_{e \in \delta_j}\bar x^\new_e = \sum_{e \in \delta_j}\bar x^\old_e$.
	\end{claim}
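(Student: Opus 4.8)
The plan is to show that the update vector $a$ constructed in Step~\ref{step:iter-define-a} preserves, for every job $j \in J_k$, the total $\bar x$-mass on edges incident to $j$; since $\bar x^{\new} = \bar x^{\old} \pm \theta^{(\prime)} a$ (where $\theta^{(\prime)}$ denotes either $\theta$ or $-\theta'$), this immediately yields the claim. Concretely, by Claim~\ref{claim:properties-of-a}(a) we have $\sum_{e \in \delta_j} a_e = 0$ for every $j \in J_k$, so for whichever of the two outcomes occurs in this iteration, $\sum_{e \in \delta_j}\bar x^{\new}_e - \sum_{e \in \delta_j}\bar x^{\old}_e = \pm\theta^{(\prime)}\sum_{e \in \delta_j} a_e = 0$.

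The only point that needs a word of care is that the edge set $\bar E$ may shrink at the end of the iteration when some coordinates of $\bar x^{\new}$ hit zero and the corresponding edges are removed from $\bar G$. But this is harmless: removing an edge $e$ with $\bar x^{\new}_e = 0$ from $\bar E$ does not alter the sum $\sum_{e \in \delta_j}\bar x^{\new}_e$, since a removed edge contributes $0$ to it. Hence the equality $\sum_{e \in \delta_j}\bar x^{\new}_e = \sum_{e \in \delta_j}\bar x^{\old}_e$ holds whether we interpret $\delta_j$ with respect to $\bar E$ before or after the pruning step. So I would first invoke Claim~\ref{claim:properties-of-a}(a), then note that $a$ is supported on the found structure (so only finitely many incident edges of $j$ are affected), and finally remark that the pruning step is immaterial.

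I do not anticipate any genuine obstacle here — the statement is essentially a restatement of the defining property of $a$ in Claim~\ref{claim:properties-of-a}(a) combined with the form of the update rule. The word ``always'' in the statement simply signals that this invariant holds across every iteration, and by induction it gives that $\sum_{e \in \delta_j}\bar x_e = \sum_{e \in \delta_j} x^{(k)}_e = 1$ for every $j \in J_k$ throughout the execution of Algorithm~\ref{alg:iter}; I would mention this consequence since it is presumably what later arguments (and the eventual integrality claim) rely on.
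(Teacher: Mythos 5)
Your argument is correct and is exactly the paper's proof: the claim follows from Claim~\ref{claim:properties-of-a}(a) together with the fact that $\bar x^{\new}-\bar x^{\old}$ is a scalar multiple of $a$. The extra remarks about pruning and the inductive consequence are fine but not needed.
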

	\begin{proof}
		This follows from the first statement of Claim~\ref{claim:properties-of-a}. 
	\end{proof}
	
	\begin{claim}
		\label{claim:non-positive-corr}
		For a machine $i$, and two distinct edges $e, e' \in \delta^{k, \umk}_i$, we have  $\E'[\bar x^\new_e \bar x^\new_{e'}] \leq \bar x^\old_e \bar x^\old_{e'}$. 
	\end{claim}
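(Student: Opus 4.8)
The plan is to reduce the claim to a statement about the sign of the product $a_e a_{e'}$ for the vector $a$ constructed in Step~\ref{step:iter-define-a} of the iteration. Recall that the update sets $\bar x^\new = \bar x^\old + \epsilon a$, where the scalar random variable $\epsilon$ takes the value $\theta$ with probability $\frac{\theta'}{\theta + \theta'}$ and the value $-\theta'$ with probability $\frac{\theta}{\theta + \theta'}$, with $\theta, \theta' > 0$ as chosen by the algorithm. A direct computation gives $\E'[\epsilon] = 0$ and $\E'[\epsilon^2] = \theta\theta'$, so
\begin{align*}
	\E'[\bar x^\new_e \bar x^\new_{e'}] = \bar x^\old_e \bar x^\old_{e'} + \E'[\epsilon]\big(\bar x^\old_e a_{e'} + \bar x^\old_{e'} a_e\big) + \E'[\epsilon^2]\, a_e a_{e'} = \bar x^\old_e \bar x^\old_{e'} + \theta\theta'\, a_e a_{e'}.
\end{align*}
Since $\theta\theta' > 0$, it therefore suffices to prove that $a_e a_{e'} \le 0$ for any two distinct unmarked edges $e, e' \in \delta^{k,\umk}_i$.

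Next I would split on the type of structure found in the iteration. If it is a simple cycle of marked edges, then $a$ is supported entirely on marked edges, so $a_e = a_{e'} = 0$ and the inequality is trivial. So assume the structure is a pseudo-marked-path $(i_0, j_1, i_1, \dots, j_t, i_t)$. By the definition of a pseudo-marked-path, every interior edge is marked, so the only edges on it that can be unmarked are the two end edges $i_0 j_1$ and $j_t i_t$. Hence $a_e \neq 0$ forces $e \in \{i_0 j_1, j_t i_t\}$; since $e$ is incident to $i$, this in turn forces $i \in \{i_0, i_t\}$ with $e$ equal to the corresponding end edge, and that end edge must actually be unmarked (if an end edge is marked it is the unique marked edge at its machine and hence distinct from the unmarked $e$). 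Consequently, the only configuration in which both $a_e \neq 0$ and $a_{e'} \neq 0$ is $i = i_0 = i_t$ together with $\{e, e'\} = \{i_0 j_1,\, j_t i_t\}$ and both edges unmarked; in all other cases $a_e a_{e'} = 0 \le 0$.

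It remains to handle this last case, which I expect to carry essentially all the content of the proof. Setting $a_{i_0 j_1} = s$ and propagating the defining equations of $a$ along the path — the job equation $a_{i_{o-1} j_o} + a_{j_o i_o} = 0$ at each $j_o$, and the machine equation $a_{j_o i_o} p_{j_o} + a_{i_o j_{o+1}} p_{j_{o+1}} = 0$ at each interior $i_o$ — yields $a_{j_t i_t} = -\,s\, p_{j_1}/p_{j_t}$. Because $p_{j_1}, p_{j_t} > 0$, this has the opposite sign to $a_{i_0 j_1} = s$, so $a_e a_{e'} = -\,s^2\, p_{j_1}/p_{j_t} \le 0$, which finishes the argument. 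The main subtlety to get right is precisely this bookkeeping around the endpoints: that $i_0$ may coincide with $i_t$ (and may even reappear among the interior machines $i_1, \dots, i_{t-1}$, though there only through marked edges, which do not affect $a_e, a_{e'}$), and that a marked end edge can never be one of $e, e'$. Once these points are settled, the sign computation above is immediate.
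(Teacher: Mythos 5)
Your proposal is correct and follows essentially the same route as the paper's proof: handle the cycle case trivially, observe that two unmarked edges at $i$ can both lie on a pseudo-marked-path only as the two end edges with $i = i_0 = i_t$, and then use that the corresponding entries of $a$ have opposite signs so the covariance term is non-positive. The only difference is that you explicitly propagate the sign of $a$ along the path to verify $a_{i_0j_1}a_{j_ti_t} \le 0$, a step the paper asserts without detail; this is a welcome addition rather than a deviation.
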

	\begin{proof}
		If the structure we found in the iteration is a cycle, then $a_e = a_{e'} = 0$ and the equality holds trivially.  So, we assume the structure is a pseudo-marked-path $(i_0, j_1, i_1, \cdots, j_t, i_t)$. If at most one of $e$ and $e'$ is on the path, then the lemma follows from Claim~\ref{claim:bar-x-marginal}. Consider the case $e$ and $e'$ are both on the path. This can only happen if $i_0 = i_t = i$ and $e$ and $e'$ are $i_0j_1$ and $j_ti_t$.  This holds as $\E'[\bar x^\new_e] = \bar x^\old_e$, $\E'[\bar x^\new_{e'}] = \bar x^\old_{e'}$, and $(\bar x^\new_e - \bar x^\old_e)(\bar x^\new_{e'} - \bar x^\old_{e'}) < 0$ with probability 1.
	\end{proof}

	\begin{lemma}
		\label{lemma:mark-vol-no-change}
		Let $i \in M$. If $|\delta^{k, \mk}_i \cap \bar E| \geq 2$ at the beginning of the iteration, then we have $\sum_{e \in \delta^{k, \mk}_i} \bar x^\new_e p_e = \sum_{e \in \delta^{k, \mk}_i} \bar x^\old_e p_e$.
	\end{lemma}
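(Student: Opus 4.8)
The plan is to reduce the statement directly to part (b) of Claim~\ref{claim:properties-of-a}. The update rule sets $\bar x^\new = \bar x^\old + \theta a$ with probability $\frac{\theta'}{\theta+\theta'}$ and $\bar x^\new = \bar x^\old - \theta' a$ with the remaining probability, so in either outcome $\bar x^\new = \bar x^\old + \lambda a$ for some scalar $\lambda \in \{\theta, -\theta'\}$. Hence, regardless of which branch is taken,
\[
\sum_{e \in \delta^{k, \mk}_i} \bar x^\new_e p_e \;=\; \sum_{e \in \delta^{k, \mk}_i} \bar x^\old_e p_e \;+\; \lambda \sum_{e \in \delta^{k, \mk}_i} a_e p_e.
\]
So it suffices to show that the second sum vanishes. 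This is exactly the content of Claim~\ref{claim:properties-of-a}(b): since $|\delta^{k, \mk}_i \cap \bar E| \geq 2$ at the beginning of the iteration, we have $\sum_{e \in \delta^{k, \mk}_i} a_e p_e = 0$. (Edges $e \in \delta^{k, \mk}_i \setminus \bar E$ contribute nothing, since $a$ is supported on the structure, which lies in $\bar E$, and in fact $\bar x^\old_e$ is irrelevant for such $e$ as well.) Plugging this in gives the claimed equality with probability $1$, which is stronger than (and implies) any in-expectation version.

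I would carry the argument out in that order: first note the affine form of the update, second invoke Claim~\ref{claim:properties-of-a}(b) to kill the correction term, third conclude. The only subtlety worth a sentence is making sure the hypothesis of Claim~\ref{claim:properties-of-a}(b) — that $|\delta^{k, \mk}_i \cap \bar E| \geq 2$ \emph{at the beginning of the iteration} — matches the hypothesis of this lemma verbatim, which it does; this is why the lemma must be phrased with the ``at the beginning of the iteration'' qualifier, since once the rounding step can drop the marked degree to $1$, the invariant is no longer needed (and in fact a class-$k$ job has size at least $\beta\rho^k$, so a single surviving marked edge is already consistent with $\vol(\delta^{k,\mk}_i) \le \beta\rho^k$).

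There is essentially no obstacle here — the lemma is a one-line corollary of the construction of $a$ together with Claim~\ref{claim:properties-of-a}. The real work was already front-loaded into defining $a$ so that $\sum_{e \in \delta^{k,\mk}_i} a_e p_e = 0$ for every machine with at least two incident marked edges in the structure (the defining relations $a_{j_oi_o}p_{j_o} + a_{i_oj_{o+1}}p_{j_{o+1}} = 0$ along the path/cycle) and into the case analysis of Claim~\ref{claim:properties-of-a}(b) handling the endpoint machines $i_0, i_t$ of a pseudo-marked-path. Given those, the present lemma is immediate.
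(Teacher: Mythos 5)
Your proposal is correct and matches the paper's own proof exactly: the paper also observes that $\bar x^\new - \bar x^\old$ is $\theta a$ or $-\theta' a$ and then invokes the second statement of Claim~\ref{claim:properties-of-a} to kill the correction term. No further comment is needed.
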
 
	
	\begin{proof}
		This follows from the second statement of Claim~\ref{claim:properties-of-a}, and that $\bar x^\new - \bar x^\old$ is $\theta a$ or $-\theta' a$.
	\end{proof}

	\begin{lemma}
		\label{lemma:cross-volume-decreases}
		For a machine $i \in M$, an edge $e \in \delta^{k, \umk}_i$, we have 
		\begin{align*}
			{\textstyle \E'}\Bigg[\sum_{e' \in \delta^{k, \mk}_i}\bar x^\new_e \bar x^\new_{e'} p_{e'}\Bigg] \leq \sum_{e' \in \delta^{k, \mk}_i}\bar x^\old_e \bar x^\old_{e'} p_{e'}.
		\end{align*}
	\end{lemma}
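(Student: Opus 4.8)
The quantity we want to control is $\sum_{e' \in \delta^{k,\mk}_i} \bar x_e \bar x_{e'} p_{e'} = \bar x_e \cdot \big(\sum_{e' \in \delta^{k,\mk}_i} \bar x_{e'} p_{e'}\big)$, i.e. the product of the $\bar x$-value of the unmarked edge $e$ and the total marked volume at $i$. So the plan is to split into the two cases governed by Lemma \ref{lemma:mark-vol-no-change}.

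*Case 1: $|\delta^{k,\mk}_i \cap \bar E| \ge 2$ at the start of the iteration.* Then by Lemma \ref{lemma:mark-vol-no-change} the sum $S := \sum_{e' \in \delta^{k,\mk}_i} \bar x_{e'} p_{e'}$ is a deterministic constant across the iteration, $S^\new = S^\old$. Hence $\E'\big[\bar x^\new_e \cdot S^\new\big] = S^\old \cdot \E'[\bar x^\new_e] = S^\old \cdot \bar x^\old_e$ by Claim \ref{claim:bar-x-marginal}, and the inequality holds with equality. I would also double-check the edge case where $e$ itself is one of the (parallel) edges counted — but $e$ is unmarked and the sum is over marked edges, so there is no overlap and nothing to worry about.

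*Case 2: $|\delta^{k,\mk}_i \cap \bar E| \le 1$ at the start of the iteration.* Here Lemma \ref{lemma:mark-vol-no-change} need not apply, so $S$ can change. If $\delta^{k,\mk}_i \cap \bar E = \emptyset$ then $S^\old = 0$ and $S^\new = 0$ (the iteration only moves mass along edges already in $\bar E$, and in particular never creates a marked edge at $i$), so both sides are $0$. If $|\delta^{k,\mk}_i \cap \bar E| = 1$, say the unique marked edge is $f$, then $S = \bar x_f p_f$ and we must bound $\E'[\bar x^\new_e \bar x^\new_f] \, p_f$ against $\bar x^\old_e \bar x^\old_f \, p_f$; it suffices to show $\E'[\bar x^\new_e \bar x^\new_f] \le \bar x^\old_e \bar x^\old_f$. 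The structure found in the iteration is a cycle (then $a_e = a_f = 0$, trivial) or a pseudo-marked-path $(i_0, j_1, i_1, \cdots, j_t, i_t)$. Since $f$ is the unique marked edge of $i$ in $\bar E$, $i$ can be incident to the path only as an endpoint machine $i_0$ or $i_t$, and — matching the logic in the proof of Claim \ref{claim:non-positive-corr} — if the path-edge at that endpoint is marked it must be $f$ itself, while the companion edge $i_0 j_1$ (or $j_t i_t$) is then forced elsewhere; the remaining possibility is that the path-edge at $i$ is the unmarked edge $e$. Enumerating these few sub-cases: if neither $e$ nor $f$ lies on the path, apply Claim \ref{claim:bar-x-marginal} to each factor; if exactly one of them lies on the path, the other is unchanged and again Claim \ref{claim:bar-x-marginal} on the moving factor gives equality; if both $e$ and $f$ lie on the path, they are the two endpoint edges $i_0 j_1$ and $j_t i_t$ with $i_0 = i_t = i$, so $(\bar x^\new_e - \bar x^\old_e)(\bar x^\new_f - \bar x^\old_f) < 0$ with probability $1$ (the two endpoint edges move in opposite directions along the path), and combined with the two marginal identities this yields $\E'[\bar x^\new_e \bar x^\new_f] < \bar x^\old_e \bar x^\old_f$ — essentially a verbatim reuse of the Claim \ref{claim:non-positive-corr} argument.

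The one place that needs genuine care, and which I expect to be the main obstacle, is Case 2 with $|\delta^{k,\mk}_i \cap \bar E| = 1$: one must be sure the definition of a pseudo-marked-path really does forbid a marked endpoint edge at $i$ other than $f$, and that $e$ and $f$ cannot both appear on the path in any configuration other than the two-endpoint one — i.e. that a marked edge and an unmarked edge at the same machine $i$ cannot simultaneously be interior/endpoint in a way that makes them move with the same sign. This is exactly the combinatorial bookkeeping already done for Claim \ref{claim:non-positive-corr} and Claim \ref{claim:properties-of-a}, so I would state Case 1 in one line, then in Case 2 reduce to "$\E'[\bar x^\new_e \bar x^\new_f] \le \bar x^\old_e \bar x^\old_f$" and invoke the same case analysis as in Claim \ref{claim:non-positive-corr}, pointing out that the roles of "two unmarked edges" there are played here by "one unmarked edge $e$ and one marked edge $f$", which does not affect any step of that argument.
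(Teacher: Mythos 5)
Your proposal is correct and follows essentially the same route as the paper: split on whether $|\delta^{k,\mk}_i \cap \bar E| \geq 2$ (where Lemma~\ref{lemma:mark-vol-no-change} makes the marked volume at $i$ deterministic and Claim~\ref{claim:bar-x-marginal} finishes) or not, and in the remaining case reduce to $\E'[\bar x^\new_e \bar x^\new_f] \leq \bar x^\old_e \bar x^\old_f$ via the same endpoint-edges-move-in-opposite-directions argument used for Claim~\ref{claim:non-positive-corr}. The only cosmetic differences are that the paper dispatches the cycle case up front via $a_e = 0$ and does not bother to write out the trivial $|\delta^{k,\mk}_i \cap \bar E| = 0$ subcase, which you handle explicitly.
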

	\begin{proof}
		If the structure we found in the iteration is  a cycle, then $a_e = 0$ as $e$ is unmarked. The lemma follows from Claim~\ref{claim:bar-x-marginal}.  So, we assume the structure we found is a pseudo-marked-path. 
		
		By Lemma~\ref{lemma:mark-vol-no-change}, if at the beginning of the iteration, $|\delta^{k, \mk}_i \cap \bar E| \geq 2$,  then $\sum_{e' \in \delta^{k, \mk}_i} \bar x^\new_{e'} p_{e'} = \sum_{e' \in \delta^{k, \mk}_i} \bar x^\old_{e'} p_{e'}$ always holds.
		\begin{align*}
			{\textstyle \E'}\Bigg[\sum_{e' \in \delta^{k, \mk}_i}\bar x^\new_e \bar x^\new_{e'} p_{e'}\Bigg] &= {\textstyle \E'}\Bigg[\bar x^\new_e\sum_{e' \in \delta^{k, \mk}_i}\bar x^\new_{e'} p_{e'}\Bigg] = {\textstyle \E'} \Bigg[\bar x^\new_e \sum_{e' \in \delta^{k, \mk}_i} \bar x^\old_{e'} p_{e'}\Bigg] \\
			&=  {\textstyle \E'} \Big[\bar x^\new_e\Big] \sum_{e' \in \delta^{k, \mk}_i} \bar x^\old_{e'} p_{e'} = \bar x^{\old}_e\sum_{e' \in \delta^{k, \mk}_i} \bar x^\old_{e'} p_{e'}.
		\end{align*}
		
		Then we consider the case $ |\delta^{k, \mk}_i \cap \bar E| = 1$; assume the unique edge in $\delta^{k, \mk}_i \cap \bar E$ is $e'$.  If at most one of $e$ and $e'$ is on the pseudo-marked-path, the lemma follows from Claim~\ref{claim:bar-x-marginal}. If both of them are on the pseudo-marked-path, denoted as $(i_0, j_1, i_1, \cdots, j_t, i_t)$, then it must be the case that $i_0 = i_t$, and $e$ and $e'$ are $i_0j_1$ and $j_ti_t$. Then in the iteration, one of $a_e$ and $a_{e'}$ is positive and the other is negative. Then, $\E'[\bar x^\new_e \bar x^\new_{e'}] \leq \bar x^\old_e \bar x^\old_{e'}$, which implies the lemma. 
	\end{proof}
	
	By considering all iterations of Algorithm~\ref{alg:iter} and defining martingales (sub-martingales) appropriately, we can prove the following corollary.  Recall that $\widehat\E[\cdot] = \E[\cdot | \beta]$. 
	\begin{corollary}
		\label{coro:algo-iter}
		Let $\bar x$ and $\bar E$ be the $\bar x$ and $\bar E$ at the end of Algorithm~\ref{alg:iter}. The following statements hold: 
		\begin{enumerate}[label=(\ref{coro:algo-iter}\alph*),leftmargin=*]
			\item For every $e \in E^k$, we have $\widehat\E[\bar x_e] = x_e$. 
			\item \label{property:j-x-sum-1} For every $j \in J_k$, we have $\sum _{e \in \delta_j} \bar x_e = \sum_{e \in \delta_j}x_e = 1$.  
			\item \label{property:umk-non-positive-cor} For a machine $i$ and two distinct edges $e, e'\in \delta^{k, \umk}_i$, we have $\widehat \E[\bar x_e\bar x_{e'}] \leq x_e x_{e'}$.
			\item \label{property:marked-2} If $|\delta^{k, \mk}_i \cap \bar E| \geq 2$ for some $i \in M$, then $\sum_{e \in \delta^{k, \mk}_i} \bar x_e p_e = \sum_{e \in \delta^{k, \mk}_i} x_e p_e$.
			\item \label{property:umarked-marked} For a machine $i \in M$, an edge $e \in \delta^{k, \umk}_i$, we have $\widehat \E\big[\sum_{e' \in \delta^{k, \mk}_i} \bar x_e \bar x_{e'} p_{e'}\big] \leq \sum_{e' \in \delta^{k, \mk}_i} x_e x_{e'} p_{e'}$.
		\end{enumerate}
	\end{corollary}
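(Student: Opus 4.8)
The plan is to regard the sequence of vectors produced by the iterations of Algorithm~\ref{alg:iter} as a stochastic process of bounded length and to lift the per-iteration relations of Claims~\ref{claim:bar-x-marginal}--\ref{claim:non-positive-corr} and Lemmas~\ref{lemma:mark-vol-no-change}--\ref{lemma:cross-volume-decreases} to global statements by iterated expectation. Fix the random choice $\beta$, which fixes $J_k$, $G$ and $x$; all expectations below are $\widehat\E$, i.e.\ over the randomness of Algorithm~\ref{alg:iter} for class $k$. Since each iteration of the while loop removes at least one edge, the loop performs at most $|E_k|$ iterations; pad the run so that, after the loop breaks, each of the remaining virtual iterations simply sets $\bar x^\new := \bar x^\old$, which satisfies every per-iteration relation with equality. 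Write $\bar x^{(0)} = x^{(k)}, \bar x^{(1)}, \dots, \bar x^{(T)}$ with $T = |E_k|$ for the resulting sequence, adopting the convention that a coordinate $\bar x_e$ stays $0$ for all later $t$ once $e$ has been removed; this is consistent because the vector $a$ defined in Step~\ref{step:iter-define-a} is supported on the current structure in $\bar G$ and hence vanishes on removed edges. Let $(\mathcal F_t)_t$ be the natural filtration of the process. Each cited relation is of the form $\E[\Phi(\bar x^{(t)}) \mid \mathcal F_{t-1}] \le \Phi(\bar x^{(t-1)})$ (or with equality) for the appropriate functional $\Phi$, and it holds for all the relevant $e, e', i$ regardless of whether these edges still lie in $\bar E$, since a removed edge contributes $0$ to both sides.

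The first item and parts~(\ref{property:j-x-sum-1}), (\ref{property:umk-non-positive-cor}), (\ref{property:umarked-marked}) now follow by composing the corresponding per-iteration relations. By Claim~\ref{claim:bar-x-marginal}, $(\bar x_e^{(t)})_t$ is a martingale, so the first statement $\widehat\E[\bar x_e] = x_e$ holds. By Claim~\ref{claim:bar-x-sum-1}, $\sum_{e \in \delta_j} \bar x_e^{(t)}$ does not depend on $t$, so it equals $\sum_{e \in \delta_j} x_e = 1$ pathwise (the latter equality was noted right before Claim~\ref{claim:vol-marked-k-i}), giving part~(\ref{property:j-x-sum-1}). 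By Claim~\ref{claim:non-positive-corr}, for distinct $e, e' \in \delta^{k,\umk}_i$ the process $(\bar x_e^{(t)} \bar x_{e'}^{(t)})_t$ is a supermartingale, so $\widehat\E[\bar x_e \bar x_{e'}] \le x_e x_{e'}$, which is part~(\ref{property:umk-non-positive-cor}). By Lemma~\ref{lemma:cross-volume-decreases}, for $e \in \delta^{k,\umk}_i$ the process $\big(\sum_{e' \in \delta^{k,\mk}_i} \bar x_e^{(t)} \bar x_{e'}^{(t)} p_{e'}\big)_t$ is a supermartingale, so its time-$T$ expectation is at most its time-$0$ value $\sum_{e' \in \delta^{k,\mk}_i} x_e x_{e'} p_{e'}$, which is part~(\ref{property:umarked-marked}).

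Part~(\ref{property:marked-2}) is a pathwise assertion, so rather than taking expectations I would use monotonicity. Because edges are only ever removed from $\bar E$, the count $|\delta^{k,\mk}_i \cap \bar E|$ is non-increasing over the iterations; hence on the event that it is at least $2$ at termination, it was at least $2$ at the beginning of every iteration, and Lemma~\ref{lemma:mark-vol-no-change} then gives $\sum_{e \in \delta^{k,\mk}_i} \bar x_e^{(t)} p_e = \sum_{e \in \delta^{k,\mk}_i} \bar x_e^{(t-1)} p_e$ for every $t$, so the final marked volume at $i$ equals $\sum_{e \in \delta^{k,\mk}_i} x_e p_e$. I do not expect a genuine obstacle, since all the substance is already in the per-iteration lemmas; the two points that need care are setting the process to a deterministic length by padding and treating removed edges consistently, and, for part~(\ref{property:marked-2}), exploiting the monotonicity of $|\delta^{k,\mk}_i \cap \bar E|$ so that the hypothesis ``at least two marked edges at $i$ at the end'' propagates backward to every iteration --- precisely the condition under which Lemma~\ref{lemma:mark-vol-no-change} applies.
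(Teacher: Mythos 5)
Your proposal is correct and matches the paper's intended argument: the paper's proof of this corollary is a one-line appeal to the per-iteration claims together with the remark that one should "define martingales (sub-martingales) appropriately," and your write-up (padding to a deterministic number of iterations, the martingale/supermartingale compositions for parts (a), (c), (e), the pathwise invariance for (b), and the backward propagation of the hypothesis via monotonicity of $|\delta^{k,\mk}_i \cap \bar E|$ for (d)) is exactly the standard elaboration of that sketch.
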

	
	\begin{proof}
		The 5 statements respectively follow Claims~\ref{claim:bar-x-marginal},  \ref{claim:bar-x-sum-1}, and  \ref{claim:non-positive-corr}, and Lemmas~\ref{lemma:mark-vol-no-change} and \ref{lemma:cross-volume-decreases}. 
	\end{proof}
	

	\subsection{Algorithm~\ref{alg:iter} Returns an Integral Assignment}
		In this section we prove the following lemma, for a fixed job class $k$:
		\begin{lemma}
			\label{lemma:bar-x-integral}
			When Algorithm~\ref{alg:iter} terminates, we have $\bar x \in \{0, 1\}^{E_k}$. 
		\end{lemma}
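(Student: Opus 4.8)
The plan is a proof by contradiction. By property~(\ref{property:j-x-sum-1}) of Corollary~\ref{coro:algo-iter}, at termination $\sum_{e\in\delta_j}\bar x_e=1$ for every $j\in J_k$, and (as maintained throughout Algorithm~\ref{alg:iter}) $\bar E$ is exactly the support of $\bar x$; hence $\bar x\in\{0,1\}^{E_k}$ is equivalent to every job having exactly one incident edge in $\bar G=(M,J_k,\bar E)$. So suppose some job $j^*$ has $\deg_{\bar G}(j^*)\ge 2$. Since the algorithm stopped, $\bar G$ contains no simple cycle of marked edges, so the subgraph $F$ of $\bar G$ on the marked edges is a forest, and $\bar G$ contains no pseudo-marked-path.

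I would first record two facts. \emph{(i) Every job has at most one incident unmarked edge in $\bar E$:} two of them, $i_0j$ and $ji_1$, would be the pseudo-marked-path $(i_0,j,i_1)$ with $t=1$ (its middle sub-path is the single job $j$, and both end edges are unmarked), contradicting termination. In particular $j^*$ must have an incident marked edge, hence lies in a tree $T$ of $F$ with $|V(T)|\ge 2$. \emph{(ii) If a machine $i$ has $|\delta^{k,\mk}_i\cap\bar E|\ge 2$, then $\bar x_e<1$ for every marked edge $e$ incident to $i$:} by property~(\ref{property:marked-2}) and Claim~\ref{claim:vol-marked-k-i}, $\sum_{e'\in\delta^{k,\mk}_i\cap\bar E}\bar x_{e'}p_{e'}=\min\{\vol(\delta^k_i),\beta\rho^k\}\le\beta\rho^k$, and some other marked edge at $i$ contributes a positive amount, so $\bar x_ep_e<\beta\rho^k\le p_e$ because $p_e\ge\beta\rho^k$ for a class-$k$ job.

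The heart of the proof is to show $T$ has at most one leaf, which contradicts the fact that a tree on $\ge 2$ vertices has at least two leaves. Call a leaf $\ell$ of $T$ \emph{special} if it is a machine, or it is a job incident to an unmarked edge of $\bar E$ (equivalently, by (i), a job of $\bar G$-degree $2$); the only non-special leaves are jobs of $\bar G$-degree $1$. \emph{First, $T$ has no leaf that is a job of $\bar G$-degree $1$:} such a job $j$ has a unique, necessarily marked, incident edge $e=ij$, with $\bar x_e=1$; by (ii) this forces $|\delta^{k,\mk}_i\cap\bar E|=1$, so $i$ too is a leaf of $F$ with sole $F$-neighbour $j$, whence $T$ consists of the single edge $ij$ and $j^*=j$, contradicting $\deg_{\bar G}(j^*)\ge 2$. \emph{Second, $T$ has at most one special leaf:} given two special leaves $\ell\neq\ell'$, construct a pseudo-marked-path whose middle sub-path is the unique $T$-path between the ``anchor jobs'' of $\ell$ and $\ell'$ — the anchor of a machine leaf is its unique $T$-neighbour, attached via the machine's unique marked edge (``the only edge of $\delta^{k,\mk}_{\cdot}\cap\bar E$''), and the anchor of a job leaf is the job itself, attached via its unmarked edge. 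This middle sub-path is simple and entirely marked since it lies in $T\subseteq F$, and the two appended edges are distinct from it and from each other (immediate if they are unmarked; a marked appended edge hangs off a $T$-degree-$1$ machine, which does not lie on the middle sub-path). This is a pseudo-marked-path, contradicting termination. Combining the two bullets, $T$ has at most one leaf, the desired contradiction; hence $\bar x\in\{0,1\}^{E_k}$.

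The main obstacle — and the genuinely load-bearing step rather than bookkeeping — is fact (ii): the no-forbidden-substructure conditions alone are \emph{not} sufficient, since a tree all of whose leaf-jobs have $\bar G$-degree $1$ admits no pseudo-marked-path yet could a priori carry a fractional $\bar x$; it is exactly the volume invariant of property~(\ref{property:marked-2}), together with the choice of $\beta\rho^k$ as the marking threshold (which guarantees $p_j\ge\beta\rho^k$ for every class-$k$ job), that rules this out and collapses $T$ to a single edge in the first bullet. The remaining effort is the routine verification that the degenerate configurations in the ``at most one special leaf'' argument (e.g.\ $t=1$, two special leaves sharing an anchor job, or a machine leaf whose neighbour is itself a leaf of $T$) still yield a bona fide pseudo-marked-path with all $2t$ edges distinct.
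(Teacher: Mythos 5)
Your proof is correct, and it rests on exactly the same two pillars as the paper's: the marked edges form a forest with no pseudo-marked paths, and Property~\ref{property:marked-2} together with Claim~\ref{claim:vol-marked-k-i} and $p_j \geq \beta\rho^k$ forces any machine with two or more surviving marked edges to have all of them fractional. The assembly differs in one genuine way: the paper cases on whether the jobs of a marked tree $T$ touch zero or one unmarked edge, and in the latter case invokes a mass-counting argument ($|J_T|\ge|M_T|+1$ while each machine can absorb at most one unit of marked volume); you instead classify the leaves of $T$ and derive the contradiction from the fact that a tree on at least two vertices has at least two leaves, which subsumes that counting step. This buys a slightly more uniform argument (all the pseudo-marked-path constructions are packaged into the single claim that $T$ has at most one special leaf), at the cost of the routine distinctness checks you flag at the end, which do all go through.
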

		Combined with Property~\ref{property:j-x-sum-1}, we have $\bar x$ gives an assignment of $J_k$ to $M$. Our final assignment $\phi$ can be constructed by considering all job classes $k$. This finishes the description of the whole algorithm.
		\begin{proof}[Proof of Lemma~\ref{lemma:bar-x-integral}]
			When the algorithm terminates, we do not have a cycle of marked edges, or a pseudo-marked-path in $\bar E$.  
			So $(M, J_k, E^\mk \cap \bar E)$ is a forest.  We focus on a non-singleton tree $T$ of marked edges in the forest.  Let $J_T$ be the set of jobs in the tree $T$, and $M_T$ be the set of machines. Notice that $J_T$ is incident to at most 1 unmarked edge in $\bar E$: If there are two unmarked edges $ji$ and $j'i'$ in $\bar E$ with $j, j' \in J_T$, then we can take the path from $j$ to $j'$ in $T$, and concatenate it with $ij$ at the beginning and with $j'i'$ at the end. This would give us a pseudo-marked-path. 
			
			First consider the case where there are no unmarked edges incident to $J_T$ in $\bar G$.  Notice that $T$ contains at least 2 leaves. If it contains 2 leaf machines, then the path in $T$ between the two machines would be a pseudo-marked-path. Otherwise, $T$ contains at least one leaf-job $j \in J_k$. Assume $j$ is incident to $i \in M$ in $T$. If $i$ has degree at least $2$ in $T$, then $\sum_{e \in \delta^{k, \mk}_i} \bar x_e p_e = \sum_{e \in \delta^{k, \mk}_i} x_e p_e  = \vol(\delta^{k, \mk}_i) \leq \beta\rho^k$ by Property~\ref{property:marked-2} and Claim~\ref{claim:vol-marked-k-i}. Notice that $p_j \geq \beta \rho^k$ and $ij$ is the only edge incident to $j$ in $\bar E$, which implies $\bar x_{ij} = 1$.  This contradicts that $i$ has degree at least 2 in $T$. So $i$ has degree 1 in $T$ and $T$ consists of the single edge $ij$ with $x_{ij} = 1$. 
						
			Consider the second case where there is exactly one unmarked edge incident to $J_T$ in $\bar G$.  Assume the edge is $e$ and it is incident to $j \in J_T$.   If there is one leaf machine in $T$, then concatenating the path between the leaf machine and $j$ in $T$ and $e$ would give us a pseudo-marked-path.  So $T$ does not have a leaf machine, and this implies $|J_T| \geq |M_T|+1$.  Then again by Property~\ref{property:marked-2} and Claim~\ref{claim:vol-marked-k-i} we have $\sum_{e \in \delta^{k, \mk}_i} \bar x_e p_e \leq \beta\rho^k$ for every machine $i \in M_T$.  Every job in $J_T$ has size at least $\beta \rho^k$, and $J_T$ is incident to only 1 unmarked edge. This can only happen if $T$ is the singleton $(\{j\}, \emptyset)$.
			
			So, all the marked edges $e \in E_k$ have $\bar x_e \in \{0, 1\}$. As every $j \in J_k$ can be incident to at most 1 unmarked edge in $\bar E$ in the end, all the unmarked edges $e \in E^k$ also have $\bar x_e \in \{0, 1\}$. 
		\end{proof}

		\begin{lemma}
			\label{lemma:only-1-in-mkki}
			For two distinct edges $e, e' \in \delta^{i,\mk}_k$, we have $\bar x_e \bar x_{e'} = 0$ at the end of the algorithm.
		\end{lemma}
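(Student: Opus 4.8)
The plan is to argue by contradiction, using the volume-freezing property of marked edges (Property~\ref{property:marked-2}) together with the fact that the threshold $\beta\rho^k$ was chosen to be a lower bound on the size of a class-$k$ job.

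First I would invoke Lemma~\ref{lemma:bar-x-integral}: at termination $\bar x \in \{0,1\}^{E_k}$, so $\bar x_e\bar x_{e'} = 0$ fails only if two distinct edges $e, e' \in \delta^{k,\mk}_i$ satisfy $\bar x_e = \bar x_{e'} = 1$. Assume this happens. Then $e$ and $e'$ both lie in the support $\bar E$ of $\bar x$, so $|\delta^{k,\mk}_i \cap \bar E| \ge 2$ at the end of Algorithm~\ref{alg:iter}.

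Next I would apply Property~\ref{property:marked-2} of Corollary~\ref{coro:algo-iter}: since $|\delta^{k,\mk}_i \cap \bar E| \ge 2$, we have
\[
\sum_{e'' \in \delta^{k,\mk}_i} \bar x_{e''} p_{e''} \;=\; \sum_{e'' \in \delta^{k,\mk}_i} x_{e''} p_{e''} \;=\; \vol(\delta^{k,\mk}_i) \;=\; \min\{\vol(\delta^k_i),\, \beta\rho^k\} \;\le\; \beta\rho^k,
\]
where the third equality is Claim~\ref{claim:vol-marked-k-i}. On the other hand, $e, e' \in E_k$ are incident to class-$k$ jobs, each of size at least $\beta\rho^k$, so the left-hand side is at least $\bar x_e p_e + \bar x_{e'} p_{e'} = p_e + p_{e'} \ge 2\beta\rho^k$. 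Since $\beta\rho^k > 0$, this contradicts the displayed bound. Hence at most one of $\bar x_e, \bar x_{e'}$ equals $1$, and as both are in $\{0,1\}$ their product is $0$.

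\textbf{Main obstacle.} There is essentially no obstacle once Lemma~\ref{lemma:bar-x-integral} and Property~\ref{property:marked-2} are available; the only point requiring care is the logical chain tying them together — Property~\ref{property:marked-2} freezes the marked volume at machine $i$ exactly as long as two marked edges survive, which forces the final surviving marked volume to be at most $\beta\rho^k$, and the deliberate choice of $\beta\rho^k$ as a lower bound on class-$k$ job sizes then makes the coexistence of two unit-weight marked edges impossible.
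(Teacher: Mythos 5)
Your proof is correct and follows essentially the same route as the paper: assume both edges survive with value $1$, invoke the integrality of the final $\bar x$, use Property~\ref{property:marked-2} together with Claim~\ref{claim:vol-marked-k-i} to cap the surviving marked volume at $\beta\rho^k$, and contradict this with the fact that each class-$k$ job has size at least $\beta\rho^k$. Your version is, if anything, slightly more careful than the paper's in explicitly verifying the hypothesis $|\delta^{k,\mk}_i \cap \bar E| \ge 2$ before applying Property~\ref{property:marked-2}.
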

		\begin{proof}
			By Lemma~\ref{claim:bar-x-marginal} we have $\bar x_e \in \{0, 1\}$ and $\bar x_{e'} \in \{0, 1\}$ at the end of the algorithm. 
			Assume towards the contradiction that $\bar x_e = 1$ and $\bar x_{e'} = 1$ can happen with positive probability. Then by Property~\ref{property:marked-2} and Claim~\ref{claim:vol-marked-k-i}, we have $\sum_{e'' \in \delta^{i, \mk}_k}\bar x_{e''} p_{e''}  = \sum_{e'' \in \delta^{i, \mk}_k}x_{e''} p_{e''}  \leq \beta \rho^k$. This contradicts that $\bar x_e = 1, \bar x_{e'} = 1, p_e \geq \beta\rho^k$ and $p_{e'} \geq \beta \rho^k$.  
		\end{proof}
	
\section{An Upper Bound on Weighted Completion Time on Machine $i$} \label{sec:bound-wC}
	In this section we fix a machine $i$,  rewrite the LP cost on $i$, and prove an upper bound on the expected weighted completion time on $i$ in our solution.  Then in Section~\ref{sec:ratio}, we analyze the ratio between the two quantities.  As the edges $E$ and the vector $x \in (0, 1]^E$ depend on the random choice $\beta$,  in Section~\ref{sec:ratio} it will be more convenient to express the formulations using jobs $J$ and the $z$-vector. So,  we define $\sigma_j := \frac{w_{ij}}{p_j}$ to be the Smith ratio of $j$ on machine $i$ for every $j \in J$. We index $j$ as $[n]$ so that $\sigma_1 \geq \sigma_2 \geq \cdots \geq \sigma_n$. Define $\sigma_{n+1} = 0$ for convenience.  
	
	We first rewrite the LP cost on $i$:
	\begin{lemma}
		\label{lemma:LP-cost-for-i}
		We have 
		\begin{align}
			\sum_f y_{if} \cdot \cost_i(f) = \sum_{j^*\in [n]} (\sigma_{j^*} - \sigma_{j^*+1}) \cdot \frac12\left(\sum_{j  \in [j^*]} z_{ij} p_j^2 + \sum_{f} y_{if} \cdot p^2(f \cap [j^*])\right). \label{equ:LP-cost-for-i}
		\end{align}
	\end{lemma}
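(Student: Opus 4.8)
The plan is to reduce the statement to a per-configuration identity and then exchange the order of summation, using only the definition $z_{ij}=\sum_{f\ni j}y_{if}$ together with the telescoping $\sigma_j=\sum_{j^*=j}^{n}(\sigma_{j^*}-\sigma_{j^*+1})$, which is valid since $\sigma_{n+1}=0$. First I would expand $\cost_i(f)$ for a single configuration $f\subseteq J$. On machine $i$ the Smith rule processes the jobs of $f$ in non-increasing order of $\sigma_j=\frac{w_{ij}}{p_j}$, and after fixing the tie-breaking used to define the indexing $1,2,\dots,n$, this is exactly the order induced by $[n]$ (any consistent tie-breaking gives the same cost, since $\min\{w_{ij}p_{j'},w_{ij'}p_j\}$ is symmetric in $j,j'$). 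Hence the completion time of $j\in f$ equals $p(f\cap[j])$, so that
\[
\cost_i(f)=\sum_{j\in f}w_{ij}\cdot p(f\cap[j])=\sum_{j\in f}\sigma_j\,p_j\cdot p(f\cap[j]),
\]
where the second equality uses $w_{ij}=\sigma_j p_j$.

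Next I would substitute $\sigma_j=\sum_{j^*=j}^{n}(\sigma_{j^*}-\sigma_{j^*+1})$ and swap the two sums, obtaining $\cost_i(f)=\sum_{j^*\in[n]}(\sigma_{j^*}-\sigma_{j^*+1})\sum_{j\in f\cap[j^*]}p_j\,p(f\cap[j])$. For a fixed $j^*$, set $S:=f\cap[j^*]$ and note $f\cap[j]=S\cap[j]$ for every $j\le j^*$, so the inner sum is $\sum_{j,j'\in S:\,j'\le j}p_jp_{j'}$. Splitting the square $p(S)^2=\sum_{j,j'\in S}p_jp_{j'}$ into its diagonal and off-diagonal parts yields the elementary identity $\sum_{j,j'\in S:\,j'\le j}p_jp_{j'}=\tfrac12\big(\sum_{j\in S}p_j^2+p(S)^2\big)$. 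Plugging this in gives the per-configuration form $\cost_i(f)=\sum_{j^*\in[n]}(\sigma_{j^*}-\sigma_{j^*+1})\cdot\tfrac12\big(\sum_{j\in f\cap[j^*]}p_j^2+p^2(f\cap[j^*])\big)$.

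Finally I would multiply by $y_{if}$ and sum over all $f$, pushing the $f$-sum inside. The term $\sum_f y_{if}\sum_{j\in f\cap[j^*]}p_j^2$ becomes $\sum_{j\in[j^*]}p_j^2\sum_{f\ni j}y_{if}=\sum_{j\in[j^*]}z_{ij}p_j^2$ by the definition of $z_{ij}$, while $\sum_f y_{if}\,p^2(f\cap[j^*])$ is left as is; this is precisely \eqref{equ:LP-cost-for-i}. I do not expect a genuine obstacle here: the statement is a telescoping-and-reindexing exercise. The only points needing a little care are matching the Smith-rule order with the $[n]$-indexing (ties being harmless by symmetry of the $\min$), and verifying the quadratic identity for $\sum_{j'\le j}p_jp_{j'}$; both are routine.
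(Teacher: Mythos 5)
Your proposal is correct and follows essentially the same route as the paper's proof: expand $\cost_i(f)$ as $\sum_{j',j\in f:\,j'\le j}\sigma_j p_j p_{j'}$, telescope $\sigma_j=\sum_{j^*\ge j}(\sigma_{j^*}-\sigma_{j^*+1})$, swap sums, apply the identity $\sum_{j'\le j}p_jp_{j'}=\tfrac12\bigl(\sum_j p_j^2+p(S)^2\bigr)$, and finish with $z_{ij}=\sum_{f\ni j}y_{if}$. Your extra remark about tie-breaking is a harmless added care point, not a departure from the paper's argument.
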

	\begin{proof} The left side of the equality is 
		\begin{align*}
			&\quad \sum_{f} y_{if} \sum_{j', j \in f: j' \leq j } \sigma_j p_j p_{j'} = \sum_{f} y_{if} \sum_{j', j \in f, j^* \in [n]: j' \leq j \leq j^*} (\sigma_{j^*} - \sigma_{j^*+1}) p_j p_{j'} \\
			&= \sum_{j^* \in [n]} (\sigma_{j^*} - \sigma_{j^*+1}) \sum_f y_{if}  \sum_{j' \leq j \leq j^*:j', j \in f} p_jp_{j'}\\
			&= \sum_{j^* \in [n]} (\sigma_{j^*} - \sigma_{j^*+1}) \sum_f y_{if}  \cdot \frac12\left( \sum_{j \in f \cap [j^*]} p_j^2 + \Big(\sum_{j  \in f \cap [j^*]}  p_j\Big)^2\right)\\
			&= \sum_{j^*\in [n]} (\sigma_{j^*} - \sigma_{j^*+1}) \cdot \frac12\left(\sum_{j  \in [j^*]} z_{ij} p_j^2 + \sum_{f} y_{if} \cdot p^2(f \cap [j^*])\right).
		\end{align*}
		The last equality used $\sum_{f \ni j} y_{if} = z_{ij}$ for every $j \in [n]$. 
	\end{proof}
	
	Till the end of the section, we fix the random choice $\beta \in [1, \rho)$. Let $\wC^{(i)}$ denote the total weighted completion time of jobs assigned to $i$ in the solution produced by our algorithm. Recall that $\widehat \E[\cdot]$ is $\E[\cdot|\beta]$.   For  a job $j \in [n]$, define $\vol(j) := z_{ij}p_j$. This is equal to $\sum_{e = ij} \vol(e)$. For a set $J' \subseteq [n]$ of jobs, define $\vol(J') = \sum_{j \in J'} \vol(j)$.   The main goal of this section is to prove the following lemma, that bounds the cost on $i$ in our solution: 
	\begin{lemma} \label{lemma:int-cost-on-i}
		The expected weighted completion time on machine $i$ conditioned on $\beta$ is 
		\begin{align}\label{inequ:int-cost-on-i}
			\widehat\E[\wC^{(i)}] \leq \sum_{j^* = 1}^n (\sigma_{j^*} - \sigma_{j^*+1}) \Bigg(\sum_{j \in [j^*]} z_{ij} p_j^2 + \frac12 \cdot \vol^2([j^*]) - \frac12 \sum_k \Big(\min\Big\{ \vol([j^*] \cap J_k), \beta\rho^k \Big\}\Big)^2 \Bigg).
		\end{align}		
	\end{lemma}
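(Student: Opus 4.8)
Let $S\subseteq J$ be the random set of jobs the algorithm assigns to $i$, so that $\wC^{(i)}=\cost_i(S)$. The plan is to first rewrite $\cost_i(S)$ by the same threshold trick used for the LP cost in Lemma~\ref{lemma:LP-cost-for-i}: writing $\sigma_j=\sum_{j^*\ge j}(\sigma_{j^*}-\sigma_{j^*+1})$ and grouping each ordered pair $j'\le j$ of jobs in $S$ according to the threshold $j^*$, one obtains
\[
\wC^{(i)}=\sum_{j^*=1}^{n}(\sigma_{j^*}-\sigma_{j^*+1})\cdot\tfrac12\Big(\textstyle\sum_{j\in S\cap[j^*]}p_j^2+p^2(S\cap[j^*])\Big).
\]
Taking $\widehat\E[\cdot]$ and using $\widehat\E[\bar x_e]=x_e$ (Corollary~\ref{coro:algo-iter}(a)), which gives $\Pr[j\in S]=\sum_{e=ij}x_e=z_{ij}$, the first inner sum contributes $\sum_{j\in[j^*]}z_{ij}p_j^2$. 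Hence it suffices to prove, for every $j^*$,
\[
\widehat\E\big[p^2(S\cap[j^*])\big]\ \le\ \sum_{j\in[j^*]}z_{ij}p_j^2+\vol^2([j^*])-\sum_k\big(\min\{\vol([j^*]\cap J_k),\beta\rho^k\}\big)^2 ,
\]
since substituting this into the previous display and merging the two $\tfrac12$-weighted copies of $\sum_{j\in[j^*]}z_{ij}p_j^2$ reproduces \eqref{inequ:int-cost-on-i} exactly.

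For this per-prefix inequality I would decompose by class: $p(S\cap[j^*])=\sum_k Y_k$ with $Y_k:=p(S\cap[j^*]\cap J_k)=\sum_{e\in\delta^k_i\cap[j^*]}p_e\bar x_e$, where the last equality uses Property~\ref{property:j-x-sum-1} together with integrality, so that each job of $S$ is covered by exactly one selected edge. Because Algorithm~\ref{alg:iter} runs independently over classes, the $Y_k$ are mutually independent given $\beta$, so the off-class part of $\widehat\E[(\sum_k Y_k)^2]$ equals $\sum_{k\ne k'}\widehat\E[Y_k]\widehat\E[Y_{k'}]=\sum_{k\ne k'}V_kV_{k'}$ with $V_k:=\vol([j^*]\cap J_k)=\widehat\E[Y_k]$. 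For the within-class part I would expand $Y_k^2$, use $\bar x_e^2=\bar x_e$ (Lemma~\ref{lemma:bar-x-integral}) and the marginals to get $\widehat\E[Y_k^2]=\sum_e p_e^2x_e+\sum_{e\ne e'}p_ep_{e'}\widehat\E[\bar x_e\bar x_{e'}]$ with both ranges over $\delta^k_i\cap[j^*]$, the first sum being $\sum_{j\in[j^*]\cap J_k}z_{ij}p_j^2$.

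The heart of the proof is the bound $\sum_{e\ne e'}p_ep_{e'}\widehat\E[\bar x_e\bar x_{e'}]\le V_k^2-M_k^2$, where $M_k:=\min\{V_k,\beta\rho^k\}$. Split $\delta^k_i\cap[j^*]$ into its marked part $A$ and unmarked part $B$; since the marked edges form a Smith-prefix of $\delta^k_i$ of volume $\min\{\vol(\delta^k_i),\beta\rho^k\}$, we get $\vol(A)=M_k$, $\vol(B)=V_k-M_k=:U_k$, and the key structural fact that $B\ne\emptyset$ forces $A=\delta^{k,\mk}_i$ (hence $\vol(A)=\beta\rho^k=M_k$), because an unmarked edge of $\delta^k_i$ lies in $[j^*]$ only if every marked edge, which precedes it in Smith order, does too. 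Then the off-diagonal sum breaks into three pieces: ordered pairs inside $A$ give $0$ by Lemma~\ref{lemma:only-1-in-mkki}; ordered pairs inside $B$ give at most $\sum_{e\ne e'\in B}p_ep_{e'}x_ex_{e'}\le U_k^2$ by Property~\ref{property:umk-non-positive-cor}; and the two mixed orderings together give at most $2\sum_{e'\in B}p_{e'}\,\widehat\E\big[\bar x_{e'}\sum_{e\in\delta^{k,\mk}_i}p_e\bar x_e\big]\le 2\big(\sum_{e'\in B}p_{e'}x_{e'}\big)\vol(\delta^{k,\mk}_i)=2U_kM_k$ by Property~\ref{property:umarked-marked} applied with the unmarked edge $e'$ and the full set $\delta^{k,\mk}_i=A$. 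Adding these, the off-diagonal part is at most $U_k^2+2U_kM_k=(U_k+M_k)^2-M_k^2=V_k^2-M_k^2$, so $\widehat\E[Y_k^2]\le\sum_{j\in[j^*]\cap J_k}z_{ij}p_j^2+V_k^2-M_k^2$; summing over $k$, adding the cross terms, and using $\sum_k V_k=\vol([j^*])$ yields the per-prefix inequality.

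I expect the mixed-pair bound to be the only delicate step, and it is exactly where the relaxed non-positive correlation condition is used: there is no control on $\widehat\E[\bar x_e\bar x_{e'}]$ for a single marked $e$ against an unmarked $e'$, only on the aggregate $\widehat\E\big[\bar x_{e'}\sum_{e\in\delta^{k,\mk}_i}p_e\bar x_e\big]$ through Property~\ref{property:umarked-marked}, and making that property applicable over the full set $\delta^{k,\mk}_i$ hinges on the observation $B\ne\emptyset\Rightarrow A=\delta^{k,\mk}_i$. A secondary point to be careful about is fixing one Smith-ratio tie-breaking order and using it consistently both in defining the prefixes $[j^*]$ and in the marked/unmarked split of each $\delta^k_i$, and verifying that the at-most-one edge split into a parallel marked/unmarked pair is handled correctly in all of the sums above.
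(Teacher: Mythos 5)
Your proposal is correct and follows essentially the same route as the paper: both reduce, via the Smith-threshold decomposition, to bounding the second moment of the prefix load, and both handle the cross terms with the identical trio of facts (Lemma~\ref{lemma:only-1-in-mkki} for marked pairs, Property~\ref{property:umk-non-positive-cor} for unmarked pairs, and Property~\ref{property:umarked-marked} for the mixed pairs, made applicable by the observation that all marked edges of $\delta^k_i$ precede the unmarked ones). The only difference is organizational — you group by class and write the bound as $U_k^2+2U_kM_k=V_k^2-M_k^2$, while the paper sums over all edge pairs $e'\prec e\preceq e^*$ and completes the square at the end — so the two arguments are interchangeable.
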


	\begin{proof}
		In the proof, we use edges and $x$-vector instead of jobs and $z$-vector.  Recall that $\sigma_e = \frac{w_{ij}}{p_j}$ if $e$ is an edge between $i$ and $j$. We define a total order $\prec$ over the set $\delta_i$ of incident of edges of $i$ in $G$ using their order of creation in Algorithm~\ref{alg:construct-G}: $e \prec e'$ if $e$ is created before $e'$; this implies $\sigma_e \geq \sigma_{e'}$. In particular, for any two parallel edges $e$ and $e'$ with $e$ marked and $e'$ unmarked, we have $e \prec e'$. We define $e \preceq e'$ if $e \prec e'$ or $e = e'$.  For every $e \in \delta_i$, we define $\next(e)$ to be the edge after $e$ in the order $\prec$. For the last edge $e$ in the order, we define $\sigma_{\next(e)} = 0$ for convenience.  We use $[e]:=\{e' \in \delta_i: e' \preceq e\}$ to denote the set of edges before $e$ in the order $\prec$, including the edge $e$ itself.  
		
		With the notations defined, \eqref{inequ:int-cost-on-i} becomes
			\begin{align} \label{inequ:int-cost-on-i-using-edges}
				\widehat \E\big[\wC^{(i)}\big] \leq \sum_{e^* \in \delta_i}(\sigma_{e^*} - \sigma_{\next(e^*)})\Big(\sum_{e \preceq e^*} x_e p_e^2 + \frac12\cdot \vol^2([e^*]) - \frac12 \sum_{k} \big(\min\big\{\vol([e^*] \cap \delta^k_i), \beta\rho^k\big\}\big)^2 \Big).
			\end{align}  

		To see the equivalence of the two inequalities, we make two modifications to the right side \eqref{inequ:int-cost-on-i-using-edges}. First, merging two parallel edges $e$ and $e'$ into a new edge $e''$ with $x_{e''} = x_e + x_{e'}$ does not change the quantity. ($e$ and $e'$ are next to each other in the order $\prec$, and thus the new order can be naturally defined.)  This holds as $\sigma_e = \sigma_{e'}, p_e = p_{e'}$.   Then for any $j^* \in [n]$ with no edges between $i$ and $j$ in $G$, we add an dummy edge $e = ij$ of $p_e = p_J, \sigma_e = \frac{w_{ij}}{p_j}, x_e = 0$,  and insert it into the order $\prec$ naturally. The operation does not change \eqref{inequ:int-cost-on-i-using-edges}.  After the two operations, the right side of \eqref{inequ:int-cost-on-i-using-edges} becomes \eqref{inequ:int-cost-on-i}.  So now we focus on the proof of  \eqref{inequ:int-cost-on-i-using-edges}.   \medskip
		
		For every $k$ and every $e \in \delta^k_i$, we define $X_e$ to be the final value of $\bar x_e$ in Algorithm~\ref{alg:iter} for the job class $k$. By Lemma~\ref{lemma:bar-x-integral}, $X_e \in \{0, 1\}$. The total weighted completion time of jobs on machine $i$ is 
		\begin{align}
			\wC^{(i)}&= \sum_{e \in \delta_i} X_e  \sigma_e p_e \sum_{e' \preceq e} X_{e'} p_{e'} = \sum_{e \in \delta_i} X_e p_e \sum_{e^* \succeq e}  (\sigma_{e^*} - \sigma_{\next(e^*)}) \sum_{e' \preceq e} X_{e'} p_{e'} \nonumber\\
			&= \sum_{e^* \in \delta_i} (\sigma_{e^*} - \sigma_{\next(e^*)}) \sum_{e' \preceq e \preceq e^*} X_eX_{e'}p_e p_{e'} = \sum_{e^* \in \delta_i} (\sigma_{e^*} - \sigma_{\next(e^*)}) \left(\sum_{e \preceq e^*} X_e p_e^2 + \sum_{e' \prec e \preceq e^*} X_e X_{e'} p_e p_{e'}\right). \label{equ:break-by-e^*}
		\end{align}
		The last equality used that $X_e \in \{0, 1\}$, which implies $X_e^2 = X_e$, for every $e \in \delta_i$. We fix an $e^* \in \delta_i$ and upper bound  $\displaystyle \widehat\E\Big[\sum_{e' \prec e \preceq e^*} X_eX_{e'} p_ep_{e'}\Big]$.
		
		Focus on two edges $e'$ and $e$ with $e' \prec e \preceq e^*$.  First, if $e \in \delta^k_i$ and $e' \in \delta^{k'}_i$ for some $k \neq k'$, then we have $\widehat\E[X_eX_{e'}] = \widehat\E[X_e]\widehat\E[X_{e'}] = x_e x_{e'}$ as Algorithm~\ref{alg:iter} for job classes $k$ and $k'$ are independent. So we assume $e'$ and $e$ are both in $\delta^k_i$ for some $k$. If $e$ is marked, so is $e'$.  By Lemma~\ref{lemma:only-1-in-mkki}, $X_eX_{e'} = 0$ always happens conditioned on the given $\beta$. 
		If both $e$ and $e'$ are unmarked, we have  $\widehat \E[X_e X_{e'}] \leq x_e x_{e'}$ by Property~\ref{property:umk-non-positive-cor}. 
		
		For an edge $e \in [e^*]\cap \delta^{k, \umk}_i$, we have $\widehat \E\big[\sum_{e' \in \delta^{k, \mk}_i}X_e X_{e'} p_{e'}\big] \leq \sum_{e' \in \delta^{k, \mk}_i}x_e x_{e'} p_{e'}$ by Property~\ref{property:umarked-marked}.   Notice that $e' \prec e$ for every $e' \in \delta^{k, \mk}_i$.  So, for an edge $e \in [e^*] \cap \delta^{k, \umk}_i$, we have
		\begin{align*}
			\widehat \E \Bigg[\sum_{e' \prec e} X_e X_{e'}p_e p_{e'} \Bigg] &= \widehat \E \Bigg[\sum_{e' \in [e] \setminus \delta^k_i} X_e X_{e'}p_e p_{e'} \Bigg] + \widehat \E \Bigg[\sum_{e' \prec e: e' \in \delta^{k, \umk}_i} X_e X_{e'}p_e p_{e'} \Bigg] 
			+ \widehat \E \Bigg[\sum_{e' \in \delta^{k, \mk}_i} X_e X_{e'}p_e p_{e'} \Bigg]\\
			&\leq \sum_{e' \in [e] \setminus \delta^k_i} x_e x_{e'}p_e p_{e'} + \sum_{e' \prec e: e' \in \delta^{k, \umk}_i} x_e x_{e'}p_e p_{e'} + \sum_{e' \in \delta^{k, \mk}_i} x_e x_{e'}p_e p_{e'}\\
			&= \sum_{e' \prec e} x_e x_{e'}p_e p_{e'}.
		\end{align*}
	
		For an edge $e  \in [e^*] \cap \delta^{k, \mk}_i$,  we have 
		\begin{align*}
			\widehat \E\Bigg[\sum_{e' \prec e} X_eX_{e'} p_ep_{e'} \Bigg] &= \widehat \E\Bigg[\sum_{e' \in [e] \setminus \delta^k_i}X_eX_{e'} p_ep_{e'}\Bigg]  = \sum_{e' \in [e] \setminus \delta^k_i}x_ex_{e'} p_ep_{e'} = \sum_{e' \prec e} x_ex_{e'}p_e p_{e'} - \sum_{e' \prec e: e' \in \delta^k_i}x_ex_{e'}p_ep_{e'}.
		\end{align*}
		
		Therefore, for a fixed $e^* \in \delta_i$, we have:
		\begin{align*}
			&\quad \widehat \E\Bigg[\sum_{e' \prec e \preceq e^*} X_e X_{e'} p_e p_{e'}\Bigg] \leq \sum_{e' \prec e \preceq e^*} x_ex_{e'} p_e p_{e'} - \sum_k\sum_{e' \prec e \preceq e^*: e, e' \in \delta^{k, \mk}_i} x_ex_{e'}p_ep_{e'}\\
			&= \frac12\Big(\sum_{e \in [e^*]} x_ep_e\Big)^2- \frac12\sum_{e \in [e^*]} x_e^2p_e^2  - \sum_k\left(\frac12\Big(\sum_{e\in[e^*] \cap \delta^{k, \mk}_i} x_ep_e\Big)^2 - \frac12\sum_{e \in [e^*]\cap\delta^{k, \mk}_i} x_e^2p_e^2\right)\\
			&\leq \frac12\Big(\sum_{e \in [e^*]} x_ep_e\Big)^2 - \frac12\sum_k\Big(\sum_{e\in [e^*]\cap\delta^{k, \mk}_i} x_ep_e\Big)^2 = \frac12 \cdot  \vol^2([e^*]) - \frac12\sum_k\Big(\min\Big\{\vol\big([e^*] \cap \delta^k_i\big), \beta\rho^k\Big\}\Big)^2.
		\end{align*}
		The last equality follows from that we mark the first few edges of volume $\beta \rho^k$ in $\sigma^k_i$ in Algorithm~\ref{alg:construct-G} according to the order $\prec$. 
		
		Combining the inequality, \eqref{equ:break-by-e^*} and that $\widehat \E\big[\sum_{e \preceq e^*} X_e p_e^2\big] = \sum_{e \preceq e^*} x_e p_e^2$, we have 
		\begin{align*}
			\widehat \E[\wC^{(i)}] \leq \sum_{e^* \in \delta_i}(\sigma_{e^*} - \sigma_{\next(e^*)})\Bigg(\sum_{e \preceq e^*} x_e p_e^2 + \frac12\cdot \vol^2([e^*]) - \frac12 \sum_k \big(\min\big\{\vol([e^*] \cap \delta^k_i), \beta \rho^k\big\}\big)^2\Bigg), 
		\end{align*}
		which is precisely \eqref{inequ:int-cost-on-i-using-edges}. 
	\end{proof} \bigskip


\section{Analyzing Approximation Ratio with Computer Assistance} \label{sec:ratio}
	
	In this section,  we prove the $1.36$-approximation ratio of our algorithm for a fixed machine $i \in M$, by comparing the right side of \eqref{equ:LP-cost-for-i} and the expectation of the right side of \eqref{inequ:int-cost-on-i} over all random choices of $\beta$.  Specifically, we fix $j^*$, and compare the term after $(\sigma_{j^*} - \sigma_{j^*+1})$ in \eqref{equ:LP-cost-for-i}, and the counterpart in \eqref{inequ:int-cost-on-i}.
	
	Therefore, we only consider jobs in $[j^*]$ in this section. We restrict all the configurations on $i$ to be subsets of $[j^*]$.  With a slight abuse of notations, we still use $y_{if}$ values to denote the masses of the configurations, after restricting jobs to $[j^*]$. That is, for any $f \subseteq [j^*]$, the new $y_{if}$ value equals the old $\sum_{f' \cap [j^*] = f} y_{if'}$.   Now we set $\rho = 2$ and our analysis in this section is tailored to this value of $\rho$. The main lemma we prove in this section is:
	\begin{lemma} \label{lemma:1.36-for-i-j*} We have
		\begin{align*}
			&\qquad {\E}_\beta\left[\sum_{j \in [j^*]} z_{ij} p_j^2 + \frac12 \cdot \vol^2([j^*]) - \frac12 \sum_k \Big(\min\Big\{ \vol([j^*] \cap J_k), \beta \cdot 2^k \Big\}\Big)^2\right]\\
			&\leq 1.36\times \frac12\Big(\sum_{j  \in [j^*]} z_{ij} p_j^2 + \sum_{f} y_{if} \cdot p^2(f)\Big).
		\end{align*}
	\end{lemma}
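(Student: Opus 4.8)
Set $V := \vol([j^*]) = \sum_{j\in[j^*]} z_{ij}p_j$, $Q := \sum_{j\in[j^*]} z_{ij}p_j^2$ and $R := \sum_f y_{if}\,p^2(f)$, and observe that $V,Q,R$ are independent of $\beta$. Using only that $y_{i\cdot}$ is a distribution over configurations with $\sum_{f\ni j}y_{if}=z_{ij}$: since $\sum_f y_{if}p(f)=\sum_j p_j z_{ij}=V$ we get $R\ge V^2$, and since $p^2(f)\ge\sum_{j\in f}p_j^2$ we get $R\ge Q$. With $\rho=2$, collecting the $\beta$-independent terms shows the claim is equivalent to
\[
0.32\,Q \;+\; \tfrac12 V^2 \;-\; \tfrac12\,\E_\beta\big[\textstyle\sum_k m_k^2\big]\ \le\ 0.68\,R,\qquad m_k:=\min\{\vol([j^*]\cap J_k),\ \beta 2^k\}.
\]
The plan is to replace $R$ on the right by $\max\{V^2,Q\}$ (strengthened, wherever this is too lossy, by a bound that charges the granularity of configurations --- for instance a single job of size $\gg V$ already forces $R\gg V^2$ through $R\ge Q$), and then to analyse the resulting expression.

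\textbf{Localizing to a few classes.} The subtracted term sees the instance only through the class volumes $V_k(\beta)=\vol([j^*]\cap J_k)$, which as functions of $\beta$ are ``sliding window'' integrals of the job-size profile on $i$. Since $m_k\le V_k$ for every $k$, discarding the savings from all but a constant number of classes only weakens the inequality; I would keep only the three consecutive ``active'' classes whose scale $\beta 2^k$ is comparable to $V$, as only these can contribute a nontrivial amount of savings. Inside an active class every job has size in $[\beta 2^k,\beta 2^{k+1})$, which ties the portion of $Q$ lying in that class to the volume there and controls its interaction with $R$.

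\textbf{Reduction to a mathematical program and the computer check.} After these reductions the target inequality becomes a finite-dimensional maximization: the variables are the (rescaled) sizes and $z$-values of the jobs in the three active classes together with a few aggregate parameters ($V$, $Q$, and the volume outside the active classes), the random $\beta$ is encoded by the single shift $t:=\log_2\beta\in[0,1)$, and scale invariance (rescaling all sizes together with $\beta$) removes one more degree of freedom. According to whether $m_k=\vol([j^*]\cap J_k)$ or $m_k=\beta 2^k$ for each active class $k$, the feasible region splits into three sub-programs. For each sub-program I would specify explicit Lagrangian multipliers and a candidate upper bound; after evaluating the average over $\beta$ in closed form, certifying each bound reduces to checking that finitely many single-variable quadratic functions do not exceed prescribed values on prescribed intervals --- a few hundred such checks, carried out by a short computer program. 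Combining the three bounds yields the constant $1.36$.

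\textbf{Main obstacle.} The crux is the trade-off available to a worst-case instance between concentrating volume near a class boundary (penalized by the averaging over $\beta$) and spreading it across many classes (which inflates $R$ via forced job-combining); I do not have an analytic way to pin down the resulting optimum, which is exactly why the optimization over sub-programs and the verification of the certifying multipliers are delegated to a computer.
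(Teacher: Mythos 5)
Your plan follows the paper's strategy step for step: rescale so the total volume $L$ on machine $i$ lies in $[2,4)$ with $\log_2 L$ uniform, discard the savings from all but three job classes, split into three convex sub-programs according to which branch of each $\min$ is active, and certify each via Lagrangian multipliers whose validity reduces to finitely many single-variable quadratic maximizations checked by computer. So the approach is essentially the same. The issue is that, as written, this is a proof plan rather than a proof: all of the quantitative content of the lemma lives in the certificates you defer. The paper must (i) discretize $L\in[2,4)$ into ten sub-intervals and exhibit, for each, an explicit coefficient $\alpha_L$ together with explicit multipliers $(\mu,\lambda_1,\lambda_2)$ and $(\mu,\lambda_0,\lambda_1,\lambda_2)$ for the two nontrivial sub-programs (Table~\ref{table:parameters}); (ii) reduce the supremum over size-configurations to an explicit finite list ($6$ and $23$ cases) by showing one may assume at most one ``flexible'' element and that the tail volume is below $L$; and (iii) verify that the ten $\alpha_L$ values average below $1.36$. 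None of this is supplied, and there is no a priori reason the scheme succeeds with the constant $1.36$ rather than, say, $1.4$. One further small inaccuracy: the paper does not ``evaluate the average over $\beta$ in closed form'' for a fixed instance. It proves a pointwise-in-$L$ inequality with an $L$-dependent coefficient $\alpha_L$ -- allowing the worst-case instance to differ across $L$, which the paper's discussion notes is a source of slack -- and only afterwards averages the coefficients, using that each case's objective is quadratic in $L$ with leading term $\frac12 L^2>0$ so that checking the endpoints of each $L$-interval suffices.
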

	Combine the lemma with Lemma~\ref{lemma:LP-cost-for-i} and \ref{lemma:int-cost-on-i}, we have that the expected weighted completion time of jobs assigned to $i$ is at most $1.36$ times the cost of $i$ in the LP  solution. This leads to our $1.36$-approximation ratio, finishing the proof of Theorem~\ref{thm:main}.

	\subsection{Notations} For a fixed $\beta$, define $t := \beta \cdot 2^k$ for the biggest integer $k$ such that $\beta \cdot 2^k \leq \frac{\vol([j^*])}{2}$.  Therefore, $\frac{\vol([j^*])}{t} \in [2, 4)$. We let $q_j = \frac{p_j}{t}$ for every $j \in [j^*]$ be a scaled processing time of $j$.  Similarly, define $q(J') = \sum_{j \in J'}q_j$ for every $J' \subseteq [j^*]$.  We define
	\begin{align*}
		v_k(f) := \sum_{j \in f: 2^k \leq q_j < 2^{k+1}} q_j, \quad \text{for any configuration } f \subseteq [j^*] \text{ and integer } k,
	\end{align*}\vspace*{-5pt}
	\begin{align*}
		L &:= \frac{\vol([j^*])}{t} = \sum_{j \in [j^*]} z_{ij}q_j = \sum_{f} y_{if} q(f) \in [2, 4), &\quad &Q:=\sum_{j \in [j^*]} z_{ij} q_j^2 = \sum_{f}y_{if} \sum_{j \in f} q_j^2,\\
		F &:= \sum_{f} y_{if} \cdot q^2(f),  &\quad &S:=\sum_k \Big(\min\Big\{\sum_f y_{if} \cdot v_k(f), 2^k\Big\}\Big)^2.
	\end{align*}
	Notice that all the definitions depend on $\beta$, and thus are random. In particular,  $\ln \frac L2$ is also uniformly distributed in $[0, \ln 2)$.  \medskip
	
	With the notations defined, we proceed to the proof of Lemma~\ref{lemma:1.36-for-i-j*}.  Our goal is to define a constant $\alpha_{L} \in [1, 1.5]$ for every $L \in [2, 4)$ so that
		 	\begin{align}
		 		 Q + \frac12 \cdot L^2 - \frac12  \cdot S  &\leq \alpha_L \cdot \frac12\Big(Q + F \Big), \text{for every }\beta \in [1,  2), \text{and}\label{inequ:individual-alpha-L} \\[3pt]
				{\E}_{\beta} \big[\alpha_L\big] &\leq 1.36.  \label{inequ:expect-alpha-L}
			\end{align}
	Multiplying both sides of \eqref{inequ:individual-alpha-L} by $t^2$, and taking the expectation of both sides over $\beta$, we obtain the inequality in 
	Lemma~\ref{lemma:1.36-for-i-j*}. To see this, notice that $Q \cdot t^2 = \sum_{j \in [j^*]} z_{ij} p_j^2,  L^2 \cdot t^2 = \vol^2([j^*])$ and $F \cdot t^2 = \sum_f y_{if}\cdot p^2(f)$. Letting $k'$ be the integer such that $t = \beta \cdot 2^{k'}$, we have 
	 	\begin{align*}
	 		S \cdot t^2 &= \sum_{k} \Big(\min\Big\{\sum_{f}y_{if} \cdot v_{k}(f)\cdot t, \beta \cdot 2^{k + k'}\Big\}\Big)^2 = \sum_{k''} \Big(\min \Big\{\vol([j^*] \cap J_{k''}), \beta \cdot 2^{k''}\Big\}\Big)^2.
	 	\end{align*} 
	 	The second inequality follows by defining $k'' = k + k'$. Also a job $j \in f$ is counted towards $v_k(f)$ if and only if $j \in J_{k''}$, and when it is counted, we have $q_j \cdot t = p_j$.

	Therefore, it remains to set $\alpha_L$ values so that \eqref{inequ:individual-alpha-L} and \eqref{inequ:expect-alpha-L} hold. If we set $\alpha_L = 1.5$ for every $L$, then \eqref{inequ:individual-alpha-L} holds even without the $\frac12 \cdot S$ term on the left-side,  as $Q \leq F$ and $L^2 \leq F$. So we can always set $\alpha_L \leq 1.5$; this recovers the $1.5$-approximation ratio without using the negative correlation. 
	
	\subsection{Capturing $\alpha_L$ using  a Mathematical Program} Till the end of the section, we fix the random variable $\beta$, and thus $L$. We define a size-configuration to be a multi-set $g \subseteq \R_{\geq 0}$ of positive reals with $|g| \leq n$. As the name suggests, a size-configuration is obtained from a configuration by replacing each job with its size.  For every size configuration $g$, we define $v_0(g), v_1(g)$ and $v_2(g)$ to be the sum of elements of $g$ in $[1, 2), [2, 4)$ and $[4, 8)$ respectively.  This corresponds to the definition of $v_k(f)$ for configurations $f$; but we only need to define $v_0(g), v_1(g)$ and $v_2(g)$ for size-configurations $g$. 
	
	Let $\alpha \in [1, 1.5]$ be fixed; this will be our target value for $\alpha_L$.   Now we introduce a program which aims to check if $\alpha_L = \alpha$ is satisfies \eqref{inequ:individual-alpha-L}.  In the program, we have a variable $y'_g$ for every size-configuration $g$, and three variables $V_0, V_1$ and $V_2$. The program is defined by the objective \eqref{equ:program}, and constraints (\ref{equ:program-constraint:1-configuration}-\ref{equ:program-constraint:define-vk}).\footnote{Notice the program has uncountable many variables, but \eqref{equ:program-constraint:1-configuration} and \eqref{equ:program-constraint:non-negative} will require there are countably many variables with non-zero values.}
	\begin{equation}
		\max \qquad \sum_g y'_{g} \Big(\big(1-\frac\alpha 2\big) \sum_{a \in g} a^2 - \frac\alpha2 \big(\sum g\big)^2 \Big) + \frac12 L^2  - \frac12 \big((\min(V_0, 1))^2 + (\min (V_1, 2))^2 + V_2^2\big)  \label{equ:program}
	\end{equation}\vspace*{-20pt}

	\noindent\begin{minipage}[t]{0.48\textwidth}
		\begin{align}
			\sum_{g} y'_{g} &= 1 \label{equ:program-constraint:1-configuration} \\
			y'_{g} &\geq 0 &\quad &\forall g \label{equ:program-constraint:non-negative}
		\end{align}
	\end{minipage}\hfill
	\begin{minipage}[t]{0.48\textwidth}
		\begin{align}
			\sum_{g} y'_{g}\cdot (\sum g) &= L \label{equ:program-constraint:define-L}\\
			\sum_{g} y'_{g} v_k(g) - V_k &= 0 &\quad &\forall k \in \{0, 1, 2\} \label{equ:program-constraint:define-vk}
		\end{align}
	\end{minipage}\medskip
	
	In the program, $y'_g$ is the fraction of $g$ we choose.   \eqref{equ:program-constraint:1-configuration} requires that we choose in total 1 fraction of size-configuration, \eqref{equ:program-constraint:non-negative} is the non-negativity condition. \eqref{equ:program-constraint:define-L} gives the definition of $L$. $V_0, V_1$ and $V_2$ are the total ``volume'' of elements in $[1, 2), [2, 4)$ and $[4, 8)$ respectively; this gives us \eqref{equ:program-constraint:define-vk}.  Notice that \eqref{inequ:individual-alpha-L}  is equivalent to $\displaystyle \Big(1 - \frac{\alpha_L}{2}\Big)Q - \frac{\alpha_L}{2} \cdot F + \frac12 \cdot L^2 -\frac12 \cdot S \leq 0$.   This gives us the objective \eqref{equ:program}. 
	
	\begin{lemma}
		If for some $\alpha \in [1, 1.5]$, program~\eqref{equ:program} has value at most $0$, then \eqref{inequ:individual-alpha-L} is satisfied with $\alpha_L = \alpha$. 
	\end{lemma}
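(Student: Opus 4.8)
The plan is to exhibit, for the fixed instance and fixed $\beta$ (hence fixed $L, Q, F, S$ and scaled sizes $q_j = p_j/t$), a feasible point of program~\eqref{equ:program} whose objective value is at least the left-hand side of the rearranged inequality $\big(1-\frac{\alpha}{2}\big)Q - \frac{\alpha}{2}F + \frac12 L^2 - \frac12 S \le 0$, which is the equivalent form of \eqref{inequ:individual-alpha-L} noted just before the lemma. Since program~\eqref{equ:program} is a maximization, its optimum dominates the objective at our point, so an optimum of at most $0$ forces the rearranged inequality, i.e.\ \eqref{inequ:individual-alpha-L} with $\alpha_L = \alpha$.

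Concretely, I would collapse each real configuration to a size-configuration. For a configuration $f \subseteq [j^*]$, let $g_f$ be the multiset $\{q_j : j \in f\}$ of scaled sizes, and set $y'_g := \sum_{f : g_f = g} y_{if}$ for every size-configuration $g$, together with $V_k := \sum_f y_{if}\, v_k(f)$ for $k \in \{0,1,2\}$. Then I would verify feasibility constraint by constraint: \eqref{equ:program-constraint:1-configuration} is $\sum_g y'_g = \sum_f y_{if} = 1$ (the restricted masses still sum to $1$ by \eqref{LPC:one-configuration}); \eqref{equ:program-constraint:non-negative} is immediate; \eqref{equ:program-constraint:define-L} is $\sum_g y'_g (\sum g) = \sum_f y_{if}\, q(f) = L$ by the definition of $L$; and \eqref{equ:program-constraint:define-vk} holds because $v_k(g_f) = v_k(f)$, so $\sum_g y'_g v_k(g) = \sum_f y_{if} v_k(f) = V_k$.

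Next I would evaluate the objective at this point. Since $\sum_g y'_g \sum_{a \in g} a^2 = \sum_f y_{if} \sum_{j \in f} q_j^2 = Q$ and $\sum_g y'_g (\sum g)^2 = \sum_f y_{if}\, q^2(f) = F$, the configuration part of \eqref{equ:program} equals $\big(1-\frac{\alpha}{2}\big)Q - \frac{\alpha}{2}F$, so the full objective is $\big(1-\frac{\alpha}{2}\big)Q - \frac{\alpha}{2}F + \frac12 L^2 - \frac12\big((\min(V_0,1))^2 + (\min(V_1,2))^2 + V_2^2\big)$. It then remains to compare the last bracket with $S$: extending the notation $V_k := \sum_f y_{if} v_k(f)$ to all integers $k$, we have $S = \sum_k (\min\{V_k, 2^k\})^2$, a sum of nonnegative terms, so dropping all but $k \in \{0,1,2\}$ gives $S \ge (\min(V_0,1))^2 + (\min(V_1,2))^2 + (\min(V_2,4))^2$; and since $\sum_k V_k = \sum_f y_{if}\, q(f) = L < 4$ we have $V_2 < 4$, hence $\min(V_2,4) = V_2$. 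Thus the objective at our feasible point is at least $\big(1-\frac{\alpha}{2}\big)Q - \frac{\alpha}{2}F + \frac12 L^2 - \frac12 S$, and chaining $\text{LHS} \le (\text{objective at our point}) \le (\text{program optimum}) \le 0$ finishes the proof.

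I do not expect a genuine obstacle; the one point that needs care is the bookkeeping between the infinite family of terms defining $S$ and the three terms retained in the program — namely that the extra terms of $S$ are nonnegative (so discarding them only decreases $-\frac12 S$'s absolute value in our favor) and that the $k=2$ term is untruncated because $V_2 \le L < 4$, which is exactly where the earlier normalization $L \in [2,4)$ and the restriction of configurations to $[j^*]$ are used.
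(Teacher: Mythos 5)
Your proposal is correct and follows essentially the same route as the paper's proof: collapse each configuration $f$ to the size-configuration $\{q_j : j\in f\}$ to build a feasible point $(y',V_0,V_1,V_2)$, note that the objective there equals $(1-\frac\alpha2)Q-\frac\alpha2 F+\frac12L^2$ minus the three truncated squares, and bound $-\frac12 S$ from above by dropping the nonnegative terms of $S$ outside $k\in\{0,1,2\}$ and using $V_2\le L<4$ to remove the truncation at $k=2$. No gaps.
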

	\begin{proof}
		Consider the $y_{if}$ values for the configurations $f$. We can covert the configurations to size configurations naturally: Start from $y'_g = 0$ for every size-configuration $g$.  For every configuration $f$ with $y_{if} > 0$, we let $g = \{q_j: j \in f\}$ be its correspondent size-configuration ($g$ is a multi-set). Then we increase $y'_g$ by $y_{if}$.  Let $V_k = \sum_{g} y'_g v_k(g)$ for every $k \in \{0, 1, 2\}$. Clearly, $(y', V_0, V_1, V_2)$ satisfy all the constraints in the program.   So the value of \eqref{equ:program} on this $(y', V_0, V_1, V_2)$ is at most $0$. Then, 
		\begin{align*}
			&\quad \Big(1 - \frac{\alpha}{2}\Big)Q - \frac{\alpha}{2} \cdot F + \frac12 \cdot L^2 -\frac12 \cdot S\\
			&= \sum_{f\subseteq [j^*]} y_{if} \left(\big(1-\frac\alpha2\big)\sum_{j\in f} q_j^2 - \frac\alpha2\cdot q^2(f) + \frac 12 L^2\right) - \frac12\sum_{k}\Big(\min\Big\{\sum_f y_{if} \cdot v_k(f), 2^k\Big\}\Big)^2\\
			&\leq \sum_{g} y'_{g} \left(\big(1-\frac\alpha2\big)\sum_{a\in g} a^2 - \frac\alpha2\cdot \big(\sum g\big)^2 + \frac 12 L^2\right) - \frac12\sum_{k=0}^2\Big(\min\Big\{\sum_g y'_{g} \cdot v_k(g), 2^k\Big\}\Big)^2\\
			&\leq \sum_{g} y'_{g} \left(\big(1-\frac\alpha2\big)\sum_{a\in g} a^2 - \frac\alpha2\cdot \big(\sum g\big)^2 + \frac 12 L^2\right) - \frac12 \big((\min(V_0, 1))^2 + (\min (V_1, 2))^2 + V_2^2\big)  \leq 0.
		\end{align*}
		Notice that $V_2 \leq \sum_{f}y'_{g} v_2(f) \leq \sum_{f}y'_{g} q(f) = L < 4$  and thus $\min(V_2, 4) = V_2$. 
		The inequality is equivalent to $Q + \frac12 L^2 - \frac12 S \leq \alpha\cdot \frac12(Q + F)$; so setting $\alpha_L= \alpha$ will satisfy \eqref{inequ:individual-alpha-L}. 
	\end{proof}
		
	Next, we show that we only need to consider the size-configurations $g$ satisfying certain property:
	\begin{lemma}
		\label{lemma:restrict-f}
		In program~\eqref{equ:program},  we can w.l.o.g assume $y'_{g} = 0$ for any size-configuration $g$ with $\big(\sum g\big) - (\min g) \geq L$. 
	\end{lemma}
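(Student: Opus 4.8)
The plan is to show that any size-configuration $g$ violating the property, i.e., with $(\sum g) - (\min g) \ge L$, can be replaced by ``splitting'' it into smaller size-configurations without decreasing the objective \eqref{equ:program} and without breaking the constraints. Concretely, given such a $g$ with $y'_g > 0$, I would peel off its smallest element $a^* := \min g$ and write $g = \{a^*\} \uplus g'$ where $g' = g \setminus \{a^*\}$. I would then decrease $y'_g$ to $0$ and increase $y'_{\{a^*\}}$ and $y'_{g'}$ each by the old $y'_g$. First I must check feasibility: \eqref{equ:program-constraint:1-configuration} is unaffected since one unit of mass is replaced by... wait — this doubles the mass, so instead the right move is to replace $g$ by the single configuration $g'' := g \setminus \{\min g\}$ together with a ``compensating'' adjustment, or more cleanly, to iterate the standard convexity argument. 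Let me restate: I would argue that for fixed values of $L$, $V_0,V_1,V_2$, the contribution of a single configuration $g$ to the objective, namely $(1-\tfrac\alpha2)\sum_{a\in g}a^2 - \tfrac\alpha2(\sum g)^2$, is (weakly) increased if we split $g$ into two multisets $g_1,g_2$ with $g_1\uplus g_2 = g$, \emph{provided} this split preserves $\sum g$ and each $v_k$ — but splitting changes $(\sum g)^2$, which only helps since $(\sum g_1)^2 + (\sum g_2)^2 \le (\sum g)^2$ when both parts are nonnegative, and the $\sum a^2$ term is unchanged. However, splitting also changes the $y'_g(\sum g)$ total in \eqref{equ:program-constraint:define-L}: replacing $y'_g$ units of $g$ by $y'_g$ units each of $g_1$ and $g_2$ doubles the mass. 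The correct bookkeeping is therefore: replace $y'_g$ units of $g$ by $y'_g$ units of $g_1$ and additionally fold the ``extra'' mass into the empty configuration $\emptyset$ is not allowed either since \eqref{equ:program-constraint:define-L} is an equality.

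The cleanest correct approach, which I would carry out, is a \emph{local exchange that keeps total mass fixed}: take two configurations — the offending $g$ and any other configuration $h$ in the support (one exists unless $y'_g = 1$, handled separately) — and move an $\epsilon$ fraction of mass from $g$ to a modified pair. Actually the standard trick here is simpler and I would use it: since $\alpha \le 1.5 < 2$, the coefficient $1 - \alpha/2 > 0$, and the map $g \mapsto$ ``two configurations $\{\min g\}$ and $g \setminus \{\min g\}$'' can be implemented by noting that replacing, within the \emph{same} variable $y'_g$, the configuration $g$ by $g\setminus\{\min g\}$ while creating a brand-new infinitesimal-free configuration is disallowed; so instead I will argue at the level of the \emph{support structure}: I claim an optimal solution to \eqref{equ:program} exists (by a compactness/limiting argument on the countable support) and that among optimal solutions there is one minimizing $\sum_g y'_g |g|$ (total job count weighted by mass); if such an optimum had a configuration $g$ with $(\sum g)-(\min g)\ge L$ and $y'_g>0$, I would perturb by moving mass $\epsilon$ of $g$ to the pair $(\{\min g\}, g\setminus\{\min g\})$ while moving mass $\epsilon$ of $\{\min g\}$ — no. Let me commit to the genuinely standard argument:

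I would fix $V_0,V_1,V_2$ and $L$ and observe that the objective is \emph{linear} in the $y'$ variables, so it is a linear program over the polytope defined by \eqref{equ:program-constraint:1-configuration}--\eqref{equ:program-constraint:define-vk} with these $V_k$, and an optimal \emph{basic} solution has support size at most $5$ (the number of equality constraints). That does not directly give the claim, so instead the real mechanism must be: split $g = \{\min g\} \uplus g'$ and \emph{also} use the constraint slack. Here is the key computation I would actually perform. Replace one unit (mass-fraction $y'_g$) of $g$ by $y'_g$ units of $g' := g\setminus\{\min g\}$ and $y'_g$ units of the singleton $\{\min g\}$; this preserves \eqref{equ:program-constraint:define-vk} and the objective's quadratic part weakly increases, but it violates \eqref{equ:program-constraint:1-configuration} (mass becomes $1 + y'_g$) and \eqref{equ:program-constraint:define-L} (total becomes $L + y'_g(\sum g)$ — no wait, total is $\sum g$ preserved per unit but mass grew, so the $L$-sum grows). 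To fix this, simultaneously scale everything down by $1/(1+y'_g)$: this rescaling multiplies the objective's first term and the $\frac12 L^2$, $\frac12(\cdots)^2$ terms inhomogeneously, so it's not clean. The correct fix, and what I will write, is to \emph{not} create the singleton: just replace $g$ by $g' = g \setminus \{\min g\}$ outright (moving the $y'_g$ mass from coordinate $g$ to coordinate $g'$), which preserves \eqref{equ:program-constraint:1-configuration}, \emph{decreases} $\sum_g y'_g(\sum g)$ by $y'_g \cdot \min g$, and \emph{decreases} some $V_k$; then compensate by a separate argument. The main obstacle, and where I must be careful, is exactly this compensation — showing the decrease in $L$ and the $V_k$'s can be ``undone'' (by adding small elements, or by the fact that we are free to choose the worst $\beta$, or because the objective is monotone in the right direction in $L$ and $V_k$) without losing objective value. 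I expect the resolution to hinge on: (i) $L$ is a fixed parameter we are not optimizing over, so we cannot change it — hence the replacement must preserve $\sum g$, forcing us to split rather than truncate; and (ii) splitting $g$ into $\{\min g\}$ and $g\setminus\{\min g\}$ followed by a renormalization that exploits $(\sum g_1)^2+(\sum g_2)^2 \le (\sum g)^2$ and the condition $(\sum g)-\min g \ge L$ to guarantee the renormalized objective does not drop. I would carry out that renormalized computation explicitly, using $(\sum g) - \min g \ge L$ precisely to control the cross term, and that is the step I expect to be delicate.
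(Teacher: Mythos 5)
Your proposal never closes. You correctly identify the two obstructions --- truncating $g$ to $g\setminus\{\min g\}$ breaks the equality $\sum_{g} y'_{g}(\sum g)=L$ (and the $V_k$ constraints), while duplicating mass to create a new singleton $\{\min g\}$ breaks $\sum_{g} y'_{g}=1$ --- but the ``renormalized splitting'' you end with is never carried out, and, as you yourself observe, the objective is inhomogeneous in $y'$ (the terms $\tfrac12 L^2$ and $\tfrac12(\min(V_k,2^k))^2$ do not scale linearly), so a global rescaling by $1/(1+y'_g)$ does not behave cleanly. The step you flag as ``delicate'' is exactly where the argument is missing, so as written this is a gap, not a proof.

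The missing idea --- which is essentially the paper's entire proof --- is to redistribute the peeled-off element $a=\min g$ into \emph{other configurations already present in the support}, rather than creating a new configuration or discarding $a$. Concretely: remove $a$ from $g$, keeping the mass $y'_g$ on the truncated configuration, and insert the element $a$ into $y'_g$ worth of fractional configurations $g''$ with $\sum g'' < L$; such fractional configurations are available because the $y'$-weighted average of $\sum g$ is exactly $L$ by \eqref{equ:program-constraint:define-L}, while the offending $g$ satisfies $\sum g \geq L + a > L$. This operation changes neither $\sum_{g} y'_{g}$, nor $\sum_{g} y'_{g}(\sum g)$, nor any $V_k$, nor $\sum_{g} y'_{g}\sum_{a\in g}a^2$, since the multiset of elements counted with their masses is unchanged and only their grouping into configurations changes; and a short computation using $\sum g'' < L \leq (\sum g)-a$ for each receiving configuration shows that $\sum_{g} y'_{g}(\sum g)^2$ decreases. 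Since that quantity enters the objective with coefficient $-\alpha/2<0$, the objective does not decrease, which is the claim. Your instinct to exploit $(\sum g)-\min g\geq L$ was right, but it is used to guarantee that the donor configuration stays above the average $L$ after removal (so the transfer to below-average configurations decreases the sum of squares), not to control a renormalization.
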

	\begin{proof}
		If $y'_{g} > 0$ for such a size-configuration $g$. We let $a = \min g$, we can then remove $a$ from $g$, and add the $y'_g$ fraction of $a$ to $y'_g$ fractional size-configurations $g'$ with $(\sum g') < L$. This can be done as $\sum_{g} y'_{g} \sum_{a \in g} a = L$.  This operation does not change $V_0, V_1$ and $V_2$, and $\sum_{g}y'_{g} \sum_{a \in g}a^2$, and it will decrease $\sum_f y'_{g} (\sum g)^2$.  Therefore, it will increase the value of the objective. 
	\end{proof}
	
	Program~\eqref{equ:program} is not a convex program, due to the terms $(\min(V_0, 1))^2$ and $(\min(V_1, 2))^2$. Therefore, we break the program into 3 sub-programs, each of which is a convex one. They all have the same set of constraints (\ref{equ:program-constraint:1-configuration}-\ref{equ:program-constraint:define-vk}).  The objectives of the 3 sub-programs are respectively
	\begin{align}
		&\max \quad \sum_g y'_{g} \Big(\big(1-\frac\alpha 2\big) \sum_{a \in g} a^2 - \frac\alpha2 \big(\sum g\big)^2 \Big) + \frac12 L^2  - \frac12 \cdot 2^2, \label{equ:sub-program-1}\\
		&\max \quad \sum_g y'_{g} \Big(\big(1-\frac\alpha 2\big) \sum_{a \in g} a^2 - \frac\alpha2 \big(\sum g\big)^2 \Big) + \frac12 L^2  - \frac12 \big(1^2 + V_1^2 + V_2^2\big), \label{equ:sub-program-2}\\
		&\max \quad \sum_g y'_{g} \Big(\big(1-\frac\alpha 2\big) \sum_{a \in g} a^2 - \frac\alpha2 \big(\sum g\big)^2 \Big) + \frac12 L^2  - \frac12 \big(V_0^2 + V_1^2 + V_2^2\big). \label{equ:sub-program-3}
	\end{align}	
	Notice we do not need to add constraints $V_1\leq 2$ or $V_1\geq 2$ to the sub-programs. 
	\begin{lemma}
		If for some $\alpha \in [1, 1.5]$, the values of the three sub-programs \eqref{equ:sub-program-1}, \eqref{equ:sub-program-2} and \eqref{equ:sub-program-3} are non-positive, so is the value of program \eqref{equ:program}. 
	\end{lemma}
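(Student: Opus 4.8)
The plan is to argue \emph{pointwise}. Fix any feasible solution $(y', V_0, V_1, V_2)$ of program~\eqref{equ:program}. Since the three sub-programs \eqref{equ:sub-program-1}, \eqref{equ:sub-program-2}, \eqref{equ:sub-program-3} share exactly the constraint set (\ref{equ:program-constraint:1-configuration}--\ref{equ:program-constraint:define-vk}), this point is automatically feasible for all three of them. I will show that the objective value of \eqref{equ:program} at this point is at most the objective value of at least one of the three sub-programs at the \emph{same} point. Granting this, if all three sub-program optima are non-positive, then for every feasible point of \eqref{equ:program} the objective is bounded by some sub-program's value at that point, which is bounded by that sub-program's optimum, hence by $0$; taking the supremum over feasible points gives that program~\eqref{equ:program} has value at most $0$.

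For the pointwise claim I would do a short case analysis on how the two $\min$ operators in \eqref{equ:program} resolve, i.e., on the signs of $V_0-1$ and $V_1-2$. Write $P := \sum_g y'_g\big((1-\tfrac\alpha2)\sum_{a\in g}a^2 - \tfrac\alpha2(\sum g)^2\big) + \tfrac12 L^2$ for the part common to all four objectives. If $V_0 \le 1$ and $V_1 \le 2$, then $(\min(V_0,1))^2 = V_0^2$ and $(\min(V_1,2))^2 = V_1^2$, so the objective of \eqref{equ:program} equals exactly the objective of \eqref{equ:sub-program-3}. If $V_0 \ge 1$ and $V_1 \le 2$, then $(\min(V_0,1))^2 = 1$ and $(\min(V_1,2))^2 = V_1^2$, and the objective of \eqref{equ:program} equals the objective of \eqref{equ:sub-program-2}. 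Finally, if $V_1 \ge 2$ (irrespective of $V_0$), then $(\min(V_1,2))^2 = 4 = 2^2$, and the objective of \eqref{equ:program} equals $P - \tfrac12\big((\min(V_0,1))^2 + 2^2 + V_2^2\big) \le P - \tfrac12\cdot 2^2$, which is the objective of \eqref{equ:sub-program-1}; here the inequality uses only $(\min(V_0,1))^2 \ge 0$ and $V_2^2 \ge 0$. These three cases cover every feasible point, which completes the argument.

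The only care needed is to check the direction of the two dropped terms in the last case (removing the non-negative $(\min(V_0,1))^2$ and $V_2^2$ contributions raises the objective, which is the direction we want for an upper bound) and to record that no auxiliary constraints $V_1 \le 2$ or $V_1 \ge 2$ have to be imposed on the sub-programs precisely because the pointwise comparison is valid for all values of $V_1$. I do not expect a genuine obstacle: this lemma is the bookkeeping step that linearizes the non-convex $\min$-squared terms of \eqref{equ:program} by covering the $(V_0, V_1)$-plane with three regions on which a respective sub-program's objective dominates pointwise. The one subtlety worth spelling out is that a feasible point of \eqref{equ:program} is feasible for each sub-program, so the comparison is between actually attained values and is therefore dominated by the sub-program optima.
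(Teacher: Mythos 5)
Your proof is correct and is essentially the same argument as the paper's: the paper phrases it as a contrapositive (a feasible point witnessing a positive value for \eqref{equ:program} also witnesses a positive value for one of the sub-programs), but the case analysis on $V_0$ and $V_1$ and the observation that dropping the non-negative terms $(\min(V_0,1))^2$ and $V_2^2$ only helps in the $V_1\geq 2$ case are identical.
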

	\begin{proof}
		We consider the contra-positive of the statement: We assume the value of program \eqref{equ:program} is positive, and prove that one of the three sub-programs have positive value.  Consider a solution $(y', V_0, V_1, V_2)$ to the program \eqref{equ:program} that gives the positive value. If $V_1 \geq 2$, then the objective \eqref{equ:sub-program-1} is positive.  If $V_1 < 2$ and $V_0 \geq 1$, then the objective becomes \eqref{equ:sub-program-2}, and thus it is positive. Finally, if $V_1  < 2$ and $V_0 < 1$, the objective \eqref{equ:sub-program-3} is positive. 
	\end{proof}
	
	Thus our goal is now to set $\alpha_L$ values, so that the three sub-programs have non-positive values. 
	\subsection{Analysis of Sub-Program \eqref{equ:sub-program-1}.} First we consider sub-program \eqref{equ:sub-program-1}. For any feasible solution $(y', V_0, V_1, V_2)$, the objective of the program is
	\begin{align*}
		&\quad \sum_f y'_{g} \Big(\big(1-\frac\alpha 2\big) \sum_{a \in g} a^2 - \frac\alpha2 \big(\sum g\big)^2 \Big) + \frac12 L^2  - \frac12 \cdot 2^2\\
		&\leq - \sum_{g} y'_{g} \cdot (\alpha - 1) \big(\sum g\big)^2 + \frac12 L^2  - 2 \leq -(\alpha - 1) L^2 + \frac12 L^2 - 2 = \Big(\frac32 - \alpha\Big) L^2 - 2.
	\end{align*}
	We used that $\sum_{a \in g} a^2 \leq \big(\sum g\big)^2$ and $L = \sum_g y'_g(\sum g)$. So, for $\alpha = \frac32 - \frac{2}{L^2}$, the value of the sub-program is non-positive. 
	
	\subsection{Analysis of Sub-Program \eqref{equ:sub-program-2}.} Now, we consider sub-program~\eqref{equ:sub-program-2}.  As the objective does not contain $V_0$, constraint \eqref{equ:program-constraint:define-vk} for $k=0$ becomes redundant.  Using Lagrangian multipliers, the value of the sub-program is at most
	\begin{align}
		\inf_{\mu, \lambda_1, \lambda_2 \in \R_{\geq 0}} \quad \sup_{y', V_1, V_2} \qquad & \sum_g y'_{g} \Big(\big(1-\frac\alpha 2\big) \sum_{a \in g} a^2 - \frac\alpha2 \big(\sum g\big)^2 \Big) + \frac12 L^2  - \frac12 \big(1 + V_1^2 + V_2^2\big) \nonumber \\
		& + \mu \big(\sum_{g} y'_{g} \big(\sum g\big) - L\big)  - \lambda_1 \big(\sum_{g} y'_{g} v_1(g) - V_1\big) - \lambda_2\big(\sum_{g} y'_{g} v_2(g) - V_2\big). \label{equ:Langragian-1}
	\end{align}
	Under the sup operator,  $y'$ is over all vectors such that $\sum_g y'_{g} = 1$ and $y'_{g} \geq 0, \forall g$, and $V_1, V_2 \in \R$.   Notice that we do not require \eqref{equ:program-constraint:define-L} and \eqref{equ:program-constraint:define-vk} to hold. Also, we could allow $\mu, \lambda_1$ and $\lambda_2$ to be in $\R$. We chose the signs to be $+$, $-$, and $-$ before them, and restrict them in $\R_{\geq 0}$; this only increases the quantity. 
	
	For any for fixed $\mu, \lambda_1, \lambda_2 \in \R_{\geq 0}$ and $y'$, the objective \eqref{equ:Langragian-1} is maximized when $V_1 = \lambda_1$ and $V_2 = \lambda_2$. Also,  for fixed $\mu, \lambda_1, \lambda_2, V_1$ and $V_2$, the objective is linear in $y'$.  So it is maximized when $y'$ is a vertex point of the simplex. That is $y'_{g} = 1$ for some size-configuration $g$, and $y'_{g'} = 0$ for every $g' \neq g$.   Therefore, \eqref{equ:Langragian-1} is equal to 
	\begin{align}
		\inf_{\mu, \lambda_1, \lambda_2 \in \R_{\geq 0}} \quad \sup_g \quad  &\big(1 - \frac\alpha 2\big) \sum_{a \in g}a^2 - \frac\alpha 2 \big(\sum g\big)^2 + \mu \big(\sum g\big) - \lambda_1 v_1(g) - \lambda_2 v_2(g) \nonumber\\
		& + \frac12 L^2  - \mu L- \frac12 + \frac12(\lambda_1^2 + \lambda_2^2). \label{equ:Langragian-1a}
	\end{align}
	
	To handle the issues raised by the open intervals, we redefine  $v_1(g)$ and $v_2(g)$ in a slightly different way (the definition of $v_0(g)$ is irrelevant in this case). First, we assume in the definition of size-configurations, each $a \in g$ is associated with one of the 4 types: type-s, type-1, type-2 and type-3.  The following conditions must be satisfied:
	\begin{itemize}
		\item If $a$ is of type-s, then $a \in (0, 2]$;
		\item If $a$ is of type-1, then $a \in [2, 4]$;
		\item If $a$ is of type-2, then $a \in [4, 8]$;
		\item If $a$ is of type-b, then $a \geq 8$. 
	\end{itemize}
	Then we redefine $v_1(g)$ and $v_2(g)$ as the total value of type-1 and type-2 elements in $g$ respectively. Now, the element $2 \in g$ can contribute to either $v_1(g)$ or $v_2(g)$.  This slight modification will not change the value of \eqref{equ:Langragian-1a}.
	
	
	By Lemma~\ref{lemma:restrict-f}, we can restrict $g$ to satisfy the following property: the sum of the numbers in $g$, excluding the smallest one, is less than $L$.  We say a number $a > 0$ is flexible if $a \notin \{2, 4, 8\}$. We then prove 
	\begin{lemma}
		In \eqref{equ:Langragian-1a}, we can restrict ourselves to the multi-sets $g$ containing at most 1 flexible element. 
	\end{lemma}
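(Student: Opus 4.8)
Fix the multipliers $\mu,\lambda_1,\lambda_2\in\R_{\ge 0}$ and look at the inner supremum over size-configurations $g$ in \eqref{equ:Langragian-1a}; by Lemma~\ref{lemma:restrict-f} we may assume $(\sum g)-(\min g)<L$, and we regard every element of $g$ as carrying a fixed type as in the paragraph above. The plan is a pairwise exchange argument. Suppose $g$ has two flexible elements $a$ and $b$. Since they are flexible, each lies in the open interior of its type interval, so we may slide mass between them, replacing the pair $(a,b)$ by $(a+\theta,b-\theta)$ with the two types held fixed, for $\theta$ ranging over the maximal interval keeping $g$ a valid (typed) size-configuration. I would show the objective is a strictly convex function of $\theta$, so its maximum over that interval is attained — or, in limiting cases, approached — at an endpoint, and that at an endpoint one of the two elements either hits a breakpoint in $\{2,4,8\}$ (and so becomes inflexible) or drops to $0$ (and so disappears from $g$). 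Either way the number of flexible elements strictly decreases while the objective does not, and iterating gives the claim.

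For the convexity, freeze all elements of $g$ other than $a,b$ and track each term of \eqref{equ:Langragian-1a} as a function of $\theta$. The sum $\sum g$ is unchanged, so $-\frac\alpha2(\sum g)^2$, $\mu(\sum g)$ and the additive constant $\frac12 L^2-\mu L-\frac12+\frac12(\lambda_1^2+\lambda_2^2)$ are all independent of $\theta$. Because the types of $a$ and $b$ are fixed, each of $a,b$ contributes a term $\pm\theta$ to exactly one of $v_1(g),v_2(g)$ or to neither, so $-\lambda_1 v_1(g)-\lambda_2 v_2(g)$ is affine in $\theta$. Finally $(1-\frac\alpha2)\sum_{c\in g}c^2$ contributes $(1-\frac\alpha2)\big((a+\theta)^2+(b-\theta)^2\big)$, whose second derivative in $\theta$ is $4-2\alpha>0$ since $\alpha\le\frac32$. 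Hence the objective is strictly convex in $\theta$ and is maximized at one of the two endpoints of the feasible interval.

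The feasible interval for $\theta$ is cut out by requiring $a+\theta$ to remain in the type interval of $a$ and $b-\theta$ in that of $b$; these four type intervals are $(0,2],[2,4],[4,8],[8,\infty)$, so their only finite endpoints are $0,2,4,8$. At the upper endpoint $\theta_+$ either $a+\theta$ or $b-\theta$ equals one of $2,4,8$ — in which case that element becomes inflexible — or equals $0$, which is possible only for a type-s element and only as a limit, in which case the element simply vanishes from $g$ (harmless for a supremum); the lower endpoint $\theta_-$ is symmetric. In every case the resulting (possibly limiting) configuration has one fewer flexible element and, by convexity, objective value at least that of $g$. Repeating, the supremum is unaffected if we restrict to configurations with at most one flexible element.

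The point that needs real care — more than the convexity computation — is keeping the reduction of Lemma~\ref{lemma:restrict-f} in force along the exchange: sliding mass onto $a$ shrinks $b$, which can make $b-\theta$ the new minimum element and drive $(\sum g)-(\min g)$ up toward $L$, truncating the $\theta$-range before any breakpoint is reached. I would dispatch this by observing that when the Lemma~\ref{lemma:restrict-f} constraint is the binding one, the limiting configuration $g^\ast$ satisfies $(\sum g^\ast)-(\min g^\ast)=L$ with $\min g^\ast=b-\theta$, and that deleting this minimum element changes \eqref{equ:Langragian-1a} by $(\min g^\ast)\cdot\big(\alpha(\sum g^\ast)-\mu+\lambda_\tau-\min g^\ast\big)$, where $\lambda_\tau\in\{0,\lambda_1,\lambda_2\}$ is determined by the element's type; since $\alpha\ge 1$ and $\sum g^\ast=\min g^\ast+L$ we get $\alpha(\sum g^\ast)-\min g^\ast\ge\alpha L$, so this change is nonnegative for the (bounded) multipliers that are actually used to bound \eqref{equ:Langragian-1a} — large $\mu$ makes the inner supremum $-\infty$ and is never chosen. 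Thus in that situation we may instead delete the offending element, which both restores the Lemma~\ref{lemma:restrict-f} property and decreases the flexible count, so the induction still closes. Alternatively one can always pick the exchanged pair so that the element being shrunk is not the global minimum, quarantining the problematic configurations into a separately handled base case. Making either bookkeeping argument fully airtight is the main obstacle; the rest is the routine convexity check above.
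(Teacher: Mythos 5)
Your core argument is the same as the paper's: the paper also perturbs two flexible elements $b,b'$ while fixing their types and the total $\sum g$, writes $(b,b')$ as a convex combination of two pairs anchored at the breakpoints $\{0,2,4\}$, and uses that $\sum_{a\in g}a^2$ is strictly convex while $\sum g$, $v_1(g)$, $v_2(g)$ are affine under the perturbation, so one endpoint configuration (with one fewer flexible element, after discarding any element that hits $0$) has objective at least as large. Your $\theta$-parametrization and second-derivative computation are just a reformulation of this, and are correct.

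The issue you flag at the end — that the exchange preserves $\sum g$ but can decrease $\min g$, so the property $(\sum g)-(\min g)<L$ from Lemma~\ref{lemma:restrict-f} need not survive the iteration — is a genuine observation, but it is not a defect you introduced: the paper's own proof does not address it either, and simply imposes both restrictions simultaneously when enumerating the cases in Lemma~\ref{lemma:sub-program2-cases}. So you have, if anything, located a gap in the paper rather than in your argument. Of your two proposed repairs, note that the element-deletion computation requires $\mu\le \alpha(\sum g^*)-\min g^*+\lambda_\tau$, which holds for the multipliers actually tabulated but not for arbitrary $\mu\in\R_{\ge0}$, so it proves a weaker (though sufficient) version of the lemma tied to the specific parameters; this caveat should be made explicit if you go that route.
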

	\begin{proof}
		Notice that if $g$ contains one element that is at least $4 > L$, then it is the only element in $g$.  Assume there are two flexible elements $b, b' \in (0, 4)$ in $g$. There are 4 numbers $l, r, l', r'$ such that the following conditions hold.
		\begin{itemize}
			\item $l < b < r$ and $l' < b' < r'$.
			\item The three numbers $l, b$ and $r$ are all in $[0, 2]$, or all in $[2, 4]$. This also holds for $l', b'$ and $r'$. 
			\item At least one of $l$ and $r'$ is in $\{0, 2, 4\}$. This also holds for $r$ and $l'$. 
			\item $l + r' = b + b' = r + l'$. 
		\end{itemize}
		Therefore, $(b, b')$ is a convex combination of $(l, r')$ and $(r, l')$. Say $(b, b') = \gamma(l, r') + (1-\gamma)(r, l')$ for some $\gamma \in [0, 1]$.  By changing $\{b, b'\}$ to $\{l, r'\}$ and $\{r, l'\}$ respectively, we obtain two multi-sets $g'$ and $g''$.  Moreover, we can make the type of $l$ and $r$ the same as that of $b$, and the type of $l'$ and $r'$ the same as that of $b'$.  $g'$ and $g''$ may contain $0$, but this is not an issue as removing $0$ from the set does not change the objective. 
		
		Then we have $\sum_{a \in g} a^2 < \gamma \sum_{a \in g'} a^2 +(1- \gamma) \sum_{a \in g''} a^2$, $(\sum g) = (\sum g') = (\sum g'')$, $v_1(g) = \gamma v_1(g') + (1-\gamma) v_1(g'')$ and $v_2(g) = \gamma v_2(g') + (1-\gamma) v_2(g'')$. As $1 - \frac\alpha 2 \geq 0$, one of $g'$ and $g''$ has a larger objective than $g$ in \eqref{equ:Langragian-1a}, for any $\mu, \lambda_1, \lambda_2 \in \R_{\geq 0}$.  Therefore, we can disregard $g$. 
	\end{proof}
	
	\begin{figure}
		\centering
		\includegraphics[width=0.6\textwidth]{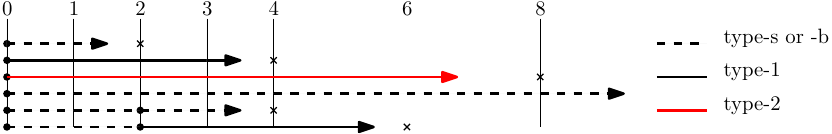}
		\caption{The 6 cases for Lemma~\ref{lemma:sub-program2-cases}. Each row of line segments represents a case in Lemma~\ref{lemma:sub-program2-cases}. Each line segment between two dots represent an element whose value is the length of the line segment. Each ray represents a flexible element (whose length is not determined) and the cross represents the rightmost endpoint it can reach. 
		 Dashed line segments represent type-s or type-b elements, black solid line segments represent type-1 elements, and red solid line segments represent type-2 elements.}
		\label{fig:configuration-1}
	\end{figure}
	
	\begin{lemma}
		\label{lemma:sub-program2-cases}
		To compute in \eqref{equ:Langragian-1a}, it suffices to consider the one of the following  6 cases for $g$ (See Figure~\ref{fig:configuration-1}):
		\begin{enumerate}[label=\arabic*)]
			\item $g = \{a_1\}$. $a_1$ is in $(0, 2]$ of type-s, in $[2, 4]$ of type-1, in $[4, 8]$ of type-2,  or in $[8, \infty)$ of type-b. There are 4 cases here. 
			\item $g = \{a_1, a_2\}$. $a_1$ is $2$ of type-s. $a_2$ is in $[0, 2]$ of type-$s$, or in $[2, 4]$ of type-$1$. There are 2 cases here.
		\end{enumerate}
	\end{lemma}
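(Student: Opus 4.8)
The plan is to combine the two structural reductions already in place — that in \eqref{equ:Langragian-1a} we may assume $g$ contains at most one \emph{flexible} element (an element not in $\{2,4,8\}$), and that by Lemma~\ref{lemma:restrict-f} we may assume $(\sum g) - (\min g) < L$ — with the a priori bound $L \in [2,4)$, and then run a short case analysis on $|g|$. The overlapping \emph{closed} intervals in the six listed cases are deliberate: each boundary value $2,4,8$ is eligible for two of the four types, and since we only need an over-approximation of the supremum it is harmless to retain every eligible option. For $|g| \le 1$ there is nothing to prove beyond recording type-eligibility: a single element $a_1$ in $(0,2]$, $[2,4]$, $[4,8]$, $[8,\infty)$ is (eligible to be) type-s, type-1, type-2, type-b respectively, and at a boundary both eligible types are kept since the signs of $\lambda_1,\lambda_2$ give no general preference; this is exactly the four singleton cases in item~1).

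For $|g| \ge 2$ the key first step is to rule out any element $\ge 4$: if such an element is not the minimum then already $(\sum g) - (\min g) \ge 4 > L$, and if it is the minimum then every element is $\ge 4$, so with $|g|\ge 2$ we again get $(\sum g) - (\min g) \ge 4 > L$; either way Lemma~\ref{lemma:restrict-f} excludes $g$. Hence every inflexible element of $g$ equals $2$, so $g$ is a multiset of copies of $2$ together with at most one flexible element in $(0,4)\setminus\{2\}$. A second, purely arithmetic, use of $(\sum g) - (\min g) < L < 4$ then caps the multiplicity of $2$ at two and forbids a flexible element from coexisting with two or more copies of $2$, leaving only $g = \{2,2\}$ and $g = \{2,f\}$ with $f$ flexible; absorbing $f = 2$ (and, harmlessly, the degenerate $f = 0$, whose removal does not change the objective) rewrites this family as $g = \{2, a_2\}$ with the first coordinate an inflexible $2$. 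Finally, for every fixed $(\mu,\lambda_1,\lambda_2)$ with $\lambda_1 \ge 0$, switching that fixed $2$ from type-1 to type-s moves weight $2$ out of the $v_1$ pool and into the $v_0$ pool, which does not appear in \eqref{equ:Langragian-1a}; since the coefficient of $v_1(g)$ there is $-\lambda_1 \le 0$, this does not decrease the bracketed quantity, so we may take the fixed $2$ to be of type-s. The remaining element $a_2$ is then forced into type-s when $a_2 \le 2$ and into type-1 when $2 < a_2 < 4$, which is precisely item~2).

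The main obstacle I anticipate is the bookkeeping in the $|g| \ge 2$ step: one must be careful about \emph{which} of the fixed $2$ and the flexible element realizes the minimum, since the sub-cases $f < 2$ and $f > 2$ behave differently with respect to $(\sum g) - (\min g)$, and one must check that the type-reassignment "without loss of generality" step genuinely only shifts weight into the multiplier-free $v_0$ pool (so that it cannot decrease \eqref{equ:Langragian-1a} for any admissible multipliers, and hence leaves the inner supremum, and thus the whole $\inf\sup$, unchanged). Everything else reduces to routine interval arithmetic with the constants $2,4,8$ and the bound $L < 4$.
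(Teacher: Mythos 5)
Your proposal is correct and follows essentially the same route as the paper: combine Lemma~\ref{lemma:restrict-f} with the at-most-one-flexible-element reduction, note that an inflexible $2$ may be taken to be type-s since this only removes a $-\lambda_1\cdot 2$ term from \eqref{equ:Langragian-1a}, and enumerate. The paper compresses the enumeration into one sentence, whereas you spell it out (correctly); the only cosmetic slip is calling the destination of the reassigned $2$ the ``$v_0$ pool,'' which does not exist in the sub-program~\eqref{equ:sub-program-2} setting -- type-s elements simply contribute to neither $v_1$ nor $v_2$.
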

	\begin{proof}
		Recall the two properties we can impose on $g$. The sum of elements in $g$, excluding the smallest element, is less than $L  < 4$. $g$ contains at most one element outside $\{2, 4, 8\}$.  Moreover, if $a = 2 \in g$, we can assume $a$ is of type-s; if $a = 8 \in g$, we can assume $a$ is of type-b. This can only increase the objective.  Simple enumeration gives the 6 cases.  
	\end{proof}
	
	
	
	
	\subsection{Analysis of Sub-Program \eqref{equ:sub-program-3}} Finally, we consider sub-program \eqref{equ:sub-program-3}. Similar to the argument for sub-program~\eqref{equ:sub-program-2}, an element $a \in g$ is  one of the following 5 types: type-s, type-0, type-1,type-2 and type-b. Moreover, 
	\begin{itemize}
		\item If $a$ is of type-s, then $a \in (0, 1]$;
		\item If $a$ is of type-0, then $a \in [1, 2]$;
		\item If $a$ is of type-1, then $a \in [2, 4]$;
		\item If $a$ is of type-2, then $a \in [4, 8]$;
		\item If $a$ is of type-b, then $a \geq 8$. 
	\end{itemize}
	Redefine $v_0(g), v_1(g)$ and $v_2(g)$ to be the sum of elements in $g$ of type-$0$, type-$1$ and type-$2$ respectively. Following a similar argument as before, the value of the program is at most 
		\begin{align}
			\inf_{\mu, \lambda_0, \lambda_1, \lambda_2 \in \R_{\geq 0}} \quad \sup_g \quad  &\big(1 - \frac\alpha 2\big) \sum_{a \in g} a^2 - \frac\alpha 2 \big(\sum g\big)^2 + \mu \big(\sum g\big) - \lambda_0 v_0(g) - \lambda_1 v_1(g) - \lambda_2 v_2(g) \nonumber\\
			& + \frac12 L^2 - \mu L + \frac12(\lambda_0^2 + \lambda_1^2 + \lambda_2^2). \label{equ:Langragian-2a}
		\end{align}
	
	Similarly, we say $a$ is flexible if $a \notin \{1, 2, 4, 8\}$. We can again prove the following lemma:
	\begin{lemma}
		In \eqref{equ:Langragian-2a}, we can restrict ourselves to the multi-sets $g$ containing at most 1 flexible element. 
	\end{lemma}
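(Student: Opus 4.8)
The plan is to repeat, essentially verbatim, the averaging argument used for the Lagrangian~\eqref{equ:Langragian-1a} of sub-program~\eqref{equ:sub-program-2}, with the set of ``breakpoints'' enlarged from $\{0,2,4\}$ to $\{0,1,2,4\}$ to account for the new type-0 range. As in that case, I first recall that by Lemma~\ref{lemma:restrict-f} and $L<4$ every size-configuration with at least two elements has all of its elements in $(0,4)$, so in particular a flexible element of such a configuration lies in exactly one of the open intervals $(0,1)$, $(1,2)$, $(2,4)$, and therefore has a well-defined type (type-s, type-0, or type-1).

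Fix $\mu,\lambda_0,\lambda_1,\lambda_2\in\R_{\ge 0}$ and a size-configuration $g$ in the inner supremum of \eqref{equ:Langragian-2a} that contains two flexible elements $b,b'$; by the above both lie in $(0,4)$. Write $[\ell_b,u_b]$ for the closed type-interval of $b$ (with $\ell_b=0$ permitted when $b$ is type-s) and $[\ell_{b'},u_{b'}]$ for that of $b'$. Set $\epsilon:=\min\{b-\ell_b,\;u_{b'}-b'\}>0$ and $\epsilon':=\min\{u_b-b,\;b'-\ell_{b'}\}>0$, and let $l:=b-\epsilon$, $r':=b'+\epsilon$, $r:=b+\epsilon'$, $l':=b'-\epsilon'$. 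Then $l<b<r$ with $l,b,r$ in $b$'s closed type-interval, $l'<b'<r'$ with $l',b',r'$ in $b'$'s closed type-interval, $l+r'=b+b'=r+l'$, at least one of $l,r'$ lies in $\{0,1,2,4\}$, and at least one of $r,l'$ lies in $\{0,1,2,4\}$; hence $(b,b')=\beta(l,r')+(1-\beta)(r,l')$ with $\beta:=\epsilon'/(\epsilon+\epsilon')\in(0,1)$.

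Form $g'$ from $g$ by replacing $\{b,b'\}$ with $\{l,r'\}$ and $g''$ by replacing $\{b,b'\}$ with $\{r,l'\}$, assigning $l,r$ the type of $b$ and $l',r'$ the type of $b'$ (legitimate since each new value stays in the appropriate closed type-interval), and dropping any element equal to $0$, which changes nothing in the objective. Then $\sum g'=\sum g''=\sum g$; for each $k\in\{0,1,2\}$ one has $v_k(g)=\beta\,v_k(g')+(1-\beta)\,v_k(g'')$, since the contribution of $b$ to $v_k$ equals $b$ when $b$ is type-$k$ — in which case $l$ and $r$ are type-$k$ too and $b=\beta l+(1-\beta)r$ — and $0$ otherwise, and similarly for $b'$; and, by strict convexity of $x\mapsto x^2$ together with $\epsilon,\epsilon'>0$, $\sum_{a\in g}a^2<\beta\sum_{a\in g'}a^2+(1-\beta)\sum_{a\in g''}a^2$. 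Since $1-\tfrac\alpha2\ge 0$, since the $\mu$-term of \eqref{equ:Langragian-2a} sees $g$ only through $\sum g$, and since the $\lambda_k$-terms are linear in $v_k(g)$, the per-configuration summand $\Phi(g)$ inside the supremum satisfies $\Phi(g)\le\beta\,\Phi(g')+(1-\beta)\,\Phi(g'')$, so $\max\{\Phi(g'),\Phi(g'')\}\ge\Phi(g)$. Because one endpoint of each replaced pair lands in $\{0,1,2,4\}$ and is thus removed or made non-flexible, both $g'$ and $g''$ have strictly fewer flexible elements than $g$; iterating at most $|g|\le n$ times, we may disregard $g$ in favour of a size-configuration with at most one flexible element, which proves the lemma.

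The step needing the most care is the boundary bookkeeping: checking that $\epsilon,\epsilon'>0$ (so the convexity of $x\mapsto x^2$ is strict), that the forced endpoint in $\{0,1,2,4\}$ genuinely lowers the flexible count, and that the degenerate subcases — $b$ and $b'$ in the same type-interval, so $g'=g''$; one of $b,b'$ being the minimum of $g$, so that $l$ may fall to $0$ and get deleted — are all covered; one should also note, exactly as for \eqref{equ:Langragian-1a}, that this reduction can afterwards be composed with Lemma~\ref{lemma:restrict-f} when enumerating the finitely many configurations that remain. All of this is routine, but it is precisely the part that the parallel argument for \eqref{equ:Langragian-1a} handled tersely.
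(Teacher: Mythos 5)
Your proof is correct and follows exactly the route the paper intends: it is the averaging/convex-combination argument from the analogous lemma for \eqref{equ:Langragian-1a}, carried over with the breakpoint set enlarged to $\{0,1,2,4\}$, which is precisely what the paper means by ``we can again prove the following lemma.'' Your explicit construction of $l,r,l',r'$ via $\epsilon,\epsilon'$ and the verification of strict convexity and linearity of the $v_k$- and $\mu$-terms fill in the details the paper leaves implicit, so nothing further is needed.
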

	
	\begin{figure}
		\centering
		\includegraphics[width=0.6\textwidth]{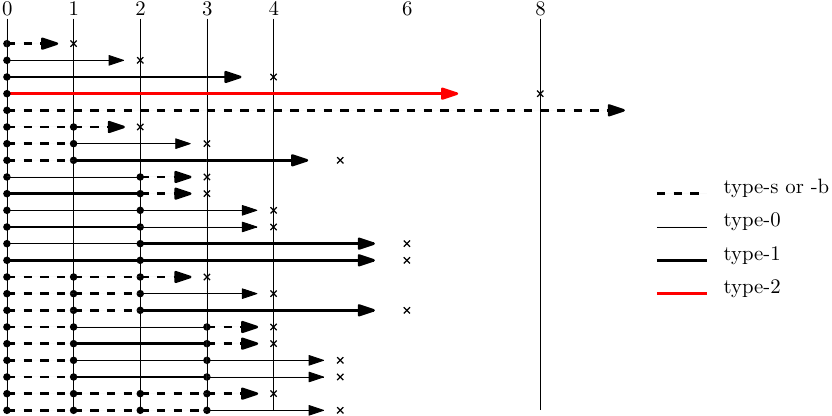}
		\caption{The 23 cases in Lemma~\ref{lemma:sub-program3-cases}.  The legends are the same as those in Figure~\ref{fig:configuration-1}, except now we have type-0 elements, denoted by thin solid line segments. }
	\end{figure}
	
	\begin{lemma}
		\label{lemma:sub-program3-cases}
		To compute in \eqref{equ:Langragian-2a}, it suffices to consider the one of the following  23 cases for $g$:
		\begin{enumerate}[label=\arabic*)]
			\item $g = \{a_1\}$. $a_1$ is in $(0, 1]$ of type-s,  in $[1, 2]$ of type-0, in $[2,4]$ of type-1, in $[4, 8]$ of type-2,  or in $[8, \infty)$ of type-b. There are 5 cases here. 
			\item $g = \{a_1, a_2\}$. $a_1$ is 1 of type-s, 2 of type-0, or 2 of type-1. $a_2$ is in $(0, 1]$ of type-s, in $[1, 2]$ of type-0, or in $[2, 4]$ of type-1. There are 9 cases here. 
			\item $g = \{a_1, a_2, a_3\}$. $a_1 = 1$ is of type-s. $a_2$ is $1$ of type-s, $2$ of type-0, or $2$ of type-1. $a_3$ is in $(0, 1]$ of type-s, or $[1, 2]$ of type-0.  There are 6 cases here.
			\item $g= \{a_1, a_2, a_3\}$. $a_1 = a_2 = 1$ of type-s, $a_3 \in [2, 4]$ of type-1.
			\item $g =\{a_1, a_2, a_3, a_4\}$. $a_1 = a_2 = a_3 = 1$ is of type-s, $a_2$ is in $(0, 1]$ of type-s, or in $[1, 2]$ of type-0. There are 2 cases here. 
		\end{enumerate}
	\end{lemma}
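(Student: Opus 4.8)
The plan is to prove Lemma~\ref{lemma:sub-program3-cases} by an elementary, if somewhat lengthy, enumeration, in exactly the spirit of the proof of Lemma~\ref{lemma:sub-program2-cases} but with the extra type-0. I would build the argument on the two structural reductions already available for \eqref{equ:Langragian-2a}: the preceding lemma lets us assume that $g$ has \emph{at most one flexible element} (one outside $\{1,2,4,8\}$), and Lemma~\ref{lemma:restrict-f} lets us assume that the sum of the elements of $g$ other than a smallest one is strictly less than $L$, which is $<4$.

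First I would pin down the treatment of the boundary values. Since $\lambda_0,\lambda_2\ge 0$, reassigning an element equal to $1$ from type-0 to type-s, and an element equal to $8$ from type-2 to type-b, only removes its contribution to $v_0$, resp.\ $v_2$, while leaving $\sum_{a\in g}a^2$, $\sum g$ and the other $v_k$'s untouched; hence it can only increase the supremand in \eqref{equ:Langragian-2a}, and we may assume $1\in g$ is of type-s and $8\in g$ is of type-b. By contrast an element equal to $2$ or to $4$ sits on the common boundary of two consecutive type-intervals, and which assignment is better depends on the relative sizes of the multipliers, so for these values \emph{both} type options must be retained — this is why ``$2$ of type-0'' and ``$2$ of type-1'' both occur in the statement.

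Next I would use the size reduction to bound $|g|$. If $g$ contains an element $a\ge 4$ that is not a smallest element, then the sum of the elements of $g$ other than a smallest one is already at least $a\ge 4$, a contradiction; hence any such $a$ forces $g=\{a\}$, which disposes of every configuration containing a type-2 or a type-b element. Listing the trivial singletons of the remaining types as well gives the five cases of item~1. When $|g|\ge 2$ every element is $<4$, hence of type-s, type-0, or type-1. After deleting one smallest element, every remaining element of $g$ is $\ge 1$ — indeed the only element of $g$ that can be $<1$ is a smallest one, since $(0,1)$ meets none of $\{1,2,4,8\}$ — so these $|g|-1$ elements, having total $<4$, number at most $3$; thus $|g|\le 4$.

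Finally I would run the bounded enumeration: split on $|g|\in\{2,3,4\}$, and within each on how many elements equal $1$, how many equal $2$, and whether the unique flexible element (if any) is a smallest element, using the \emph{strict} bound $<4$ on the sum-minus-a-smallest to discard combinations such as $\{2,2,\cdot\}$, $\{1,2,2\}$, or $\{1,1,1,1,\cdot\}$, and the ``at most one flexible'' bound to discard ones like $\{1,1.5,1.5\}$. This yields the nine pairs of item~2, the seven triples of items~3 and~4, and the two quadruples of item~5, for a total of $5+9+7+2=23$. I do not expect a genuine mathematical obstacle here; the points demanding care are \emph{completeness} of the case split — verifying that every multiset permitted by the two reductions is subsumed by one of the listed patterns, and conversely that each listed pattern is consistent with them — together with the disciplined handling of the boundary values $2$ and $4$, each of which must be examined under both of its admissible types.
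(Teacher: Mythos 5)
Your proposal follows essentially the same route as the paper's own (very terse) proof: invoke the two reductions (at most one flexible element, and sum-minus-smallest less than $L<4$), fix the boundary values ($1$ as type-s, with $2$ retained under both admissible types), and then exhaustively enumerate, which correctly bounds $|g|\le 4$ and yields the $5+9+7+2=23$ cases. The argument is sound; the paper simply states the enumeration as immediate, whereas you spell out the intermediate bounds.
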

	\begin{proof}
		Again, we can require $g$ to satisfy the following two properties. The sum of elements in $g$ excluding the smallest one is less than $L < 4$.  There are at most 1 element in $g$ outside $\{1, 2, 4,8\}$. If $a = 1 \in g$, then $a$ is of type-s. Exhaustive enumeration gives the 23 cases.  
	\end{proof}
	
	\subsection{Parameters for sub-programs \eqref{equ:sub-program-2} and \eqref{equ:sub-program-3}} We break our interval $[2, 4)$ for $L$ into 10 sub-intervals, the $i$-th interval being $[2^{1+(o-1)/10}, 2^{1+o/10})$.  For each interval of $L$, we give the parameters $\mu, \lambda_1, \lambda_2 \in \R_{\geq 0}$ for \eqref{equ:Langragian-1a} and $\mu, \lambda_0, \lambda_1, \lambda_2 \in \R_{\geq 0}$ for \eqref{equ:Langragian-2a}, and the overall $\alpha_L$ in Table~\ref{table:parameters}.  Notice that $L$ has an equal probability of falling into the 10 intervals. $\E_\beta [\alpha_L]$ is the average of the $\alpha_L$ values over the 10 intervals, which is less than $1.358 < 1.36$. 
	
	\begin{table}[h]
		\centering
		\begin{tabular}{||c||c|c|c|c||c|c|c|c||c||}
			\hline
			\multirow{2}{*}{$L$} & \multirow{2}{*}{$\frac32-\frac2{L^2}$} &  \multicolumn{3}{c||}{Parameters for \eqref{equ:Langragian-1a}} & \multicolumn{4}{c||}{Parameters for \eqref{equ:Langragian-2a}} & \multirow{2}{*}{$\alpha_L$} \\\cline{3-9}
			 & & $\mu$ & $\lambda_1$ & $\lambda_2$ & $\mu$ & $\lambda_0$ & $\lambda_1$ & $\lambda_2$ & \\\hline
			$[2^1, 2^{1.1})$ & 1.0648 & 2.060000&0.08&0.00& 2.072000&0.3145&0.3364&0.0828&1.376228\\\hline
			$[2^{1.1}, 2^{1.2})$ & 1.1211  &2.293595&0.16&0.04&2.220500&0.3154&0.3642&0.1604&1.370445\\\hline
			$[2^{1.2}, 2^{1.3})$ & 1.1702 & 2.458214&0.22&0.16&2.508757&0.3178&0.4442&0.3200&1.364426\\\hline
			$[2^{1.3}, 2^{1.4})$ & 1.2129 &2.634649&0.32&0.28&2.720583&0.3220&0.5288&0.4532& 1.356049\\\hline
			$[2^{1.4}, 2^{1.5})$ & 1.2500 &2.823747&0.42&0.40&2.874152&0.3255&0.5942&0.5472&1.349022 \\\hline
			$[2^{1.5}, 2^{1.6})$ & 1.2824 &2.998133&0.48&0.48&2.961363&0.3279&0.6310&0.5984& 1.344238\\\hline
			$[2^{1.6}, 2^{1.7})$ & 1.3106 &3.213319&0.56&0.56&3.150265&0.3292&0.6886&0.6752&1.341530 \\\hline
			$[2^{1.7}, 2^{1.8})$ & 1.3351 &3.346480&0.60&0.60&3.343231&0.3296&0.7362&0.7336&1.340912\\\hline
			$[2^{1.8}, 2^{1.9})$ & 1.356413 &3.412558&0.60&0.60&3.404549&0.3273&0.7398&0.7380&1.356413\\\hline
			$[2^{1.9}, 2^2)$ & 1.3750 & 3.470883&0.60&0.60&3.507084&0.3009&0.7328&0.7324&1.375000\\\hline
		\end{tabular}
		\caption{The parameters for the three sub-programs \eqref{equ:sub-program-1}, \eqref{equ:sub-program-2} and \eqref{equ:sub-program-3}, for each interval of $L$. The first column contains the  intervals for $L$. The second column gives the $\frac32 - \frac{2}{L^2}$ for the right end point $L$ of each interval, which is the minimum $\alpha_L$ needed for sub-program \eqref{equ:sub-program-1}.  The next 3 columns give the parameters $\mu, \lambda_1, \lambda_2$ for \eqref{equ:Langragian-1a}. The next 4 columns give the parameters $\mu, \lambda_1, \lambda_2, \lambda_3$ for \eqref{equ:Langragian-2a}. The final column gives the $\alpha_L$ value for each interval of $L$. The average of the 10 values of $\alpha_L$ is less than $1.36$.} \label{table:parameters}
	\end{table}

	We then describe how to check if the parameters are valid using a computer program. Focus on each interval $[2^{1+(o-1)/10}, 2^{1+o/10})$ for $L$. For sub-program \eqref{equ:sub-program-1}, we can simply check if $\alpha_L\geq \frac32 - \frac 2{L^2}$ for $L=2^{1+o/10}$; the values of $\frac32 - \frac 2{L^2}$ for $L=2^{1+o/10}$ are given in the table.  For sub-program \eqref{equ:sub-program-2}, we consider \eqref{equ:Langragian-1a}, for the given $\mu, \lambda_1, \lambda_2, \alpha = \alpha_L$ and each case of $g$ in Lemma~\ref{lemma:sub-program2-cases}.  For a fixed $g$ in the case, the objective is a quadratic function of $L$ with the quadratic term being $\frac{L^2}{2}$. Therefore, the worst $L$ is one of the two end points $2^{1+(o-1)/10}$ and $2^{1+o/10}$ of the interval.  So, we only need to check the two values for $L$. Fixing $L$ and letting $a$ be only flexible element in $g$, the quantity in \eqref{equ:Langragian-1a} is a quadratic function $a$.  Therefore, we can easily find the maximum of the objective over all valid $a$'s for the case.  Similarly, we can check the validity of the parameters for \eqref{equ:sub-program-3} using \eqref{equ:Langragian-2a}.    Using our checker program with the given parameters, we can certify that the  $\alpha_L$ values are feasible for \eqref{inequ:individual-alpha-L}.  The source code for the checker can be found at  
	\begin{itemize}
		\item \url{https://github.com/shili1986/WeightedCompletionTime.git}. 
	\end{itemize} \medskip

	We also describe briefly how we search for the parameters $\mu, \lambda_1, \lambda_2$ for \eqref{equ:Langragian-1a} and $\mu, \lambda_0, \lambda_1, \lambda_2$ for \eqref{equ:Langragian-2a}, though this is not needed to for the proof of our approximation ratio.  We only describe our parameter searching algorithm for \eqref{equ:Langragian-1a}; the procedure for \eqref{equ:Langragian-2a} uses similar ideas.  We enumerate $\mu, \lambda_1$ and $\lambda_2$, each parameter having a lower bound, an upper bound and a step size. That means, we search for the optimum vector $(\mu, \lambda_1, \lambda_2)$ within a box using some step-sizes. For a fixed $\mu, \lambda_1$ and $\lambda_2$, the best $\alpha_L$ can be determined using binary search up to any precision.    Then, once we find the optimum $(\mu, \lambda_1, \lambda_2)$, we run a fine-grained searching procedure in a smaller box around the $(\mu, \lambda_1, \lambda_2)$ we found, with smaller step sizes for the three parameters. Once we find the optimum $(\mu', \lambda'_1, \lambda'_2)$, we run a third round of even more fine-grained search procedure around it, which will give our final parameters for $\mu, \lambda_1$ and $\lambda_2$.  We also tried different values of $\rho$, but it seems that the gain is small. Thus we decided to use the simple choice of $\rho = 2$.

\section{Discussions}  \label{sec:discuss}
In this paper, we developed a $1.36$-approximation algorithm for the unrelated machine weighted completion time scheduling problem, using an iterative rounding procedure.  Unlike previous algorithms, which impose the non-positive correlation requirement between any two edges $ij$ and $ij'$, we only require non-positive correlation to hold between an unmarked edge $e \in \delta^{k, \umk}_i$ and the set $\delta^{k, \mk}_i$ of marked edges as a whole. With this relaxation, our algorithm chooses at most one edge from each group $\delta^{k, \mk}_i$, leading to the upper bound in  \eqref{inequ:int-cost-on-i}.  

An immediate open problem is to provide a tight upper bound on the ratio between the bounds in \eqref{inequ:int-cost-on-i} and \eqref{equ:LP-cost-for-i}, for a fixed $i$ and $j^*$.  Our analysis falls short of achieving this goal for two reasons. Firstly, we only consider three job classes $J_k$, as we have to guess whether $V_k \geq 2^k$ or not for every $k$ in program~\eqref{equ:program}. Increasing the number of job classes would require us to consider more sub-programs.  Secondly, we did not exploit the property that the $z$-vector is the same for every random value $\beta \in [1, \rho = 2)$. In our analysis, we define the parameter $L \in [2, 4)$, and compute the worst ratio for every $L$. However, the worst case scenarios for different $L$ values may be inconsistent. Exploring the connections between different $L$ values would lead to a tighter analysis.  Additionally, it would be interesting to give a clean analytical proof of the ratio, without significantly degrading the approximation performance. 



\bibliographystyle{plain}
\bibliography{reflist}

\end{document}